

\documentclass[preprint,11pt]{article}
\topmargin -0.4in  \headsep 0.4in  \textheight 9.0in
\oddsidemargin 0in  \evensidemargin 0in  \textwidth 6.5in

\usepackage{amsfonts}
\usepackage{color}
\usepackage{graphicx}
\usepackage[dvips]{epsfig}
\usepackage{graphics} 
\usepackage{times} 
\usepackage[cmex10]{amsmath} 
\usepackage{cite}
\usepackage{multirow}
\usepackage[tight,footnotesize]{subfigure}
\usepackage{algorithm}
\usepackage{amsthm}

\def\norm #1{\left\|#1\right\|}

\def\twon #1{\left\|#1\right\|_2}
\def\onen #1{\left\|#1\right\|_1}
\def\zeron #1{\left\|#1\right\|_0}
\def\frobn #1{\left\|#1\right\|_{\text{F}}}
\def\twonen #1{\left\|#1\right\|_{2,1}}

\def\atomn #1{\left\|#1\right\|_{\cA}}

\def\vect #1{\text{vec}#1}
\def\cov #1{\text{Cov}#1}
\def\abs #1{\left|#1\right|}
\def\inp #1{\left\langle#1\right\rangle}

\def\st{\text{subject to }}

\def\bC{\mathbb{C}}

\def\bR{\mathbb{R}}

\def\bT{\mathbb{T}}
\def\bE{\mathbb{E}}

\def\m #1{\boldsymbol{#1}}

\def\cA{\mathcal{A}}

\def\cD{\mathcal{D}}

\def\cK{\mathcal{K}}
\def\cL{\mathcal{L}}
\def\cM{\mathcal{M}}

\def\cT{\mathcal{T}}
\def\cQ{\mathcal{Q}}

\def\cS{\mathcal{S}}
\def\cT{\mathcal{T}}

\def\bee{\begin{equation}}
\def\ene{\end{equation}}

\def\beq{\begin{eqnarray}}
\def\enq{\end{eqnarray}}
\def\lentwo{\setlength\arraycolsep{2pt}}

\newtheorem{rem}{Remark}[section]
\newtheorem{cor}{Corollary}[section]
\newtheorem{thm}{Theorem}[section]

\newtheorem{defi}{Definition}[section]

\def\equ #1{\begin{equation}#1\end{equation}}
\def\equa #1{\begin{eqnarray}#1\end{eqnarray}}
\def\sbra #1{\left(#1\right)}
\def\mbra #1{\left[#1\right]}
\def\lbra #1{\left\{#1\right\}}
\def\diag #1{\text{diag}#1}
\def\tr #1{\text{Tr}#1}

\def\rank #1{\text{rank}#1}
\def\spark #1{\text{spark}#1}
\def\st {\text{ subject to }}


\title{Sparse Methods for Direction-of-Arrival Estimation}
\author{Zai Yang\thanks{School of Automation, Nanjing University of Science and Technology, Nanjing 210094, China} \thanks{School of Electrical and Electronic Engineering, Nanyang Technological University, Singapore 639798} ,\; Jian Li\thanks{Department of Electrical and Computer Engineering, University of Florida, Gainesville, FL 32611, USA} ,\; Petre Stoica\thanks{Department of Information Technology, Uppsala University, Uppsala, SE 75105, Sweden} ,\; and Lihua Xie$^\dag$}




\begin{document}
\maketitle

\tableofcontents
\newpage


\section{Introduction}

Direction-of-arrival (DOA) estimation refers to the process of retrieving the direction information of several electromagnetic waves/sources from the outputs of a number of receiving antennas that form a sensor array. DOA estimation is a major problem in array signal processing and has wide applications in radar, sonar, wireless communications, etc.

The study of DOA estimation methods has a long history. For example, the conventional (Bartlett) beamformer, which dates back to the World War II, simply uses Fourier-based spectral analysis of the spatially sampled data. Capon's beamformer was later proposed to improve the estimation performance of closely spaced sources \cite{capon1969high}. Since the 1970s when Pisarenko found that the DOAs can be retrieved from data second order statistics \cite{pisarenko1973retrieval}, a prominent class of methods designated as subspace-based methods have been developed, e.g., the multiple signal classification
(MUSIC) and the estimation of parameters by rotational invariant techniques (ESPRIT) along with their variants \cite{schmidt1981signal,schmidt1986multiple,paulraj1986subspace,roy1989esprit, barabell1983improving}. Another common approach is the nonlinear least squares (NLS) method that is also known as the (deterministic) maximum likelihood estimation. For a complete review of these methods, readers are referred to \cite{krim1996two,stoica2005spectral, zoubir2014academic}. Note that these methods suffer from certain well-known limitations. For example, the subspace-based methods and the NLS need {\em a priori} knowledge on the source number that may be difficult to obtain. Additionally, Capon's beamformer, MUSIC and ESPRIT are covariance-based and require a sufficient number of data snapshots to accurately estimate the data covariance matrix. Moreover, they can be sensitive to source correlations that tend to cause a rank deficiency in the sample data covariance matrix. Also, a very accurate initialization is required for the NLS since its objective function has a complicated multimodal shape with a sharp global minimum.

The purpose of this article is to provide an overview of the recent work on sparse DOA estimation methods. These new methods are motivated by techniques in sparse representation and compressed sensing methodology \cite{chen2001atomic,donoho2003optimally, candes2006robust,donoho2006compressed,baraniuk2007compressive}, and most of them have been proposed during the last decade. The sparse estimation (or optimization) methods can be applied in several demanding scenarios, including cases with no knowledge of the source number, limited number of snapshots (even a single snapshot), and highly or completely correlated sources. Due to these attractive properties they have been extensively studied and their popularity is reflected by the large number of publications about them.

It is important to note that there is a key difference between the common sparse representation framework and DOA estimation. To be specific, the studies of sparse representation have been focused on {\em discrete linear} systems. In contrast to this, the DOA parameters are {\em continuous} valued and the observed data are {\em nonlinear} in the DOAs. Depending on the model adopted, we can classify the sparse methods for DOA estimation into three categories, namely, {\em on-grid}, {\em off-grid} and {\em gridless}, which also corresponds to the chronological order in which they have been developed. For on-grid sparse methods, the data model is obtained by assuming that the true DOAs lie on a set of fixed grid points in order to straightforwardly apply the existing sparse representation techniques. While a grid is still required by off-grid sparse methods, the DOAs are not restricted to be on the grid. Finally, the recent gridless sparse methods do not need a grid, as their name suggests, and they operate directly in the continuous domain.


The organization of this article is as follows. The data model for DOA estimation is introduced in Section \ref{sec:model} for far-field, narrowband sources that are the focus of this article. Its dependence on the array geometry and the parameter identifiability problem are discussed. In Section \ref{sec:SSR} the concepts of sparse representation and compressed sensing are introduced and several sparse estimation techniques are discussed. Moreover, we discuss the feasibility of using sparse representation techniques for DOA estimation and highlight the key differences between sparse representation and DOA estimation. The on-grid sparse methods for DOA estimation are introduced in Section \ref{sec:ongrid}. Since they are straightforward to obtain in the case of a single data snapshot, we focus on showing how the temporal redundancy of multiple snapshots can be utilized to improve the DOA estimation performance. Then, the off-grid sparse methods are presented in Section \ref{sec:offgrid}. Section \ref{sec:gridless} is the main highlight of this article in which the recently developed gridless sparse methods are presented. These methods are of great interest since they operate directly in the continuous domain and have strong theoretical guarantees. Some future research directions will be discussed in Section \ref{sec:futurework} and conclusions will be drawn in Section \ref{sec:conclusion}.


Notations used in this article are as follows. $\bR$ and $\bC$ denote the sets of real and complex numbers respectively. Boldface letters are reserved for vectors and matrices. $\abs{\cdot}$ denotes the amplitude of a scalar or the cardinality of a set. $\onen{\cdot}$, $\twon{\cdot}$ and $\frobn{\cdot}$ denote the
$\ell_1$, $\ell_2$ and Frobenius norms respectively. $\m{A}^T$, $\m{A}^*$ and $\m{A}^H$ are the matrix transpose, complex conjugate and conjugate transpose of $\m{A}$ respectively. $x_j$ is the $j$th entry of a vector $\m{x}$, and $\m{A}_j$ denotes the $j$th row of a matrix $\m{A}$. Unless otherwise stated, $\m{x}_{\Omega}$ and $\m{A}_{\Omega}$ are the subvector and submatrix of $\m{x}$ and $\m{A}$ obtained by retaining the entries of $\m{x}$ and the rows of $\m{A}$ indexed by the set $\Omega$. For a vector $\m{x}$, $\diag\sbra{\m{x}}$ is a diagonal matrix with $\m{x}$ on the diagonal. $\m{x}\succeq\m{0}$ means $x_j\geq0$ for all $j$. $\rank\sbra{\m{A}}$ denotes the rank of a matrix $\m{A}$ and $\tr\sbra{\m{A}}$ denotes the trace. For positive semidefinite matrices $\m{A}$ and $\m{B}$, $\m{A}\geq\m{B}$ means that $\m{A}-\m{B}$ is positive semidefinite. Finally, $\bE\mbra{\cdot}$ denotes the expectation of a random variable, and for notational simplicity a random variable and its numerical value will not be distinguished.

\section{Data Model} \label{sec:model}

\subsection{Data Model}
In this section, the DOA estimation problem is stated. Consider $K$ narrowband far-field source signals $s_k$, $k=1,\dots,K$, impinging on an array of omnidirectional sensors from directions $\theta_k$, $k=1,\dots,K$. According to \cite{krim1996two,stoica2005spectral,chung2014doa}, the time delays at different sensors can be represented by simple phase shifts, resulting in the following data model:
\equ{\m{y}(t) = \sum_{k=1}^K\m{a}\sbra{\theta_k}s_k(t) +\m{e}(t)= \m{A}(\m{\theta})\m{s}(t)+\m{e}(t),\quad t=1,\dots,L,\label{formu:observation_model1}}
where $t$ indexes the snapshot and $L$ is the number of snapshots, $\m{y}(t)\in\bC^M$, $\m{s}(t)\in\bC^K$ and $\m{e}(t)\in\bC^M$ denote the array output, the vector of source signals and the vector of measurement noise at snapshot $t$, respectively, where $M$ is the number of sensors. $\m{a}\sbra{\theta_k}$ is the so-called steering vector of the $k$th source that is determined by the geometry of the sensor array and will be given later. The steering vectors compose the array manifold matrix $\m{A}(\m{\theta})= \mbra{\m{a}\sbra{\theta_1},\dots,\m{a}\sbra{\theta_K}}$. More compactly, (\ref{formu:observation_model1}) can be written as
\equ{\m{Y} = \m{A}(\m{\theta})\m{S}+\m{E},\label{formu:observation_model}}
where $\m{Y}=\mbra{\m{y}(1),\dots,\m{y}(L)}$, and $\m{S}$ and $\m{E}$ are similarly defined. Given the data matrix $\m{Y}$ and the mapping $\theta\rightarrow \m{a}(\theta)$, the objective is to estimate the parameters $\theta_k$, $k=1,\dots,K$ that are referred to as the DOAs. It is worth noting that the source number $K$ is usually unknown in practice; typically, $K$ is assumed to be smaller than $M$, as otherwise the DOAs cannot be uniquely identified from the data (see details in Subsection \ref{sec:ident}).

\subsection{The Role of Array Geometry} \label{sec:model_geometry}

\begin{figure}
\centering
  \includegraphics[width=3.5in]{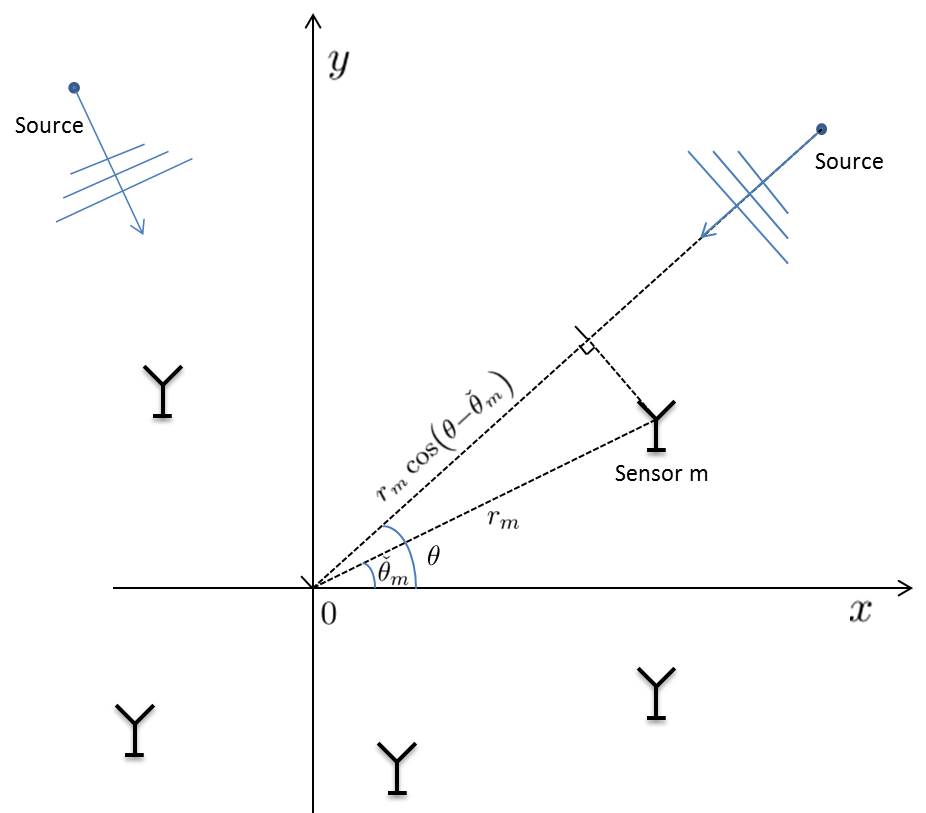}
\centering
\caption{The setup of the DOA estimation problem with a general 2-D array.} \label{Fig:model}
\end{figure}

We now discuss how the mapping $\theta\rightarrow \m{a}(\theta)$ is determined by the array geometry. We first consider a general 2-D array with the $M$ sensors located at points $\sbra{r_m, \check{\theta}_m}$, expressed in polar coordinates. For convenience, the unit of distance is taken as half the wavelength of the waves. Then $\m{a}(\theta)$ will be given by
\equ{a_m(\theta) = e^{i\pi r_m \cos\sbra{\theta-\check{\theta}_m} }, \quad m=1,\dots,M.}

In the particularly interesting case of a linear array, assuming that the sensors are located on the nonnegative $x$-axis, we have that $\check{\theta}_m = 0^\circ$, $m=1,\dots,M$. Therefore, $\m{a}(\theta)$ will be given by
\equ{a_m(\theta) = e^{i\pi r_m \cos\theta}, \quad m=1,\dots,M.}
We can replace the variable $\theta$ by $f = \frac{1}{2}\cos\theta$ and define without ambiguity $\m{a}(f)=\m{a}(\arccos\sbra{2f})=\m{a}(\theta)$. Then, the mapping $\m{a}(f)$ is given by
\equ{a_m(f) = e^{i2\pi r_m f}, \quad m=1,\dots,M.}
In the case of a single snapshot, obviously, the {\em spatial} DOA estimation problem becomes a {\em temporal} frequency estimation problem (a.k.a. line spectral estimation) given the samples $y_m$, $m=1,\dots,M$ measured at time instants $r_m$, $m=1,\dots,M$.

\begin{figure}
\centering
  \includegraphics[width=4in]{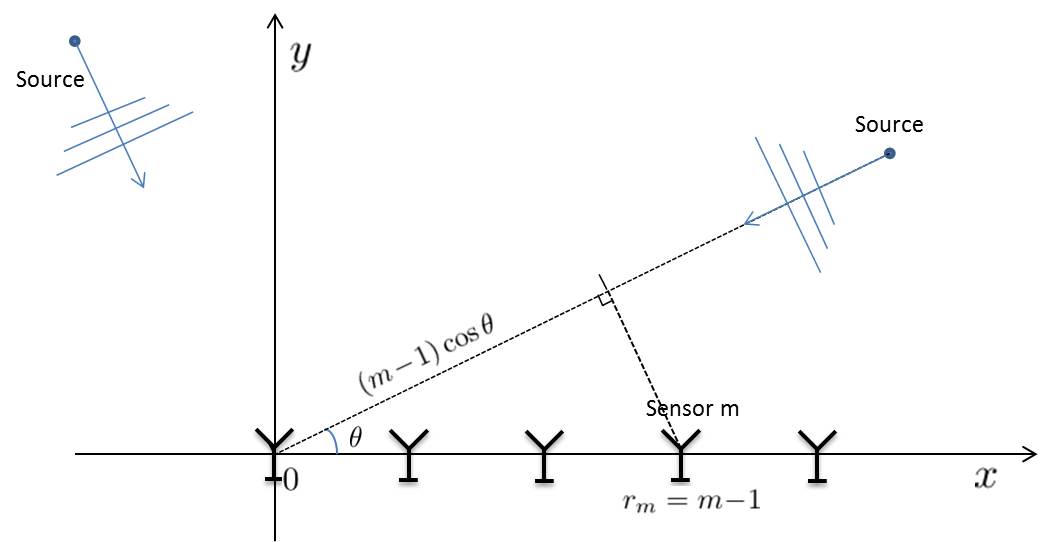}
\centering
\caption{The setup of the DOA estimation problem with a ULA.} \label{Fig:model}
\end{figure}

If we further assume that the sensors of the linear array are equally spaced, then the array is known as a uniform linear array (ULA). We consider the case when two adjacent antennas of the array are spaced by a unit distance (half the wavelength). Then, we have that $r_m = m-1$ and
\equ{\m{a}(f) = \mbra{1, e^{i2\pi f},\dots, e^{i2\pi (M-1) f}}^T.}
If a linear array is obtained from a ULA by retaining only a part of the sensors, then it is known as a sparse linear array (SLA).

It is worth noting that for a 2-D array, it is possible to estimate the DOAs in the entire $360^\circ$ range, while for a linear array we can only resolve the DOAs in a $180^\circ$ range: $\theta_k\in\left[0^{\circ},180^{\circ}\right)$. In the latter case, correspondingly, the ``frequencies'' are: $f_k=\frac{1}{2}\cos\theta_k \in\left(-\frac{1}{2},\frac{1}{2}\right]$. Throughout this article, we let $\cD_{\theta}$ denote the domain of the DOAs that can be $\left[0^{\circ},360^{\circ}\right)$ or $\left[0^{\circ},180^{\circ}\right)$, depending on the context. Also, let $\bT=\left(-\frac{1}{2},\frac{1}{2}\right]$ be the frequency interval for linear arrays.
Finally, we close this section by noting that the grid-based (on-grid and off-grid) sparse methods can be applied to arbitrary sensor arrays, while the existing gridless sparse methods are typically limited to ULAs or SLAs.

\subsection{Parameter Identifiability} \label{sec:ident}
The DOAs $\lbra{\theta_k}_{k=1}^K$ can be uniquely identified from $\m{Y}=\m{A}\sbra{\m{\theta}}\m{S}$ if there do not exist $\lbra{\theta_k'}_{k=1}^K \neq \lbra{\theta_k}_{k=1}^K$ and $\m{S}'$ such that $\m{Y}=\m{A}\sbra{\m{\theta}}\m{S} = \m{A}\sbra{\m{\theta}'}\m{S}'$. Guaranteeing that the parameters can be uniquely identified in the noiseless case is usually a prerequisite for their accurate estimation. The parameter identifiability problem for DOA estimation was studied in \cite{bresler1986number} for ULAs and in \cite{wax1989unique,nehorai1991direction} for general arrays. The results in \cite{chen2006theoretical,davies2012rank,yang2016exact} are also closely related to this problem. For a general array, define the set
\equ{\cA_{\theta} = \lbra{\m{a}\sbra{\theta}: \; \theta\in\cD_{\theta}}, \label{eq:cAtheta}}
and define the spark of $\cA_{\theta}$, denoted by $\text{spark}\sbra{\cA_{\theta}}$, as
the smallest number of elements in $\cA_{\theta}$ that are linearly dependent \cite{kruskal1977three}. For any $M$-element array, it holds that
\equ{\text{spark}\sbra{\cA_{\theta}}\leq M+1. \label{eq:sparkAtheta}}
Note that it is generally difficult to compute $\spark\sbra{\cA_{\theta}}$, except in the ULA case in which $\spark\sbra{\cA_{\theta}}=M+1$ by the fact that any $M$ steering vectors in $\cA_{\theta}$ are linearly independent.

The paper \cite{wax1989unique} showed that any $K$ sources can be uniquely identified from $\m{Y}=\m{A}\sbra{\m{\theta}}\m{S}$ provided that
\equ{K < \frac{\text{spark}\sbra{\cA_{\theta}} - 1 + \rank\sbra{\m{S}}}{2}. \label{eq:cond1}}
Note that the above condition cannot be easily used in practice since it requires knowledge on $\m{S}$. To resolve this problem, it was shown in \cite{davies2012rank} that the condition in \eqref{eq:cond1} is equivalent to
\equ{K < \frac{\text{spark}\sbra{\cA_{\theta}} - 1 + \rank\sbra{\m{Y}}}{2}. \label{eq:cond2}}
Moreover, the condition in \eqref{eq:cond1} or \eqref{eq:cond2} is also necessary \cite{davies2012rank}. Combining these results, we have the following theorem.

\begin{thm} Any $K$ sources can be uniquely identified from $\m{Y}=\m{A}\sbra{\m{\theta}}\m{S}$ if and only if the condition in \eqref{eq:cond2} holds. \label{thm:uniformident}
\end{thm}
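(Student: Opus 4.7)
The plan is to decompose the biconditional into three parts following the cited literature: (a) \eqref{eq:cond1} is sufficient for unique identifiability \cite{wax1989unique}; (b) \eqref{eq:cond1} and \eqref{eq:cond2} are equivalent \cite{davies2012rank}; and (c) \eqref{eq:cond2} is necessary \cite{davies2012rank}. Part (b) is what lets the sufficient/necessary characterizations, naturally phrased in terms of $\rank\sbra{\m{S}}$, be restated in terms of the observable quantity $\rank\sbra{\m{Y}}$.

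For (a), I argue by contradiction. Suppose $\m{A}(\m{\theta})\m{S}=\m{A}(\m{\theta}')\m{S}'$ with $\m{\theta}'\neq\m{\theta}$ and $|\m{\theta}'|\leq K$, and form $\m{W}=\mbra{\m{A}(\m{\theta}),\m{A}(\m{\theta}')}$ and $\m{T}=\begin{bmatrix}\m{S}\\-\m{S}'\end{bmatrix}$, so that $\m{W}\m{T}=\m{0}$. Let $p$ denote the number of distinct columns of $\m{W}$ and $q=|\m{\theta}|+|\m{\theta}'|-p$ the number of directions common to $\m{\theta}$ and $\m{\theta}'$. Two rank bounds drive the argument. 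The spark property gives $\rank\sbra{\m{W}}\geq\min\sbra{p,\spark\sbra{\cA_\theta}-1}$. On the other hand, the columns of $\m{T}$ lie in $\ker\sbra{\m{W}}$ and $\m{S}$ occupies the top $K$ rows of $\m{T}$, so rank-nullity yields $\rank\sbra{\m{W}}\leq |\m{\theta}|+|\m{\theta}'|-\rank\sbra{\m{T}}\leq 2K-\rank\sbra{\m{S}}$. Condition \eqref{eq:cond1}, rewritten as $2K<\spark\sbra{\cA_\theta}-1+\rank\sbra{\m{S}}$, then rules out $\rank\sbra{\m{W}}\geq\spark\sbra{\cA_\theta}-1$, forcing $p\leq\spark\sbra{\cA_\theta}-1$. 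The $p$ distinct steering vectors are then linearly independent, so $\ker\sbra{\m{W}}$ is spanned by the $q$ standard-basis differences $\m{e}_j-\m{e}_{K+j'}$, one per common direction. Expanding each column of $\m{T}$ in this basis forces every row of $\m{S}$ indexed by $\m{\theta}\setminus\m{\theta}'$ to be identically zero; under the standing assumption that every source has a nonzero signal, this yields $\m{\theta}\setminus\m{\theta}'=\emptyset$, after which $|\m{\theta}'|\leq K=|\m{\theta}|$ gives $\m{\theta}=\m{\theta}'$, a contradiction.

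For (b), \eqref{eq:cond1} combined with $\rank\sbra{\m{S}}\leq K$ implies $K<\spark\sbra{\cA_\theta}-1$, so $\m{A}(\m{\theta})$ has full column rank and $\rank\sbra{\m{Y}}=\rank\sbra{\m{S}}$; \eqref{eq:cond1} and \eqref{eq:cond2} are therefore interchangeable. For (c), when $2K\geq\spark\sbra{\cA_\theta}-1+\rank\sbra{\m{Y}}$, I would exhibit a second decomposition by taking a minimal linearly dependent family $\lbra{\m{a}(\tilde\theta_i)}_{i=1}^{\spark\sbra{\cA_\theta}}$ with $\sum_i c_i\m{a}(\tilde\theta_i)=\m{0}$ and $c_i\neq 0$ for all $i$, and using this relation to substitute some $\tilde\theta_i$ for columns of $\m{A}(\m{\theta})$ while redistributing rows of $\m{S}$ to produce $\m{Y}=\m{A}(\m{\theta}')\m{S}'$ with $\m{\theta}'\neq\m{\theta}$ and $|\m{\theta}'|\leq K$.

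The main obstacle is this construction in (c): the replacement directions and the redistributed coefficients must jointly keep the number of active sources at most $K$ and preserve $\rank\sbra{\m{Y}}$ exactly, which is the combinatorial core of the necessity argument in \cite{davies2012rank}. By contrast, (a) and (b) reduce to direct spark-plus-rank calculations once the block structure of $\m{W}$ and $\m{T}$ is set up.
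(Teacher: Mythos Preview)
Your three-part decomposition into (a) sufficiency from \cite{wax1989unique}, (b) equivalence of \eqref{eq:cond1} and \eqref{eq:cond2} from \cite{davies2012rank}, and (c) necessity from \cite{davies2012rank} is exactly the structure the paper uses; in fact the paper does not supply its own proof at all but simply cites these three results and states the theorem as their combination. Your sketches of (a) and (b) are sound (the spark-plus-rank contradiction in (a) is the standard argument, and your observation in (b) that either condition forces $K<\spark\sbra{\cA_\theta}-1$ and hence $\rank\sbra{\m{Y}}=\rank\sbra{\m{S}}$ is correct), and you are right that the substantive work lies in the explicit construction for (c), which the paper also defers entirely to \cite{davies2012rank}.
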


Theorem \ref{thm:uniformident} provides a necessary and sufficient condition for unique identifiability of the parameters. In the single snapshot case, the condition in \eqref{eq:cond2} reduces to
\equ{K < \frac{\text{spark}\sbra{\cA_{\theta}}}{2}. \label{eq:cond2_1}}
Therefore, Theorem \ref{thm:uniformident} implies that more sources can be determined if more snapshots are collected, except in the trivial case when the source signals are identical up to scaling factors.
In the ULA case, the condition in \eqref{eq:cond2} can be simplified as
\equ{K< \frac{M+\rank \sbra{\m{Y}}}{2}. \label{Kbound_ULA}}
Using the inequality $\rank\sbra{\m{S}}\leq K$ and \eqref{eq:sparkAtheta}, the condition in \eqref{eq:cond1} or \eqref{eq:cond2} implies that
\equ{K < \text{spark}\sbra{\cA_{\theta}}-1 \leq M.}

Theorem \ref{thm:uniformident} specifies the condition required to guarantee unique identifiability for \emph{any} $K$ source signals. It was shown in \cite{wax1989unique} that if $\lbra{\theta_k}$ are fixed and $\m{S}$ is randomly drawn from some absolutely continuous distribution, then the $K$ sources can be uniquely identified with probability one, provided that
\equ{K < \frac{2\rank\sbra{\m{S}}}{2\rank\sbra{\m{S}}+1} \sbra{\text{spark}\sbra{\cA_{\theta}}-1}. \label{eq:cond3}}
Moreover, the following condition, which is slightly different from that in \eqref{eq:cond3}, is necessary:
\equ{K \leq \frac{2\rank\sbra{\m{S}}}{2\rank\sbra{\m{S}}+1} \sbra{\text{spark}\sbra{\cA_{\theta}}-1}. \label{eq:cond4}}
The condition in \eqref{eq:cond3} is weaker than that in \eqref{eq:cond1} or \eqref{eq:cond2}. As an example, in the single snapshot case, the upper bounds on $K$ in \eqref{eq:cond2} and \eqref{eq:cond3} are approximately $\frac{1}{2}\text{spark}\sbra{\cA_{\theta}}$ and $\frac{2}{3}\text{spark}\sbra{\cA_{\theta}}$, respectively. However, the paper \cite{nehorai1991direction} pointed out that the condition in \eqref{eq:cond3} has a relatively limited practical relevance in finite-SNR applications, since under \eqref{eq:cond3}, with a strictly positive probability, false DOA estimates far from the true DOAs may occur.

\section{Sparse Representation and DOA estimation} \label{sec:SSR}
In this section we will introduce the basics of sparse representation that has been an active research topic especially in the past decade. More importantly, we will discuss its connections to and the key differences from DOA estimation.

\subsection{Sparse Representation and Compressed Sensing}

\subsubsection{Problem Formulation}
We first introduce the topic of sparse representation and the closely related concept of compressed sensing that have found broad applications in image, audio and signal processing, communications, medical imaging, and computational biology, to name just a few (see, e.g., the various special issues in several journals \cite{baraniuk2008compressive,chartrand2010introduction, baraniuk2010applications,starck2011introduction}). Let $\m{y}\in\bC^M$ be the signal that we observe. We want to sparsely represent $\m{y}$ via the following model:
\equ{\m{y} = \m{A} \m{x} + \m{e}, \label{eq:sparserep}}
where $\m{A}\in\bC^{M\times \overline{N}}$ is a {\em given} matrix, with $M\ll \overline{N}$, that is referred as a dictionary and whose columns are called atoms, $\m{x}\in\bC^{\overline{N}}$ is a sparse coefficient vector (note that the notation $N$ is reserved for later use), and $\m{e}$ accounts for the representation error. By sparsity we mean that only a few entries, say $K\ll \overline{N}$, of $\m{x}$ are nonzero and the rest are zero. This together with \eqref{eq:sparserep} implies that $\m{y}$ can be well approximated by a linear combination of $K$ atoms in $\m{A}$. The underlying motivation for the sparse representation is that even though the observed data $\m{y}$ lies in a high-dimensional space, it can actually be well approximated in some lower-dimensional subspace ($K<M$). Given $\m{y}$ and $\m{A}$, the problem of sparse representation is to find the sparse vector $\m{x}$ subject to data consistency.

The concept of sparse representation was later extended within the framework of compressed sensing \cite{candes2006robust,donoho2006compressed,baraniuk2007compressive}. In compressed sensing, a sparse signal, represented by the sparse vector $\m{x}$, is recovered from undersampled linear measurements $\m{y}$, i.e., the system model \eqref{eq:sparserep} applies with $M \ll \overline{N}$. In this context, $\m{y}$ is referred to as the compressive data, $\m{A}$ is the sensing matrix, and $\m{e}$ denotes the measurement noise. Note that a data model similar to \eqref{eq:sparserep} applies if the signal of interest is sparse in some domain. Given $\m{y}$ and $\m{A}$, the problem of sparse signal recovery in compressed sensing is also to solve for the sparse vector $\m{x}$ subject to data consistency. With no rise for ambiguity we will not distinguish between the terminologies used for sparse representation and compressed sensing, as these two problems are very much alike.

To solve for the sparse signal, intuitively, we should find the sparsest solution. In the absence of noise, therefore, we should solve the following optimization problem:
\equ{\min_{\m{x}} \zeron{\m{x}} \st \m{y}=\m{A}\m{x}, \label{eq:ell0min}}
where $\zeron{\m{x}}= \#\lbra{j:\; x_j\neq 0}$ counts the nonzero entries of $\m{x}$ and is referred to as the $\ell_0$ (pseudo-)norm or the sparsity of $\m{x}$. View $\m{A}$ as the set of its column vectors, and define its spark, denoted by $\text{spark}\sbra{\m{A}}$ as in Subsection \ref{sec:ident}. It can be shown that the true sparse signal $\m{x}$ can be uniquely determined by \eqref{eq:ell0min} if $\m{x}$ has a sparsity of
\equ{K<\frac{\text{spark}\sbra{\m{A}}}{2}. \label{eq:Kcond_l0}}
To see this, suppose there exists $\m{x}'$ of sparsity $K'\leq K$ satisfying $\m{y}=\m{A}\m{x}'$ as well. Then it holds that $\m{A}\sbra{\m{x}-\m{x}'}=\m{0}$. Since $\m{x}-\m{x}'$ has a sparsity of at most $K+K'\leq 2K<\text{spark}\sbra{\m{A}}$, which holds following \eqref{eq:Kcond_l0}, it can be concluded that $\m{x}-\m{x}'=\m{0}$ and thus $\m{x}=\m{x'}$ since any $\text{spark}\sbra{\m{A}}-1$ columns of $\m{A}$ are linearly independent.
It is interesting to note that the condition in \eqref{eq:Kcond_l0} is very similar to that in \eqref{eq:cond2_1} required to guarantee identifiability for DOA estimation in the single snapshot case.

Unfortunately, the $\ell_0$ optimization problem in \eqref{eq:ell0min} is NP hard to solve. Therefore, more efficient approaches are needed. We note that many methods and algorithms have been proposed for sparse signal recovery, e.g., convex relaxation or $\ell_1$ optimization \cite{chen2001atomic,donoho2003optimally}, $\ell_q$, $0<q<1$ (pseudo-)norm optimization \cite{gorodnitsky1997sparse,rao1999affine,rao2003subset,foucart2009sparsest, chartrand2007nonconvex,chartrand2007exact,tan2011sparse}, greedy algorithms such as orthogonal matching pursuit (OMP), compressive sampling matching pursuit (CoSaMP) and subspace pursuit (SP) \cite{pati1993orthogonal,donoho2012sparse,tropp2007signal,needell2009cosamp, dai2009subspace,davenport2010analysis}, iterative hard thresholding (IHT) \cite{blumensath2009iterative}, maximum likelihood estimation (MLE), etc. Readers can consult \cite{tropp2010computational} for a review. Here we introduce convex relaxation, $\ell_q$ optimization and MLE in the ensuing subsections.

\subsubsection{Convex Relaxation} \label{sec:l1}
The first practical approach to sparse signal recovery that we will introduce is based on the convex relaxation, which replaces the $\ell_0$ norm by its tightest convex relaxation---the $\ell_1$ norm. Therefore, we solve the following optimization problem in lieu of \eqref{eq:ell0min}:
\equ{\min \onen{\m{x}}, \st \m{y}=\m{A}\m{x}, \label{formu:l1}}
which is sometimes referred to as basis pursuit (BP) \cite{chen2001atomic}. Since the $\ell_1$ norm is convex, \eqref{formu:l1} can be solved in a polynomial time. In fact, the use of $\ell_1$ optimization for obtaining a sparse solution dates back to the paper \cite{claerbout1973robust} about seismic data recovery. While the BP was empirically observed to give good performance, a rigorous analysis had not been provided for decades.

To introduce the existing theoretical guarantees for the BP in \eqref{formu:l1}, we define a metric of the matrix $\m{A}$ called mutual coherence that quantifies the correlations between the atoms in $\m{A}$ \cite{donoho2003optimally}.

\begin{defi} The mutual coherence of a matrix $\m{A}$, $\mu\sbra{\m{A}}$, is the largest absolute correlation between any two columns of $\m{A}$, i.e.,
\equ{\mu\sbra{\m{A}}=\max_{i\neq j}\frac{\abs{\inp{\m{a}_i,\,\m{a}_j}}}{\twon{\m{a}_i}\twon{\m{a}_j}},}
where $\inp{\cdot,\cdot}$ denotes the inner product.
\end{defi}

Intuitively, if two atoms in $\m{A}$ are highly correlated, then it will be difficult to distinguish their contributions to the measurements $\m{y}$. In the extreme case when two atoms are completely coherent, it will be impossible to distinguish their contributions and thus impossible to recover the sparse signal $\m{x}$. Therefore, to guarantee successful signal recovery, the mutual coherence $\mu\sbra{\m{A}}$ should be small. This is true, according to the following theorem.

\begin{thm}[\cite{donoho2003optimally}] Assume that $\zeron{\m{x}}\leq K$ for the true signal $\m{x}$ and $\mu<\frac{1}{2K-1}$. Then, $\m{x}$ is the unique solution of the $\ell_0$ optimization and the BP problem. \label{thm:BP_coherence}
\end{thm}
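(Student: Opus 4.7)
The plan is to reduce both uniqueness assertions to the single inequality
\[\spark{\sbra{\m{A}}}\geq 1+\frac{1}{\mu\sbra{\m{A}}},\]
which together with the hypothesis $\mu<\frac{1}{2K-1}$ yields $\spark{\sbra{\m{A}}}>2K$. To prove this spark bound I rescale the columns to have unit Euclidean norm (which leaves $\mu$ and $\spark$ unchanged), pick any $p$ of them, and form the $p\times p$ Gram matrix $\m{G}$. One has $G_{ii}=1$ and $\abs{G_{ij}}\leq\mu$ for $i\neq j$. Gershgorin's disk theorem places every eigenvalue of $\m{G}$ in $\mbra{1-(p-1)\mu,\,1+(p-1)\mu}$, so as soon as $p<1+1/\mu$ the matrix $\m{G}$ is positive definite and the $p$ columns are linearly independent. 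Hence $\spark{\sbra{\m{A}}}\geq 1+1/\mu$, and the $\ell_0$ uniqueness follows immediately from $K<\spark{\sbra{\m{A}}}/2$ combined with the argument already given after \eqref{eq:Kcond_l0}.

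For BP uniqueness, let $S=\supp{\sbra{\m{x}}}$ with $\abs{S}\leq K$, and suppose another feasible $\m{x}'\neq\m{x}$ exists; put $\m{h}=\m{x}'-\m{x}$, so $\m{A}\m{h}=\m{0}$ and $\m{h}\neq\m{0}$. Still working with unit-norm columns, the $i$th coordinate of $\m{A}^H\m{A}\m{h}=\m{0}$ reads
\[h_i=-\sum_{j\neq i}\inp{\m{a}_i,\m{a}_j}\,h_j,\]
and taking moduli gives $\abs{h_i}\leq\mu\sbra{\onen{\m{h}}-\abs{h_i}}$, i.e.
\[\abs{h_i}\leq\frac{\mu}{1+\mu}\onen{\m{h}}\quad\text{for every }i.\]
Summing over $i\in S$ yields $\onen{\m{h}_S}\leq\frac{K\mu}{1+\mu}\onen{\m{h}}<\frac{1}{2}\onen{\m{h}}$, where the strict inequality is exactly the assumption $\mu(2K-1)<1$. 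Therefore $\onen{\m{h}_S}<\onen{\m{h}_{S^c}}$, and a standard triangle-inequality split
\[\onen{\m{x}'}=\onen{\m{x}_S+\m{h}_S}+\onen{\m{h}_{S^c}}\geq\onen{\m{x}}-\onen{\m{h}_S}+\onen{\m{h}_{S^c}}>\onen{\m{x}}\]
contradicts the optimality of $\m{x}'$, so $\m{x}$ is the unique BP minimizer.

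The main obstacle, in my view, is the bookkeeping around the normalization step rather than any single inequality. The coherence $\mu$ and the feasibility set $\lbra{\m{x}:\m{A}\m{x}=\m{y}}$ are invariant under right-multiplication of $\m{A}$ by a positive diagonal matrix $\m{D}$, but the $\ell_1$ objective is not, so the cleanest route is either to assume unit-norm columns from the outset (as is customary in this literature) or to absorb the norms into the variable via $\tilde{\m{x}}=\m{D}\m{x}$ and verify that supports and the key per-coordinate inequality $\abs{\tilde h_i}\leq\frac{\mu}{1+\mu}\onen{\tilde{\m{h}}}$ are unaffected. Beyond that subtlety, every step reduces to Gershgorin and the triangle inequality.
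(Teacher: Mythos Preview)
The paper does not supply its own proof of this theorem; it is quoted from \cite{donoho2003optimally} and stated without argument. Your proof is correct and follows the classical route of that reference: the spark lower bound $\spark{\sbra{\m{A}}}\geq 1+1/\mu$ via the Gershgorin disk argument on the Gram matrix, combined with the paper's own observation after \eqref{eq:Kcond_l0}, settles the $\ell_0$ part; and the per-coordinate null-space bound $\abs{h_i}\leq\frac{\mu}{1+\mu}\onen{\m{h}}$, summed over the support and fed into the triangle-inequality split, is exactly the standard BP uniqueness argument. Your caveat about column normalization is also apt: the mutual coherence in the paper is defined with normalized inner products, so the theorem as stated tacitly assumes unit-norm columns (or, equivalently, that one works with the rescaled variable $\m{D}\m{x}$), and both of your proposed remedies are the customary ones.
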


Another theoretical guarantee is based on the restricted isometry property (RIP) that quantifies the correlations of the atoms in $\m{A}$ in a different manner and has been popular in the development of compressed sensing.

\begin{defi}[\cite{candes2006compressive}] The $K$-restricted isometry constant (RIC) of a matrix $\m{A}$, $\delta_K\sbra{\m{A}}$, is the smallest number such that the inequality
\equ{\sbra{1-\delta_K\sbra{\m{A}}}\twon{\m{v}}^2\leq\twon{\m{A}\m{v}}^2\leq \sbra{1+\delta_K\sbra{\m{A}}}\twon{\m{v}}^2\notag}
holds for all $K$-sparse vectors $\m{v}$. $\m{A}$ is said to satisfy the $K$-RIP with constant $\delta_K\sbra{\m{A}}$ if $\delta_K\sbra{\m{A}}<1$. \label{defi_RIP}
\end{defi}

By definition, matrices that have small RICs perform approximately orthogonal/unitary transformations when applied to sparse vectors. The following theoretical guarantee is provided in \cite{candes2008restricted}.

\begin{thm}[\cite{candes2008restricted}] Assume that $\zeron{\m{x}}\leq K$ for the true signal $\m{x}$  and $\delta_{2K}<\sqrt{2}-1$. Then $\m{x}$ is the unique solution of the $\ell_0$ optimization and the BP problem.\label{Thm:BP_RIP}
\end{thm}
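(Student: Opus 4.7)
The plan is to handle $\ell_0$-uniqueness and BP-uniqueness separately, even though a single RIP hypothesis delivers both. For the $\ell_0$ part I would first note that Definition \ref{defi_RIP} with $\delta_{2K}<1$ forces any $2K$ columns of $\m{A}$ to be linearly independent: if $\m{v}$ is supported on at most $2K$ indices and $\m{A}\m{v}=\m{0}$, then $(1-\delta_{2K})\twon{\m{v}}^2 \leq 0$, forcing $\m{v}=\m{0}$. Hence $\spark\sbra{\m{A}}\geq 2K+1$, and the argument below \eqref{eq:Kcond_l0} (difference of two $K$-sparse solutions is $2K$-sparse in the null space) immediately gives uniqueness of the $\ell_0$ minimizer. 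Since $\sqrt{2}-1<1$, this already yields half the theorem.

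For the BP half, I would proceed by contradiction in the standard Candès style. Suppose $\m{x}^*$ is a BP optimum with $\m{x}^*\neq\m{x}$, and set $\m{h}=\m{x}^*-\m{x}$, so $\m{A}\m{h}=\m{0}$ and $\onen{\m{x}+\m{h}}\leq\onen{\m{x}}$. Let $T_0$ be the (at most $K$-element) support of $\m{x}$. Splitting $\onen{\m{x}+\m{h}}$ on $T_0$ and $T_0^c$ and applying the triangle inequality yields the cone constraint
\equ{\onen{\m{h}_{T_0^c}} \leq \onen{\m{h}_{T_0}}. \notag}
Then I would sort the entries of $\m{h}_{T_0^c}$ by decreasing magnitude and partition the indices into consecutive blocks $T_1,T_2,\dots$ of size $K$. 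The key size bound is that for each $j\geq 1$, every entry of $\m{h}_{T_j}$ is dominated by the average magnitude on $T_{j-1}$, giving $\twon{\m{h}_{T_j}}\leq K^{-1/2}\onen{\m{h}_{T_{j-1}}}$, which telescopes to
\equ{\sum_{j\geq 2}\twon{\m{h}_{T_j}} \leq K^{-1/2}\onen{\m{h}_{T_0^c}} \leq K^{-1/2}\onen{\m{h}_{T_0}} \leq \twon{\m{h}_{T_0}}. \notag}

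Next, writing $T_{01}=T_0\cup T_1$ and using $\m{A}\m{h}=\m{0}$, I would apply the lower RIP bound on the $2K$-sparse vector $\m{h}_{T_{01}}$ and then expand $\twon{\m{A}\m{h}_{T_{01}}}^2 = -\sum_{j\geq 2}\inp{\m{A}\m{h}_{T_{01}},\m{A}\m{h}_{T_j}}$. The standard RIP-based inner-product estimate, which says $\abs{\inp{\m{A}\m{u},\m{A}\m{v}}}\leq \delta_{2K}\twon{\m{u}}\twon{\m{v}}$ whenever $\m{u},\m{v}$ have disjoint supports of total size $\leq 2K$, together with $\twon{\m{h}_{T_0}}+\twon{\m{h}_{T_1}}\leq \sqrt{2}\twon{\m{h}_{T_{01}}}$, gives
\equ{(1-\delta_{2K})\twon{\m{h}_{T_{01}}}^2 \leq \sqrt{2}\,\delta_{2K}\twon{\m{h}_{T_{01}}}\sum_{j\geq 2}\twon{\m{h}_{T_j}}. \notag}
Combining with the earlier telescoping bound and $\twon{\m{h}_{T_0}}\leq\twon{\m{h}_{T_{01}}}$ produces $\twon{\m{h}_{T_{01}}}\leq \frac{\sqrt{2}\,\delta_{2K}}{1-\delta_{2K}}\twon{\m{h}_{T_{01}}}$. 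Under $\delta_{2K}<\sqrt{2}-1$ the constant is strictly less than one, so $\twon{\m{h}_{T_{01}}}=0$, and then $\twon{\m{h}}\leq 2\twon{\m{h}_{T_{01}}}=0$ (since the cone constraint forces the tail to vanish too), contradicting $\m{h}\neq\m{0}$.

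The routine pieces are the sorting/partition bound and the cone inequality. The main obstacle, and the step that has to be handled carefully, is the RIP inner-product estimate $\abs{\inp{\m{A}\m{u},\m{A}\m{v}}}\leq\delta_{2K}\twon{\m{u}}\twon{\m{v}}$ for disjointly supported $\m{u},\m{v}$; this requires a short parallelogram-type argument applied to the unit vectors $\m{u}/\twon{\m{u}}\pm\m{v}/\twon{\m{v}}$ viewed as $2K$-sparse, using both the upper and lower RIP bounds. All the constants then line up so that the threshold $\sqrt{2}-1$ appears as the exact root of $\sqrt{2}\,\delta = 1-\delta$, which is why the hypothesis is stated in that form.
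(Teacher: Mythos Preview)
Your argument is correct and is essentially the original proof from \cite{candes2008restricted}; the paper itself does not supply a proof but merely cites this reference, so there is nothing further to compare against. One small point worth making explicit when you write it up: since the paper works over $\bC$, the parallelogram-type inner-product estimate $\abs{\inp{\m{A}\m{u},\m{A}\m{v}}}\leq\delta_{2K}\twon{\m{u}}\twon{\m{v}}$ requires the complex polarization identity (using $\m{u}\pm\m{v}$ and $\m{u}\pm i\m{v}$) rather than just the real one, but the bound and the rest of the argument go through unchanged.
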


After the work \cite{candes2008restricted}, the RIP condition has been improved, e.g., to $\delta_{2K}<\frac{3}{4+\sqrt{6}}$ \cite{foucart2012sparse}. Other types of RIP conditions are also available, e.g., $\delta_K<0.307$ in \cite{cai2010new}. It is known that stronger results can be provided by using RIP as compared to the mutual coherence. But it is worth noting that, unlike the mutual coherence that can be easily computed given the matrix $\m{A}$, the complexity of computing the RIC of $\m{A}$ may increase dramatically with the sparsity $K$.

In the presence of noise we can solve the following regularized optimization problem, usually known as the least absolute shrinkage and selection operator (LASSO) \cite{tibshirani1996regression}:
\equ{\min_{\m{x}} \lambda\onen{\m{x}}+\frac{1}{2}\twon{\m{A}\m{x}-\m{y}}^2, \label{formu:Lasso}}
where $\lambda>0$ is a regularization parameter, to be specified, or the basis pursuit denoising (BPDN) problem:
\equ{\min_{\m{x}} \onen{\m{x}}, \st \twon{\m{A}\m{x}-\m{y}}\leq\eta, \label{formu:BPDN}}
where $\eta\geq \twon{\m{e}}$ is an upper bound on the noise energy. Note that \eqref{formu:Lasso} and \eqref{formu:BPDN} are equivalent for appropriate choices of $\lambda$ and $\eta$, and that both degenerate to BP in the noiseless case by letting $\eta,\lambda\rightarrow0$. Under RIP conditions similar to the above ones, it has been shown that the sparse signal $\m{x}$ can be stably reconstructed with the reconstruction error being proportional to the noise level \cite{candes2008restricted}.

Besides \eqref{formu:Lasso} and \eqref{formu:BPDN}, another $\ell_1$ optimization method for sparse recovery is the so-called square-root LASSO \cite{belloni2011square}:
\equ{\min_{\m{x}} \tau\onen{\m{x}} + \twon{\m{y}-\m{A}\m{x}}, \label{eq:SR_lasso}}
where $\tau>0$ is a regularization parameter. Compared to the LASSO, for which the noise is usually assumed to be Gaussian and the regularization parameter $\lambda$ is chosen proportional to the standard deviation of the noise, SR-LASSO requires a weaker assumption on the noise distribution and $\tau$ can be chosen as a constant that is independent of the noise level \cite{belloni2011square}.

The $\ell_1$ optimization problems in \eqref{formu:l1}, \eqref{formu:Lasso}, \eqref{formu:BPDN} and \eqref{eq:SR_lasso} are convex and are guaranteed to be solvable in a polynomial time; however, it is not easy to efficiently solve them in the case when the problem dimension is high since the $\ell_1$ norm is not a smooth function. Significant progress has been made over the past decade to accelerate the computation. Examples include $\ell_1$-magic \cite{candes2005l1}, interior-point method \cite{kim2008interior}, conjugate gradient method \cite{lustig2007sparse}, fixed-point continuation \cite{hale2007fixed}, Nesterov's smoothing technique with continuation (NESTA) \cite{nesterov2005smooth,becker2009nesta}, ONE-L1 algorithms \cite{yang2011orthonormal}, alternating direction method of multipliers (ADMM) \cite{boyd2011distributed,yang2011alternating} and so on.

\subsubsection{$\ell_q$ Optimization} \label{sec:lq}
For a vector $\m{x}$, the $\ell_q$, $0<q<1$ (pseudo-)norm is defined as:
\equ{\norm{\m{x}}_q = \sbra{\sum_{n} \abs{x_n}^q}^{\frac{1}{q}}, \label{eq:lq}}
which is a nonconvex relaxation of the $\ell_0$ norm. Compared to \eqref{eq:ell0min} and \eqref{formu:l1}, in the noiseless case, the $\ell_q$ optimization problem is given by:
\equ{\min_{\m{x}} \norm{\m{x}}_q^q, \st \m{y} =\m{A}\m{x}, \label{eq:lqmin}}
where $\norm{\m{x}}_q^q$, instead of $\norm{\m{x}}_q$, is used for the convenience of algorithm development. Since the $\ell_q$ norm is a closer approximation to the $\ell_0$ norm, compared to the $\ell_1$ norm, it is expected that the $\ell_q$ optimization in \eqref{eq:lqmin} results in better performance than the BP. This is true, according to \cite{chartrand2007exact,foucart2009sparsest}. Indeed, $\ell_q$, $0<q<1$ optimization can exactly determine the true sparse signal under weaker RIP conditions than that for the BP. Note that the results above are applicable to the globally optimal solution to \eqref{eq:lqmin}, whereas we can only guarantee convergence to a locally optimal solution in practice.

A well-known algorithm for $\ell_q$ optimization is the {\em foc}al {\em u}nderdetermined
{\em s}ystem {\em s}olver (FOCUSS) \cite{gorodnitsky1997sparse,rao1999affine}. FOCUSS is an iterative reweighted least squares method. In each iteration, FOCUSS solves the following weighted least squares problem:
\equ{\min_{\m{x}} \sum_{n} w_n \abs{x_n}^2, \st \m{A}\m{x} = \m{y}, \label{eq:FOCUSS_iter}}
where the weight coefficients $w_n = \abs{x_n}^{q-2}$ are updated using the latest solution $\m{x}$. Note that \eqref{eq:FOCUSS_iter} can be solved in closed form and hence an iterative algorithm can be implemented with a proper initialization. This algorithm can be interpreted as a majorization-minimization (MM) algorithm that is guaranteed to converge to a local minimum.

In the presence of noise, the following regularized problem is considered in lieu of \eqref{formu:Lasso}:
\equ{\min_{\m{x}} \lambda\norm{\m{x}}_q^q+\frac{1}{2}\twon{\m{A}\m{x}-\m{y}}^2, \label{eq:lqminnoisy}}
where $\lambda>0$ is a regularization parameter. A regularized FOCUSS algorithm for \eqref{eq:lqminnoisy} was developed in \cite{rao2003subset} by using the same main idea as in FOCUSS. A difficult problem regarding \eqref{eq:lqminnoisy} is the choice of the parameter $\lambda$. Although several heuristic methods for tuning this parameter were introduced in \cite{rao2003subset}, to the best of our knowledge there have been no theoretical results on this aspect.

To bypass the parameter tuning problem, a maximum {\em a posterior} (MAP) estimation approach called SLIM (sparse learning via iterative minimization) was proposed in \cite{tan2011sparse}. Assuming i.i.d. Gaussian noise with variance $\eta$ and the following prior distribution for $\m{x}$:
\equ{f(\m{x})\propto \prod_n e^{-\frac{2}{q}\sbra{\abs{x_n}^q-1}},}
SLIM computes the MAP estimate by solving the following $\ell_q$ optimization problem:
\equ{\min_{\m{x}} M\log \eta+\eta^{-1}\twon{\m{A}\m{x}-\m{y}}^2 + \frac{2}{q}\norm{\m{x}}_q^q. \label{eq:slim}}
To locally solve \eqref{eq:slim}, SLIM iteratively updates $\m{x}$, as the regularized FOCUSS does. However, unlike FOCUSS, SLIM also iteratively updates the parameter $\eta$ based on the latest solution $\m{x}$. Once $q$ is given, SLIM is hyper-parameter free. Note that \eqref{eq:slim} reduces to \eqref{eq:lqminnoisy} for fixed $\eta$.

\subsubsection{Maximum Likelihood Estimation (MLE)}
MLE is another common approach to sparse estimation. In contrast to the convex relaxation and OMP, one advantage of MLE is that it does not require knowledge of the noise level or the sparsity level (the latter being often needed to choose $\lambda$ in \eqref{formu:Lasso} properly). To derive it, assume that $\m{x}$ follows a multivariate Gaussian distribution with mean zero and covariance $\m{P}=\diag\sbra{\m{p}}$, where $p_n\geq0$, $p=1,\dots,\overline{N}$ (this can be viewed as a prior distribution that does not necessarily have to hold in practice). Also, assume i.i.d.~Gaussian noise with variance $\sigma$. It follows from the data model in \eqref{eq:sparserep} that $\m{y}$ follows a Gaussian distribution with mean zero and covariance $\m{R} = \m{A}\m{P}\m{A}^H +\sigma\m{I}$. Consequently, the negative log-likelihood function associated with $\m{y}$ is given by
\equ{\cL\sbra{\m{p},\sigma} = \log\abs{\m{R}} + \m{y}^H\m{R}^{-1}\m{y}.}
The parameters $\m{p}$ and $\sigma$ can be estimated by minimizing $\cL$:
\equ{\min_{\m{p},\sigma}\log\abs{\m{R}} + \m{y}^H\m{R}^{-1}\m{y}. \label{eq:negloglike}}
Once $\m{p}$ and $\sigma$ are solved for, the posterior distribution of the sparse signal $\m{x}$ can be obtained: it is a Gaussian distribution with mean and covariance given, respectively, by
{\lentwo\equa{\m{\mu}
&=& \m{\Sigma}\m{A}^H\m{y}, \\ \m{\Sigma}
&=& \sbra{\m{A}^H\m{A} + \sigma\m{P}^{-1}}^{-1}.
}}The vector $\m{x}$ can be estimated as its posterior mean $\m{\mu}$. In this process, the sparsity of $\m{x}$ is achieved by the fact that most of the entries of $\m{p}$ approach zero in practice. Theoretically, it can be shown that in the limiting noiseless case, the global optimizer to \eqref{eq:negloglike} coincides with that of $\ell_0$ optimization \cite{wipf2004sparse}.

The main difficulty of MLE is solving \eqref{eq:negloglike} in which the first term of the objective function, viz.~$\log\abs{\m{R}}$, is a nonconvex (in fact, concave) function of $\sbra{\m{p},\sigma}$. Different approaches have been proposed, e.g., reweighted optimization \cite{stoica2012spice} and sparse Bayesian learning (SBL) \cite{tipping2001sparse,wipf2004sparse,ji2008bayesian}. In \cite{stoica2012spice}, a majorization-minimization approach is adopted to linearize $\log\abs{\m{R}}$ at each iteration by its tangent plane $\tr\sbra{\m{R}_j^{-1}\m{R}}+const$ given the latest estimate $\m{R}_j$. The resulting problem at each iteration is convex and solved using an algorithm called {\em sp}arse {\em i}terative {\em c}ovariance-based {\em e}stimation (SPICE) \cite{stoica2011new,stoica2011spice,stoica2012spice, stoica2014weighted} that will be introduced in Subsection \ref{sec:SPICE}.

The MLE has been interpreted from a different perspective within the framework of SBL or Bayesian compressed sensing. In particular, to achieve sparsity, a prior distribution is assumed for $\m{x}$ that promotes sparsity and is usually referred to as a sparse prior. In \cite{tipping2001sparse}, for example, a Student's $t$-distribution is assumed for $\m{x}$ that is constructed in a hierarchical manner: specifically, a Gaussian distribution as above at the first stage followed by a Gamma distribution for the inverse of the powers, $p_n^{-1}$, $n=1,\dots,{\overline{N}}$ at the second stage. Interestingly, despite different approaches, the same objective function is obtained as in \eqref{eq:negloglike}. To optimize \eqref{eq:negloglike}, an expectation-maximization (EM) algorithm is adopted \cite{mclachlan1997algorithm}. In the E-step, the posterior distribution of $\m{x}$ is computed, as mentioned previously, while in the M-step, $\m{p}$ and $\sigma$ are updated as functions of the latest statistics of $\m{x}$, viz.~$\m{\mu}$ and $\m{\Sigma}$. The process is repeated and it guarantees a monotonic decrease of $\cL$. Finally, we note that with other sparse priors for $\m{x}$ that may possess different sparsity promoting properties, the obtained objective function of SBL can be slightly different from that of the MLE in \eqref{eq:negloglike} (see, e.g., \cite{babacan2010bayesian}).

\subsection{Sparse Representation and DOA Estimation: the Link and the Gap}
In this subsection we discuss the link and the gap between sparse representation and DOA estimation. By doing so, we can see the possibility and the main challenges of using the sparse representation techniques for DOA estimation. It has been mentioned that the underlying motivation of sparse representation is that the observed data $\m{y}$ can be well approximated in a lower-dimensional subspace. In fact, this is exactly the case in DOA estimation where the data snapshot $\m{y}(t)$ is a linear combination of the steering vectors of the sources and the sparsity arises from the fact that there are less sources than sensors (note that for some special arrays and methods more sources than the sensors can be detected). By comparing the models in \eqref{formu:observation_model1} and \eqref{eq:sparserep}, it can be seen that the process of DOA estimation boils down to a sparse representation of the data snapshot with each DOA $\theta$ corresponding to one atom given by $\m{a}\sbra{\theta}$. Therefore, it is possible to use sparse representation techniques in DOA estimation.

However, there exist major differences between the common sparse representation framework and DOA estimation. First, and most importantly, the dictionary in sparse representation usually contains a finite number of atoms while in the DOA estimation problem the parameters are continuously valued, which leads to infinitely many atoms. More concretely, the atoms in sparse representation are given by the columns of a matrix. But in DOA estimation each atom $\m{a}\sbra{\theta}$ is parameterized by a continuous parameter $\theta$.

Second, there are usually multiple snapshots in DOA estimation problems, in contrast to the single snapshot case in sparse representation. It is then crucial to exploit the temporal redundancy of the snapshots in DOA estimation since the number of antennas can be limited due to physical and other constraints. Typically, the number of antennas $M$ is about $10\sim100$, while the number of snapshots $L$ can be much larger.

Last but not least, the existing theoretical guarantees of the sparse representation techniques are usually derived using what is known as incoherence analysis, e.g., those based on the mutual coherence and RIP, in the sense that they are applicable only in the case of incoherent dictionaries. This means that such guarantees can hardly be applied to DOA estimation problems, in which the atoms are completely coherent. But this does not necessarily mean that satisfactory performance cannot be achieved in DOA estimation problems. Indeed, note that the success of sparse signal recovery is measured by the size of the reconstruction error of the sparse signal $\m{x}$, and that a slight error in the support usually results in a large estimation error. But this is not true for DOA estimation where the estimation error is actually measured by the error of the support (and, therefore, a small estimation error of the support is acceptable).

The next three sections describe three different possibilities for dealing with the first gap---discrete versus continuous atoms---when applying sparse representation to DOA estimation. In each section, we will also discuss how the signal sparsity and the temporal redundancy of the multiple snapshots are exploited and what theoretical guarantees can be obtained.


\section{On-Grid Sparse Methods} \label{sec:ongrid}
In this section we introduce the first class of sparse methods for DOA estimation, termed as on-grid sparse methods. These methods are developed by directly applying sparse representation and compressed sensing techniques to DOA estimation. To do so, the DOAs are assumed to lie on a prescribed grid so that the problem can be solved within the common framework of sparse representation. The main challenge then is how to exploit the temporal redundancy of multiple snapshots.

In the following we first introduce the data model that we will use throughout this section. Then, we present several formulations and algorithms for DOA estimation within the on-grid framework, including $\ell_{2,q}$ optimization methods with $0\leq q\leq 1$, SBL and SPICE. Guidelines for grid selection will also be provided.

\subsection{Data Model}
To fill the gap between continuous DOA estimation and discrete sparse representation, it is simply assumed by the on-grid sparse methods that the continuous DOA domain $\cD_{\theta}$ can be replaced by a {\em given} set of grid points
\equ{\overline{\m{\theta}}=\lbra{\overline{\theta}_1,\dots,\overline{\theta}_{\overline{N}}},}
where $\overline{N}\gg M$ is the grid size. This means that the candidate DOAs can only take values in $\overline{\m{\theta}}$, which results in the following $M\times \overline{N}$ dictionary matrix
\equ{\m{A}= \m{A}\sbra{\overline{\m{\theta}}}= \mbra{\m{a}\sbra{\overline{\theta}_1}, \dots, \m{a}\sbra{\overline{\theta}_{\overline{N}}}}.}
It follows that the data model in \eqref{formu:observation_model} for DOA estimation can be equivalently written as:
\equ{\m{Y} = \m{A} \m{X} + \m{E}, \label{eq:datamodel_D}}
where $\m{X}=\mbra{\m{x}(1),\dots,\m{x}(L)}$ is an $\overline{N}\times L$ matrix in which each column $\m{x}(t)$ is an augmented version of the source signal $\m{s}(t)$ and is defined by:
\equ{x_n(t) = \left\{\begin{array}{ll} s_k(t), & \text{ if } \overline{\theta}_{n} =\theta_k; \\ 0, & \text{ otherwise, } \end{array}\right. \quad n=1,\dots,\overline{N},\; t=1,\dots,L.}
It can be seen that for each $t$, $\m{x}(t)$ contains only $K$ nonzero entries, whose locations correspond to the $K$ DOAs, and therefore it is a sparse vector as $K\ll \overline{N}$. Moreover, $\m{x}(t)$, $t=1,\dots,L$ are jointly sparse in the sense that they share the same support. Alternatively, we can say that $\m{X}$ is row-sparse in the sense that it contains only a few nonzero rows.

By means of the data model in \eqref{eq:datamodel_D}, the DOA estimation problem is transformed into a sparse signal recovery problem. The DOAs are encoded in the support of the sparse vectors $\m{x}(t)$, $t=1,\dots,L$ and therefore, we only need to recover this support from which the estimated DOAs can be retrieved.

The key and only difference between \eqref{eq:datamodel_D} and \eqref{eq:sparserep} is that the former contains multiple data snapshots that are also referred to as multiple measurement vectors (MMVs). In the case of a single snapshot with $L=1$ (i.e., single measurement vector (SMV)), the sparse representation techniques can be readily applied to DOA estimation. In the case of multiple snapshots, the main difficulty consists in exploiting the temporal redundancy of the snapshots---the joint sparsity of the columns of $\m{X}$---for possibly improved performance. Since the MMV data model in \eqref{eq:datamodel_D} is quite general, extensive studies have been performed for the joint sparse signal recovery problem (see, e.g., \cite{cotter2005sparse,malioutov2005sparse,chen2006theoretical, fornasier2008recovery,mishali2008reduce, gribonval2008atoms,kowalski2009sparse, ji2009multi,eldar2009robust, eldar2010average, hyder2010direction, van2010theoretical, kim2012compressive,lee2012subspace,davies2012rank, stoica2011spice}). We only discuss some of them in the ensuing subsections.

Before proceeding to the on-grid sparse methods, we make some comments on the data model in \eqref{eq:datamodel_D}. Note that the set of grid points $\overline{\m{\theta}}$ needs to be fixed {\em a priori} so that the dictionary $\m{A}$ is known, which is required in the sparse signal recovery process. Consequently, there is no guarantee that the true DOAs lie on the grid $\overline{\m{\theta}}$; in fact, this fails with probability one, resulting in the grid mismatch problem \cite{chi2011sensitivity,chae2010effects}. To ensure at least that the true DOAs are {\em close} to the grid points, in practice the grid needs to be dense enough (with $\overline{N}\gg M$). Therefore, \eqref{eq:datamodel_D} can be viewed as a zeroth-order approximation of the true data model in \eqref{formu:observation_model} and the noise term $\m{E}$ in \eqref{eq:datamodel_D} may also comprise the approximation error besides the true noise in \eqref{formu:observation_model}.

\subsection{$\ell_{2,0}$ Optimization}
We first discuss how the joint sparsity can be exploited for sparse recovery. We start with the definition of sparsity for the row-sparse matrix $\m{X}$. Since each row of $\m{X}$ corresponds to one potential source, it is natural to define the sparsity as the number of nonzero rows of $\m{X}$, which is usually expressed as the following $\ell_{2,0}$ norm (see, e.g., \cite{chen2006theoretical,hyder2010direction}):
\equ{\norm{\m{X}}_{2,0}= \#\lbra{n:\; \twon{\m{X}_n}>0} = \#\lbra{n:\; \m{X}_n\neq\m{0}}, \label{eq:l20norm}}
where $\m{X}_n$ denotes the $n$th row of $\m{X}$. Note that in \eqref{eq:l20norm} the $\ell_2$ norm can in fact be replaced by any other norm. Following from the $\ell_0$ optimization in the single snapshot case, the following $\ell_{2,0}$ optimization can be proposed in the absence of noise:
\equ{\min_{\m{X}} \norm{\m{X}}_{2,0}, \st \m{Y} = \m{A}\m{X}. \label{eq:l20min}}
Suppose the optimal solution, denoted by $\widehat{\m{X}}$, can be obtained. Then, the DOAs can be retrieved from the row-support of $\widehat{\m{X}}$.

To realize the potential advantage of using the joint sparsity of the snapshots, consider the following result.

\begin{thm}[\cite{chen2006theoretical}] The true matrix $\m{X}$ is the unique solution to \eqref{eq:l20min} if
\equ{\norm{\m{X}}_{2,0}< \frac{\text{spark}\sbra{\m{A}}-1+\rank\sbra{\m{Y}}}{2}. \label{eq:Kcond_l20}} \label{thm:l20min}
\end{thm}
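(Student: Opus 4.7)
The plan is to proceed by contradiction. Suppose there exists another row-sparse matrix $\m{X}'$ with $\|\m{X}'\|_{2,0}=K'\leq K=\|\m{X}\|_{2,0}$ satisfying $\m{Y}=\m{A}\m{X}'$, and set $\m{D}=\m{X}-\m{X}'$, so that $\m{A}\m{D}=\m{0}$. Let $\Omega_X$, $\Omega_{X'}$, $\Omega_D$ denote the corresponding row-supports, and write $T=\Omega_X\cap\Omega_{X'}$ and $U=\Omega_X\setminus\Omega_{X'}$; I would aim to force $\m{D}=\m{0}$ using both the row-sparsity of $\m{D}$ and the rank constraint inherited from $\m{Y}$.

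The first key auxiliary result is a rank-aware null-space lemma: if $\m{A}\m{M}=\m{0}$ and $\m{M}\neq\m{0}$, then $\|\m{M}\|_{2,0}\geq\spark\sbra{\m{A}}-1+\rank\sbra{\m{M}}$. I would prove this by applying Sylvester's rank inequality to the factorization obtained by retaining only the rows of $\m{M}$ indexed by its row-support $\Omega_M$ and the corresponding columns of $\m{A}$, combined with the elementary fact that any $\spark\sbra{\m{A}}-1$ columns of $\m{A}$ are linearly independent (so $\rank$ of the restricted $\m{A}$ is at least $\min\sbra{|\Omega_M|,\spark\sbra{\m{A}}-1}$). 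Specializing to $\m{M}=\m{D}$ and using $|\Omega_D|\leq|\Omega_X\cup\Omega_{X'}|=K+K'-|T|$, this yields the upper bound
\begin{equation*}
\rank\sbra{\m{D}}\leq K+K'-|T|-\spark\sbra{\m{A}}+1.
\end{equation*}

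Next I would inject the hypothesis $\rank\sbra{\m{Y}}=r$ through a matching \emph{lower} bound on $\rank\sbra{\m{D}}$. Since $r=\rank\sbra{\m{A}\m{X}}\leq\rank\sbra{\m{X}}$ and the nonzero rows of $\m{X}$ sit in $\Omega_X=U\cup T$, subadditivity of rank on row-stacks gives
\begin{equation*}
r\leq\rank\sbra{\m{X}}\leq\rank\sbra{\m{X}_U}+\rank\sbra{\m{X}_T}\leq\rank\sbra{\m{D}}+|T|,
\end{equation*}
where the crucial step is the observation that $\m{X}$ and $\m{D}$ agree on the rows indexed by $U$ (there $\m{X}'$ vanishes), so $\rank\sbra{\m{X}_U}=\rank\sbra{\m{D}_U}\leq\rank\sbra{\m{D}}$, combined with the trivial bound $\rank\sbra{\m{X}_T}\leq|T|$.

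Chaining the upper and lower bounds on $\rank\sbra{\m{D}}$ causes the $|T|$ terms to cancel and yields $r+\spark\sbra{\m{A}}-1\leq K+K'\leq 2K$, contradicting the hypothesis $K<\frac{\spark\sbra{\m{A}}-1+r}{2}$; hence $\m{D}=\m{0}$. I expect the main obstacle to be identifying the right rank decomposition: a straightforward attempt that only invokes $|\Omega_D|\leq 2K$ and the single-snapshot $\ell_0$ argument recovers just $K<\spark\sbra{\m{A}}/2$ and ignores the multi-snapshot redundancy entirely. What allows $\rank\sbra{\m{Y}}$ to ``bite'' is the identity $\m{X}_U=\m{D}_U$, which is the bridge that converts a rank hypothesis on $\m{Y}$ into a rank lower bound on $\m{D}$ and makes the $|T|$ bookkeeping cancel exactly in the final step.
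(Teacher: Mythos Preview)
The paper does not supply its own proof of this theorem; it is quoted from \cite{chen2006theoretical} and simply stated. So there is no in-paper argument to benchmark against, and the relevant question is whether your argument stands on its own.

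It does. The two ingredients are sound: (i) the null-space lemma follows exactly as you outline, since Sylvester's inequality applied to $\m{A}_{:,\Omega_D}\m{D}_{\Omega_D}=\m{0}$ gives $\rank\sbra{\m{A}_{:,\Omega_D}}+\rank\sbra{\m{D}}\leq|\Omega_D|$, and any set of at most $\spark\sbra{\m{A}}-1$ columns of $\m{A}$ is independent, forcing $\rank\sbra{\m{A}_{:,\Omega_D}}\geq\spark\sbra{\m{A}}-1$ once $\m{D}\neq\m{0}$; (ii) the lower bound $r\leq\rank\sbra{\m{D}}+|T|$ via the partition $\Omega_X=U\cup T$ and the identity $\m{X}_U=\m{D}_U$ is correct and is indeed the step where the multi-snapshot rank information enters. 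The cancellation of $|T|$ then yields the contradiction cleanly. Your closing remark about why the naive $|\Omega_D|\leq 2K$ bound is insufficient is also accurate and pinpoints exactly what the rank hypothesis contributes. This is essentially the argument one finds in \cite{chen2006theoretical} (formulated there with $\rank\sbra{\m{X}}$ in place of $\rank\sbra{\m{Y}}$; your use of $r=\rank\sbra{\m{Y}}\leq\rank\sbra{\m{X}}$ bridges that harmlessly, in line with the equivalence the paper attributes to \cite{davies2012rank}).
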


Note that the condition in \eqref{eq:Kcond_l20} is very similar to that in \eqref{eq:cond2} required to guarantee parameter identifiability for DOA estimation. By Theorem \ref{thm:l20min}, the number of recoverable DOAs can be increased in general by collecting more snapshots since then $\rank\sbra{\m{Y}}$ increases. The only exception happens in the case when the data snapshots $\m{y}(t)$, $t=1,\dots,L$ are identical up to scaling factors. Unfortunately, similar to the single snapshot case, the above $\ell_{2,0}$ optimization problem is NP-hard to solve.

\subsection{Convex Relaxation}
\subsubsection{$\ell_{2,1}$ Optimization} \label{sec:l0_MMV}
The tightest convex relaxation of the $\ell_{2,0}$ norm is given by the $\ell_{2,1}$ norm that is defined as:
\equ{\twonen{\m{X}}= \sum_{n} \twon{\m{X}_n}.}
Though in the definition in \eqref{eq:l20norm} the $\ell_2$ norm in the $\ell_{2,0}$ norm can be replaced by other norms, its use is important in the $\ell_{2,1}$ norm.
Based on \eqref{eq:l20min}, the following $\ell_{2,1}$ optimization problem is proposed in the absence of noise \cite{malioutov2005sparse,cotter2005sparse}:
\equ{\min_{\m{X}} \norm{\m{X}}_{2,1}, \st \m{Y} = \m{A}\m{X}. \label{eq:l21min}}
As reported in the literature, the performance of $\ell_{2,1}$ optimization approach can be generally improved by increasing the number of measurement vectors. Theoretically, this can be shown to be true under the assumption that the jointly sparse signals are randomly drawn such that the rows of the source signals $\m{S}_k$, $k=1,\dots,K$ are at general positions \cite{eldar2010average}. It is worth noting that the theoretical guarantee cannot be improved without assumptions on the source signals. To see this, consider the case when the columns of $\m{S}$ are identical up to scaling factors. Then, acquiring more snapshots does not provide useful information for DOA estimation. In this respect, the result of \cite{eldar2010average} can be referred to as \emph{average case} analysis while those accounting for the aforementioned extreme case can be called \emph{worst case} analysis.

In parallel to \eqref{formu:Lasso} and \eqref{formu:BPDN}, in the presence of noise we can solve the LASSO problem:
\equ{\min_{\m{X}} \lambda\twonen{\m{X}}+\frac{1}{2}\frobn{\m{A}\m{X}-\m{Y}}^2 \label{formu:Lasso_MMV}}
where $\lambda>0$ is a regularization parameter (to be specified), or the BPDN problem:
\equ{\min_{\m{X}} \twonen{\m{X}}, \st \frobn{\m{A}\m{X}-\m{Y}}\leq\eta, \label{formu:BPDN_MMV}}
where $\eta\geq \twon{\m{E}}$ is an upper bound on the noise energy. Note that it is generally difficult to choose $\lambda$ in \eqref{formu:Lasso_MMV}. Given the noise variance, results on choosing $\lambda$ have recently been provided in \cite{bhaskar2013atomic, yang2016gridless1, li2016off} in the case of ULA and SLA. Readers are referred to Section \ref{sec:gridless} for details.

Finally, we note that most, if not all, of the computational approaches to $\ell_1$ optimization, e.g., those mentioned in Subsection \ref{sec:l1}, can be easily extended to deal with $\ell_{2,1}$ optimization in the case of multiple snapshots. Once $\m{X}$ is solved for, we can form a power spectrum by computing the power of each row of $\m{X}$ from which the estimated DOAs can be obtained.

\subsubsection{Dimensionality Reduction via $\ell_{2,1}$-SVD}
In DOA estimation applications the number of snapshots $L$ can be large, which can significantly increase the computational workload of $\ell_{2,1}$ optimization. In the case of $L>K$ a dimensionality reduction technique was proposed in \cite{malioutov2005sparse} inspired by the conventional subspace methods, e.g., MUSIC. In particular, suppose there is no noise; then the data snapshots $\m{Y}$ lie in a $K$-dimensional subspace. In the presence of noise, therefore, we can decompose $\m{Y}$ into the signal and noise subspaces, keep the signal subspace and use it in \eqref{eq:l21min}-\eqref{formu:BPDN_MMV} in lieu of $\m{Y}$. Mathematically, we compute the singular value decomposition (SVD)
\equ{\m{Y} = \m{U}\m{L}\m{V}^H. \label{eq:YSVD}}
Define a reduced $M\times K$ dimensional data matrix
\equ{\m{Y}_{\text{SV}} = \m{U}\m{L}\m{D}_K^T = \m{Y}\m{V}\m{D}_K^T}
that contains most of the signal power, where $\m{D}_K= \mbra{\m{I}_K, \m{0}}$ with $\m{I}_K$ being an identity matrix of order $K$. Also let $\m{X}_{\text{SV}} = \m{X}\m{V}\m{D}_K^T$ and $\m{E}_{\text{SV}} = \m{E}\m{V}\m{D}_K^T$. Using these notations we can write a new data model:
\equ{\m{Y}_{\text{SV}} = \m{A}\m{X}_{\text{SV}} + \m{E}_{\text{SV}}. \label{eq:datamodel_SVD}}
Note that \eqref{eq:datamodel_SVD} is in exactly the same form as \eqref{eq:datamodel_D} but with reduced dimensionality. So similar $\ell_{2,1}$ optimization problems can be formulated as \eqref{formu:Lasso_MMV} and \eqref{formu:BPDN_MMV}, which are referred to as $\ell_{2,1}$-SVD.

The following comments on $\ell_{2,1}$-SVD are in order. Note that the true source number $K$ has been known to obtain $\m{Y}_{\text{SV}}$. However, $\ell_{2,1}$-SVD is not very sensitive to this choice and therefore an appropriate estimate of $K$ is sufficient in practice \cite{malioutov2005sparse}. Nevertheless, parameter tuning remains a difficult problem for $\ell_{2,1}$-SVD ($\lambda$ and $\eta$ in \eqref{formu:Lasso_MMV} and \eqref{formu:BPDN_MMV}). Though some solutions have been proposed for the standard $\ell_{2,1}$ optimization methods in \eqref{formu:Lasso_MMV} and \eqref{formu:BPDN_MMV}, given the noise level, they cannot be applied to $\ell_{2,1}$-SVD due to the change in the data structure. Regarding this aspect, it is somehow hard to compare the DOA estimation performances of the standard $\ell_{2,1}$ optimization and $\ell_{2,1}$-SVD; though it is argued in \cite{malioutov2005sparse} that, as compared to the standard form, $\ell_{2,1}$-SVD can improve the robustness to noise by keeping only the signal subspace.

\subsubsection{Another Dimensionality Reduction Technique} \label{sec:dimred_D}
We present here another dimensionality reduction technique that reduces the number of snapshots from $L$ to $M$ and has the same performance as the original $\ell_{2,1}$ optimization. The technique was proposed in \cite{yang2016enhancing}, inspired by a similar technique used for the gridless sparse methods (see Section \ref{sec:gridless}). For convenience, we introduce it following the idea of $\ell_{2,1}$-SVD. In $\ell_{2,1}$-SVD the number of snapshots is reduced from $L$ to $K$ by keeping only the $K$-dimensional signal subspace; in contrast the technique in \cite{yang2016enhancing} suggests keeping both the signal and noise subspaces. To be specific, suppose that $L>M$ and $\m{Y}$ has rank $r\leq M$ (note that typically $r=M$ in the presence of noise). Then, given the SVD in \eqref{eq:YSVD}, we retain a reduced $M\times r$ dimensional data matrix
\equ{\m{Y}_{\text{DR}} = \m{U}\m{L}\m{D}_r^T = \m{Y}\m{V}\m{D}_r^T}
that preserves all of the data power since $\m{Y}$ has only $r$ nonzero singular values, where $\m{D}_r$ is defined similarly to $\m{D}_K$. We similarly define $\m{X}_{\text{DR}} = \m{X}\m{V}\m{D}_r^T$ and $\m{E}_{\text{DR}} = \m{E}\m{V}\m{D}_r^T$ to obtain the data model
\equ{\m{Y}_{\text{DR}} = \m{A}\m{X}_{\text{DR}} + \m{E}_{\text{DR}}. \label{eq:datamodel_SVD2}}
For the LASSO problem, as an example, it can be shown that {\em equivalent} solutions can be obtained before and after the dimensionality reduction. To be specific, if $\widehat{\m{X}}_{\text{DR}}$ is the solution to the following LASSO problem:
\equ{\min_{\m{X}_{\text{DR}}} \lambda\twonen{\m{X}_{\text{DR}}}+\frac{1}{2}\frobn{\m{A}\m{X}_{\text{DR}}-\m{Y}_{\text{DR}}}^2, \label{formu:Lasso_MMV2}}
then $\widehat{\m{X}} =\widehat{\m{X}}_{\text{DR}}\m{D}_r\m{V}^H$ is the solution to \eqref{formu:Lasso_MMV}. $\widehat{\m{X}}_{\text{DR}}$ and $\widehat{\m{X}}$ are equivalent in the sense that the corresponding rows of $\widehat{\m{X}}_{\text{DR}}$ and $\widehat{\m{X}}$ have the same power, resulting in identical power spectra (to see this, note that $\widehat{\m{X}}\widehat{\m{X}}^H=\widehat{\m{X}}_{\text{DR}}\m{D}_r\m{V}^H \m{V}\m{D}_r^T\widehat{\m{X}}_{\text{DR}}^H = \widehat{\m{X}}_{\text{DR}}\widehat{\m{X}}_{\text{DR}}^H$, whose diagonal contains the powers of the rows).

We next prove the above result. To do so, for any $\m{X}$, split $\m{X}\m{V}$ into two parts: $\m{X}\m{V} = \mbra{\m{X}_{\text{DR}}, \m{X}_2}$. Note that $\m{Y}\m{V} = \mbra{\m{Y}_{\text{DR}}, \m{0}}$. By the fact that $\m{V}$ is a unitary matrix, it can be easily shown that
{\lentwo\equa{\twonen{\m{X}}
&=& \twonen{\m{X}\m{V}} = \twonen{\mbra{\m{X}_{\text{DR}}, \m{X}_2}},\\ \frobn{\m{A}\m{X}-\m{Y}}^2
&=& \frobn{\m{A}\m{X}\m{V}-\m{Y}\m{V}}^2 = \frobn{\m{A}\m{X}_{\text{DR}}-\m{Y}_{\text{DR}}}^2 + \frobn{\m{A}\m{X}_2}^2.
}}It immediately follows that
\equ{\lambda\twonen{\m{X}}+\frac{1}{2}\frobn{\m{A}\m{X}-\m{Y}}^2 \geq \lambda\twonen{\m{X}_{\text{DR}}}+\frac{1}{2}\frobn{\m{A}\m{X}_{\text{DR}}-\m{Y}_{\text{DR}}}^2 \label{eq:DRinequ}}
and the equality holds if and only if $\m{X}_2 = \m{0}$, or equivalently, $\m{X} =\m{X}_{\text{DR}}\m{D}_r\m{V}^H$. We can obtain the stated result by minimizing both sides of \eqref{eq:DRinequ} with respect to $\m{X}$.

Note that the above result also holds if $\m{Y}_{\text{DR}}$ is replaced by any full-column-rank matrix $\widetilde{\m{Y}}$ satisfying $\widetilde{\m{Y}}\widetilde{\m{Y}}^H = \m{Y}\m{Y}^H$, since there always exists a unitary matrix $\m{V}$ such that $\widetilde{\m{Y}} = \m{Y}\m{V}\m{D}_r^T$ as for $\m{Y}_{\text{DR}}$. Therefore, the SVD of the $M\times L$ dimensional data matrix $\m{Y}$, which can be computationally expensive in the case of $L\gg M$, can be replaced by the Cholesky decomposition or the eigenvalue decomposition of the $M\times M$ matrix $\m{Y}\m{Y}^H$ (which is the sample data covariance matrix up to a scaling factor). Another fact that makes this dimensional reduction technique superior to $\ell_{2,1}$-SVD is that the parameter $\lambda$ or $\eta$ can be tuned as in the original $\ell_{2,1}$ optimization, for which solutions are available if the noise level is given.

\subsection{$\ell_{2,q}$ Optimization}
Corresponding to the $\ell_q$, $0<q<1$ norm considered in Subsection \ref{sec:lq}, we can define the $\ell_{2,q}$ norm to exploit the joint sparsity in $\m{X}$ as:
\equ{\norm{\m{X}}_{2,q} = \sbra{\sum_n \twon{\m{X}_n}^q}^{\frac{1}{q}},}
which is a nonconvex relaxation of the $\ell_{2,0}$ norm.
In lieu of \eqref{eq:l21min} and \eqref{formu:Lasso_MMV}, in the noiseless case, we can solve the following equality constrained problem:
\equ{\min_{\m{X}}\norm{\m{X}}_{2,q}^q, \st \m{A}\m{X} = \m{Y}, \label{eq:l2qmin} }
or the following regularized form in the noisy case:
\equ{\min_{\m{X}} \lambda\norm{\m{X}}_{2,q}^q+\frac{1}{2}\frobn{\m{A}\m{X}-\m{Y}}^2. \label{eq:l2qregmin}}
To locally solve \eqref{eq:l2qmin} and \eqref{eq:l2qregmin}, the FOCUSS algorithm was extended in \cite{cotter2005sparse} to this multiple snapshot case to obtain M-FOCUSS. For \eqref{eq:l2qmin}, as in the single snapshot case, M-FOCUSS solves the following weighted least squares problem in each iteration:
\equ{\min_{\m{X}} \sum_n w_n \norm{\m{X}}_{2}^2, \st \m{A}\m{X} = \m{Y}, \label{eq:MFOCUSS} }
where the weight coefficients $w_n = \norm{\m{X}}_{2}^{q-2}$ are updated based on the latest solution $\m{X}$. Since \eqref{eq:MFOCUSS} can be solved in closed form, an iterative algorithm can be implemented. Note that \eqref{eq:l2qregmin} can be similarly solved as \eqref{eq:l2qmin}.

To circumvent the need for tuning the regularization parameter $\lambda$ in \eqref{eq:l2qregmin}, we can extend SLIM \cite{tan2011sparse} to this multiple snapshot case as follows. Assume that the noise is i.i.d.~Gaussian with variance $\eta$ and that $\m{X}$ follows a prior distribution with the pdf given by
\equ{f\sbra{\m{X}} \propto \prod_{n} e^{-\frac{2}{q}\sbra{\twon{\m{X}_n}^q-1}}. \label{eq:fX}}
In \eqref{eq:fX}, the $\ell_2$ norm is performed on each row of $\m{X}$ to exploit the joint sparsity. As in the single snapshot case, SLIM computes the MAP estimator of $\m{X}$ which is the solution of the following $\ell_{2,q}$ optimization problem:
\equ{\min_{\m{x}} ML\log \eta+\eta^{-1}\frobn{\m{A}\m{X}-\m{Y}}^2 + \frac{2}{q}\norm{\m{X}}_{2,q}^q. \label{eq:mslim}}
Using a reweighting technique similar to that in M-FOCUSS, we can iteratively update $\m{X}$ and $\eta$ in closed form and obtain the multiple snapshot version of SLIM. Finally, note that the dimensionality reduction technique presented in  Subsection \ref{sec:dimred_D} can also be applied to the $\ell_{2,q}$ optimization problems in \eqref{eq:l2qmin}, \eqref{eq:l2qregmin} and \eqref{eq:mslim} for algorithm acceleration \cite{yang2016enhancing}.

\subsection{Sparse Iterative Covariance-based Estimation (SPICE)} \label{sec:SPICE}

\subsubsection{Generalized Least Squares}
To introduce SPICE, we first present the so-called generalized least squares method. To derive it, we need some statistical assumptions on the sources $\m{X}$ and the noise $\m{E}$. We assume that $\lbra{\m{x}(1),\dots,\m{x}(L), \m{e}(1),\dots,\m{e}(L)}$ are uncorrelated with one another and
{\lentwo\equa{\bE \m{e}(t)\m{e}^H(t)
&=& \sigma\m{I},\\ \bE\m{x}(t)\m{x}^H(t)
&=& \m{P}= \diag\sbra{\m{p}}, \quad t=1,\dots,L,
}}where $\sigma\geq0$ and $p_n\geq 0$, $n=1,\dots,\overline{N}$ are the parameters of interest (note that the following derivations also apply to the case of heteroscedastic noise with $\bE \m{e}(t)\m{e}^H(t)= \diag\sbra{\sigma_1,\dots,\sigma_M}$ with no or minor modifications). It follows that the snapshots $\lbra{\m{y}(1),\dots,\m{y}(L)}$ are uncorrelated with one another and have the following covariance matrix:
\equ{\m{R}\sbra{\m{p},\sigma} = \bE\m{y}(t)\m{y}^H(t) = \m{A}\m{P}\m{A}^H + \sigma\m{I}= \m{A}'\m{P}'\m{A}'^H, \label{eq:RA1PAH}}
where $\m{A}'= \mbra{\m{A}, \m{I}}$ and $\m{P}'= \diag\sbra{\m{P}, \sigma\m{I}}$.
Note that $\m{R}$ is linear in $\sbra{\m{p},\sigma}$.
Let $\widetilde{\m{R}} = \frac{1}{L}\m{Y}\m{Y}^H$ be the sample covariance matrix. Given $\widetilde{\m{R}}$, to estimate $\m{R}$ (in fact, the parameters $\m{p}$ and $\sigma$ therein), we consider the generalized least squares method. First, we vectorize $\widetilde{\m{R}}$ and let $\widetilde{\m{r}} = \vect\sbra{\widetilde{\m{R}}}$ and $\m{r} = \vect\sbra{\m{R}}$. Since $\widetilde{\m{R}}$ is an unbiased estimate of the data covariance matrix, it holds that
\equ{\bE\widetilde{\m{r}} = \m{r}.}
Moreover, we can calculate the covariance matrix of $\widetilde{\m{r}}$, which is given by (see, e.g., \cite{ottersten1998covariance})
\equ{\cov\sbra{\widetilde{\m{r}}} = \frac{1}{L}\m{R}^T\otimes \m{R}, \label{eq:covofcov}}
where $\otimes$ denotes the Kronecker product.
In the generalized least squares method we minimize the following criterion \cite{anderson1984multivariate,ottersten1998covariance}:
\equ{\begin{split}
&\frac{1}{L}\sbra{\widetilde{\m{r}} - \bE\widetilde{\m{r}}}^H\cov^{-1}\sbra{\widetilde{\m{r}}} \sbra{\widetilde{\m{r}} - \bE\widetilde{\m{r}}} \\
&= \sbra{\widetilde{\m{r}} - \m{r}}^H \mbra{\m{R}^{-T}\otimes \m{R}^{-1}} \sbra{\widetilde{\m{r}} - \m{r}}\\
&= \vect^H\sbra{\widetilde{\m{R}} - \m{R}} \mbra{\m{R}^{-T}\otimes \m{R}^{-1}} \vect\sbra{\widetilde{\m{R}} - \m{R}}\\
&= \vect^H\sbra{\widetilde{\m{R}} - \m{R}} \vect\lbra{\m{R}^{-1} \sbra{\widetilde{\m{R}} - \m{R}} \m{R}^{-1} } \\
&= \tr\lbra{\sbra{\widetilde{\m{R}} - \m{R}} \m{R}^{-1} \sbra{\widetilde{\m{R}} - \m{R}} \m{R}^{-1}}\\
&= \frobn{\m{R}^{-\frac{1}{2}} \sbra{\widetilde{\m{R}} - \m{R}} \m{R}^{-\frac{1}{2}}}^2. \end{split} \label{eq:generalLS}}
The criterion in \eqref{eq:generalLS} has good statistical properties; for example, under certain conditions it provides a large-snapshot maximum likelihood (ML) estimator of the parameters $\sbra{\m{p},\sigma}$ of interest. Unfortunately, \eqref{eq:generalLS} is nonconvex in $\m{R}$ and hence nonconvex in $\sbra{\m{p},\sigma}$. Therefore, there is no guarantee that it can be globally minimized.

Inspired by \eqref{eq:generalLS}, the following convex criterion was proposed in \cite{li1999computationally}:
\equ{\frobn{\widetilde{\m{R}}^{-\frac{1}{2}} \sbra{\widetilde{\m{R}} - \m{R}} \widetilde{\m{R}}^{-\frac{1}{2}}}^2, \label{eq:li1999}}
in which $\cov\sbra{\widetilde{\m{r}}}$ in \eqref{eq:covofcov} is replaced by its consistent estimate, viz.~$\frac{1}{L}\widetilde{\m{R}}^T\otimes \widetilde{\m{R}}$. The resulting estimator remains a large-snapshot ML estimator. But it is only usable in the case of $L\geq M$ when $\widetilde{\m{R}}$ is nonsingular. The SPICE approach, which is discussed next, relies on \eqref{eq:generalLS} or \eqref{eq:li1999} (see below for details).

\subsubsection{SPICE}
The SPICE algorithm has been proposed and studied in \cite{stoica2011new,stoica2011spice,stoica2012spice,stoica2014weighted}.
In SPICE, the following covariance fitting criterion is adopted in the case of $L\geq M$ whenever $\widetilde{\m{R}}$ is nonsingular:
\equ{h_1 = \frobn{\m{R}^{-\frac{1}{2}} \sbra{\widetilde{\m{R}} - \m{R}} \widetilde{\m{R}}^{-\frac{1}{2}}}^2. \label{eq:h1}}
In the case of $L<M$, in which $\widetilde{\m{R}}$ is singular, the following criterion is used instead:
\equ{h_2=\frobn{\m{R}^{-\frac{1}{2}} \sbra{\widetilde{\m{R}} - \m{R}}}^2. \label{eq:h2}}

A simple calculation shows that
\equ{\begin{split}h_1
&= \tr\sbra{\m{R}^{-1}\widetilde{\m{R}}}+ \tr\sbra{\widetilde{\m{R}}^{-1}\m{R}}-2M \\
&= \tr\sbra{\widetilde{\m{R}}^{\frac{1}{2}}\m{R}^{-1}\widetilde{\m{R}}^{\frac{1}{2}}} + \sum_{n=1}^{\overline{N}} \sbra{\m{a}^H_n\widetilde{\m{R}}^{-1}\m{a}_n} p_n + \tr\sbra{\widetilde{\m{R}}^{-1}}\sigma - 2M. \end{split} \label{eq:h1_2}}
It follows that the optimization problem of SPICE based on $h_1$ can be equivalently formulated as:
\equ{\min_{\m{p}\succeq\m{0},\sigma>0} \tr\sbra{\widetilde{\m{R}}^{\frac{1}{2}}\m{R}^{-1}\widetilde{\m{R}}^{\frac{1}{2}}} + \sum_{n=1}^{\overline{N}} \sbra{\m{a}^H_n\widetilde{\m{R}}^{-1}\m{a}_n} p_n + \tr\sbra{\widetilde{\m{R}}^{-1}}\sigma. \label{eq:h1min}}
Note that the first term of the above objective function can be written as:
\equ{\tr\sbra{\widetilde{\m{R}}^{\frac{1}{2}}\m{R}^{-1}\widetilde{\m{R}}^{\frac{1}{2}}} = \min\tr\sbra{\m{X}}, \st \begin{bmatrix}\m{X} & \widetilde{\m{R}}^{\frac{1}{2}} \\ \widetilde{\m{R}}^{\frac{1}{2}} & \m{R} \end{bmatrix}\geq \m{0}}
and hence it is convex in $\m{R}$ as well as in $\sbra{\m{p},\sigma}$. It follows that $h_1$ is convex in $\sbra{\m{p},\sigma}$. Similarly, it holds for $h_2$ that
\equ{\begin{split}h_2
&= \tr\sbra{\m{R}^{-1}\widetilde{\m{R}}^2}+\tr\sbra{\m{R}}-2\tr\sbra{\widetilde{\m{R}}}\\
&= \tr\sbra{\widetilde{\m{R}}\m{R}^{-1}\widetilde{\m{R}}} + \sum_{n=1}^{\overline{N}} \twon{\m{a}_n}^2p_n + M\sigma -2\tr\sbra{\widetilde{\m{R}}}. \end{split}}
The resulting optimization problem is given by:
\equ{\min_{\m{p}\succeq\m{0},\sigma>0} \tr\sbra{\widetilde{\m{R}}\m{R}^{-1}\widetilde{\m{R}}} + \sum_{n=1}^{\overline{N}} \twon{\m{a}_n}^2p_n + M\sigma, \label{eq:h2min}}
which is in a form similar to \eqref{eq:h1min} and therefore is convex as well. Although both \eqref{eq:h1min} and \eqref{eq:h2min} can be cast as second order cone programs (SOCP) or semidefinite programs (SDP) (as shown above), for which standard solvers are available, they cannot be easily solved in practice based on these formulations due to the high dimensionality of the problem (note that $\overline{N}$ can be very large).

We now introduce the SPICE algorithm to cope with the aforementioned computational problems. We focus on the case of $L\geq M$ but similar results also hold in the case of $L< M$. The main result hat underlies SPICE is the following reformulation (see, e.g., \cite{stoica2011spice}):
\equ{\tr\sbra{\widetilde{\m{R}}^{\frac{1}{2}}\m{R}^{-1}\widetilde{\m{R}}^{\frac{1}{2}}} = \min_{\m{C}} \tr\sbra{\m{C}^H \m{P}'^{-1}\m{C}}, \st \m{A}'\m{C} = \widetilde{\m{R}}^{\frac{1}{2}} \label{eq:SPICE_axil}}
and showing that the solution of $\m{C}$ is given by
\equ{\m{C} = \m{P}'\m{A}'^H\m{R}^{-1}\widetilde{\m{R}}^{\frac{1}{2}}. \label{eq:C}}
Inserting \eqref{eq:SPICE_axil} into \eqref{eq:h1min}, we see that the minimization of $h_1$ can be equivalently written as:
\equ{\begin{split}
&\min_{\m{C},\m{p}\succeq\m{0},\sigma>0} \tr\sbra{\m{C}^H \m{P}'^{-1}\m{C}} + \sum_{n=1}^{\overline{N}} \sbra{\m{a}^H_n\widetilde{\m{R}}^{-1}\m{a}_n} p_n + \tr\sbra{\widetilde{\m{R}}^{-1}}\sigma,\\
&\st \m{A}'\m{C} = \widetilde{\m{R}}^{\frac{1}{2}}. \end{split} \label{eq:h1min2}}
Based on \eqref{eq:h1min2}, the SPICE algorithm is derived by iteratively solving for $\m{C}$ and for $\sbra{\m{p},\sigma}$. First, $\m{p}$ and $\sigma$ are initialized using, e.g., the conventional beamformer. Then, $\m{C}$ is updated using \eqref{eq:C} with the latest estimates of $\m{p}$ and $\sigma$. After that, we update $\m{p}$ and $\sigma$ by fixing $\m{C}$ and repeat the process until convergence. Note that $\sbra{\m{p},\sigma}$ can also be determined in closed form, for fixed $\m{C}$. To see this, observe that
\equ{\tr\sbra{\m{C}^H \m{P}'^{-1}\m{C}} = \sum_{n=1}^{\overline{N}} \frac{\twon{\m{C}_n}^2}{p_n} + \frac{\sum_{n={\overline{N}+1}}^{\overline{N}+M} \twon{\m{C}_n}^2}{\sigma}, \label{eq:Csplit}}
where $\m{C}_n$ denotes the $n$th row of $\m{C}$.
Inserting \eqref{eq:Csplit} in \eqref{eq:h1min2}, the solutions $p_n$, $n=1,\dots,\overline{N}$ and $\sigma$ can be obtained as:
{\lentwo\equa{p_n
&=& \frac{\twon{\m{C}_n}}{\sqrt{\m{a}^H_n\widetilde{\m{R}}^{-1}\m{a}_n}}, \quad n=1,\dots,\overline{N}, \label{eq:pn_SPICE} \\ \sigma
&=& \sqrt{ \frac{\sum_{n={\overline{N}+1}}^{\overline{N}+M} \twon{\m{C}_n}^2}{\tr\sbra{\widetilde{\m{R}}^{-1}}}}. \label{eq:sigma_SPICE}
}}Since the problem is convex and the objective function is monotonically decreasing in the iterative process, the SPICE algorithm is expected to converge to the global minimum. Note that the main computational cost of SPICE is for computing $\m{C}$ in \eqref{eq:pn_SPICE} and \eqref{eq:sigma_SPICE} according to \eqref{eq:C}. It follows that SPICE has a per-iteration computational complexity of $O\sbra{\overline{N}M^2}$ given that $\overline{N}>M$. Note also that, as compared to the original SPICE algorithm in \cite{stoica2011new,stoica2011spice}, a certain normalization step of the power estimates is removed here to avoid a global scaling ambiguity of the final power estimates (see also \cite{stoica2012spice}).

We next discuss how the signal sparsity and joint sparsity are exploited in SPICE. Inserting \eqref{eq:pn_SPICE} and \eqref{eq:sigma_SPICE} into \eqref{eq:h1min2}, we see that the SPICE problem is equivalent to:
\equ{\begin{split}
&\min_{\m{C}} \sum_{n=1}^{\overline{N}} \sqrt{\m{a}^H_n\widetilde{\m{R}}^{-1}\m{a}_n} \twon{\m{C}_n} + \sqrt{ \tr\sbra{\widetilde{\m{R}}^{-1}} \sum_{n={\overline{N}+1}}^{\overline{N}+M} \twon{\m{C}_n}^2},\\
&\st \m{A}'\m{C} = \widetilde{\m{R}}^{\frac{1}{2}}. \end{split} \label{eq:h1min3}}
Note that the first term of the objective function in \eqref{eq:h1min3} is nothing but a weighted sum of the $\ell_2$ norm of the first $\overline{N}$ rows of $\m{C}$ (a.k.a.~a weighted $\ell_{2,1}$ norm) and thus promotes the row-sparsity of $\m{C}$. Therefore, it is expected that most of $\twon{\m{C}_n}$, $n=1,\dots,\overline{N}$ will be equal to zero. This together with \eqref{eq:pn_SPICE} implies that most of $p_n$, $n=1,\dots,\overline{N}$ will be zero and so sparsity is achieved. The joint sparsity is achieved by the assumption that the entries in each row of $\m{X}$ have identical variance $p_n$.

SPICE is related to the square-root LASSO in the single snapshot case. In particular, it was shown in \cite{rojas2013note,babu2014connection} that the SPICE problem in \eqref{eq:h2min} is equivalent to
\equ{\min_{\m{x}} \onen{\m{x}} + \twon{\m{y}-\m{A}\m{x}},}
which is nothing but the square-root LASSO in \eqref{eq:SR_lasso} with $\tau=1$.

Finally, note that the decomposition in \eqref{eq:RA1PAH} is not unique in general (see Corollary \ref{cor:Tinsignalnoise} for detail). A direct consequence of this observation is that the SPICE algorithm generally does not provide unique estimates of $\m{p}$ and $\sigma$ \cite{yang2014discretization}. This problem will be fixed in the gridless version of SPICE that will be introduced in Sections \ref{sec:GLS_SMV} and \ref{sec:GLS_MMV}.

\subsection{Maximum Likelihood Estimation}
The joint sparsity can also be exploited in the MLE, in a similar way as in SPICE. Assume that $\m{x}(t)$, $t=1,\dots,L$ are i.i.d. multivariate Gaussian distributed with mean zero and covariance $\m{P}=\diag\sbra{\m{p}}$. Also assume i.i.d. Gaussian noise with variance $\sigma$ and that $\m{X}$ and $\m{E}$ are independent. Then, we have that the data snapshots $\m{y}(t)$, $t=1,\dots,L$ are i.i.d. Gaussian distributed with mean zero and covariance $\m{R} = \m{A}\m{P}\m{A}^H +\sigma\m{I}$. The negative log-likelihood function associated with $\m{Y}$ is therefore given by
\equ{\cL\sbra{\m{p},\sigma} = \log\abs{\m{R}} + \tr\sbra{\m{R}^{-1}\widetilde{\m{R}}}}
where $\widetilde{\m{R}}$ is the sample covariance matrix as defined in the preceding subsection. It follows that the parameters $\m{p}$ and $\sigma$ can be estimated by solving the problem:
\equ{\min_{\m{p},\sigma}\log\abs{\m{R}} + \tr\sbra{\m{R}^{-1}\widetilde{\m{R}}}. \label{eq:negloglike_MMV}}
Owing to the analogy between \eqref{eq:negloglike_MMV} and its single snapshot counterpart (see \eqref{eq:negloglike}), it should come as no surprise that the algorithms developed for the single snapshot case can also be applied to the multiple snapshot case with minor modifications. As an example, using a similar MM procedure, LIKES \cite{stoica2012spice} can be extended to this multiple snapshot case.

The multiple snapshot MLE  has been studied within the framework of SBL or Bayesian compressed sensing (see, e.g., \cite{wipf2007empirical,ji2009multi,liu2012efficient,carlin2013directions}). To exploit the joint sparsity of $\m{x}(t)$, $t=1,\dots,L$, an identical sparse prior was assumed for all of them. The EM algorithm can also be used to perform parameter estimation via minimizing the objective in \eqref{eq:negloglike_MMV}.


\subsection{Remarks on Grid Selection}
Based on the on-grid data model in \eqref{eq:datamodel_D}, we have introduced several sparse optimization methods for DOA estimation in the preceding subsections. While we have focused on how the temporal redundancy or joint sparsity of the snapshots can be exploited, a major problem that remains unresolved is grid selection, i.e., the selection of the grid points $\overline{\m{\theta}}$ in the data model \eqref{eq:datamodel_D}. Since discrete grid points are used to approximate the continuous DOA domain, intuitively, the grid should be chosen as fine as possible to improve the approximation accuracy. However, this can be problematic in two respects. Theoretically, a dense grid results in highly coherent atoms and hence few DOAs can be estimated according to the analysis based on the mutual coherence and RIP. Moreover, a too dense grid is not acceptable from an algorithmic viewpoint, since it will dramatically increase the computational complexity of an algorithm and also might cause slow convergence and numerical instability due to nearly identical adjacent atoms \cite{stoica2012sparse,austin2013dynamic}.

To overcome the theoretical bottleneck mentioned above, the so-called coherence-inhibiting techniques have been proposed and incorporated in the existing sparse optimization methods to avoid solutions with closely located atoms \cite{duarte2013spectral,fannjiang2012coherence}. Additionally, with the development of recent gridless sparse methods it was shown that the local coherence between nearly located atoms actually does not matter for the convex relaxation approach if the true DOAs are appropriately separated \cite{candes2013towards} (details will be provided in Section \ref{sec:gridless}).

To improve the computational speed and accuracy, a heuristic grid refinement strategy was proposed that suggests using a coarse grid at the initial stage and then gradually refining it based on the latest estimates of the DOAs \cite{malioutov2005sparse}. A grid selection approach was also proposed by quantifying the similarity between the atoms in a grid bin \cite{stoica2012sparse}. In particular, suppose that in \eqref{eq:datamodel_D} the DOA interval $\sbra{\theta_n-\frac{r}{2}, \theta_n+\frac{r}{2}}$ is approximated by some grid point $\theta_n$, where $r>0$ denotes the grid interval. Then, on this interval the similarity is measured by the rank of the matrix defined by
\equ{\m{C}_n = \int_{\theta_n-\frac{r}{2}}^{\theta_n+\frac{r}{2}} \m{a}\sbra{v}\m{a}^H\sbra{v}\text{d}v. }
If $\rank\sbra{\m{C}_n} \approx 1$, then it is said that the grid is dense enough; otherwise, a denser grid is required. However, a problem with this criterion is that it can only be evaluated heuristically.

In summary, grid selection is an important problem that affects the practical DOA estimation accuracy, the computational speed and the theoretical analysis. A completely satisfactory solution to this problem within the framework of the on-grid methods seems hard to obtain since there always exist mismatches between the adopted discrete grid points and the true continuous DOAs.

\section{Off-Grid Sparse Methods} \label{sec:offgrid}
We have discussed the on-grid sparse methods in the preceding section, for which grid selection is a difficult problem and will inevitably result in grid mismatch. To resolve the grid mismatch problem, in this section we turn to the so-called off-grid sparse methods. In these methods, a grid is still required to perform sparse estimation but, unlike the on-grid methods, the DOA estimates are not restricted to be on the grid. We will mainly talk about two kinds of off-grid sparse methods: one is based on a fixed grid and joint estimation of the sparse signal and the grid offset, and the other relies on a dynamic grid. The main focus of the following discussions is on how to solve the grid mismatch.

\subsection{Fixed Grid}

\subsubsection{Data Model}
With a fixed grid $\overline{\m{\theta}}=\lbra{\overline{\theta}_1,\dots,\overline{\theta}_{\overline{N}}}$, an off-grid data model can be introduced as follows \cite{zhu2011sparsity}. Suppose without loss of generality that $\overline{\m{\theta}}$ consists of uniformly spaced grid points with the grid interval $r = \theta_2-\theta_1\propto \frac{1}{\overline{N}}$. For any DOA $\theta_k$, suppose $\overline{\theta}_{n_k}$ is the nearest grid point with $\abs{\theta_k-\overline{\theta}_{n_k}}\leq \frac{r}{2}$. We approximate the steering vector/atom $\m{a}\sbra{\theta_k}$ using a first-order Taylor expansion:
\equ{\m{a}\sbra{\theta_k} \approx \m{a}\sbra{\overline{\theta}_{n_k}} + \m{b}\sbra{\overline{\theta}_{n_k}}\sbra{\theta_k-\overline{\theta}_{n_k}},}
where $\m{b}\sbra{\overline{\theta}_{n_k}}= \m{a}'\sbra{\overline{\theta}_{n_k}}$ (the derivative of $\m{a}\sbra{\theta}$). Similar to \eqref{eq:datamodel_D}, we then obtain the following data model:
\equ{\m{Y} = \m{\Phi}\sbra{\m{\beta}} \m{X} +\m{E}, \label{eq:datamodel_D_OG}}
where $\m{\Phi}\sbra{\m{\beta}} = \m{A} + \m{B}\diag\sbra{\m{\beta}}$, $\m{A}=\mbra{\m{a}\sbra{\overline{\theta}_1},\dots,\m{a}\sbra{\overline{\theta}_{\overline{N}}}}$ is as defined previously, $\m{B}=\mbra{\m{b}\sbra{\overline{\theta}_1},\dots,\m{b}\sbra{\overline{\theta}_1}}$ and $\m{\beta}=\mbra{\beta_1,\dots,\beta_{\overline{N}}}\in\mbra{-\frac{r}{2},\frac{r}{2}}^{\overline{N}}$, with
{\lentwo\equa{x_n(t)
&=& \left\{\begin{array}{ll} s_k(t), & \text{ if } \overline{\theta}_{n} = \overline{\theta}_{n_k}; \\ 0, & \text{ otherwise, } \end{array}\right.\\ \beta_n
&=& \left\{\begin{array}{ll} \theta_k-\overline{\theta}_{n_k}, & \text{ if } \overline{\theta}_{n} = \overline{\theta}_{n_k}; \\ 0, & \text{ otherwise, } \end{array}\right.
\quad n=1,\dots,\overline{N},\; t=1,\dots,L.
}}It follows from \eqref{eq:datamodel_D_OG} that the DOA estimation problem can be formulated as sparse representation with uncertain parameters. In particular, once the row-sparse matrix $\m{X}$ and $\m{\beta}$ can be estimated from $\m{Y}$, then the DOAs can be estimated using the row-support of $\m{X}$ shifted by the offset $\m{\beta}$.

Compared to the on-grid model in \eqref{eq:datamodel_D}, the additional grid offset parameters $\beta_n$, $n=1,\dots,\overline{N}$ are introduced in the off-grid model in \eqref{eq:datamodel_D_OG}. Note that \eqref{eq:datamodel_D_OG} reduces to \eqref{eq:datamodel_D} if $\m{\beta}=\m{0}$. While \eqref{eq:datamodel_D} is based on a zeroth order approximation of the true data model, which causes grid mismatch, \eqref{eq:datamodel_D_OG} can be viewed as a first-order approximation in which the grid mismatch can be partially compensated by jointly estimating the grid offset. Based on the off-grid model in \eqref{eq:datamodel_D_OG}, several methods have been proposed for DOA estimation by jointly estimating $\m{X}$ and $\m{\beta}$ (see, e.g., \cite{zhu2011sparsity,zheng2011directions,yang2012robustly, yang2013off, tan2014joint, zhang2014off, lasserre2015bayesian, si2015off, chen2015modified, wang2016novel, wu2016direction,zhao2016array, han2015two,bernhardt2016compressed, fei2016off, shen2017underdetermined, yang2016efficient,sun2016iterative}). Out of these methods we present the $\ell_1$-based optimization and SBL in the next subsections.


\subsubsection{$\ell_1$ Optimization}
Inspired by the standard sparse signal recovery approach, several $\ell_1$ optimization methods have been proposed to solve the off-grid DOA estimation problem. In \cite{zhu2011sparsity}, a sparse total least-squares (STLS) approach was proposed which, in the single snapshot case, solves the following LASSO-like problem:
\equ{\min_{\m{x},\m{\beta}} \lambda_1 \onen{\m{x}} + \frac{1}{2}\twon{\m{y} - \mbra{\m{A}+\m{B}\diag\sbra{\m{\beta}}}\m{x}}^2 + \lambda_2\twon{\m{\beta}}^2,\label{eq:STLS}}
where $\lambda_1$ and $\lambda_2$ are regularization parameters. In \eqref{eq:STLS}, the prior information that $\m{\beta}\in\mbra{-\frac{r}{2},\frac{r}{2}}^{\overline{N}}$ is not used. To heuristically control the magnitude of $\m{\beta}$, its power is also minimized. Note that the problem in \eqref{eq:STLS} is nonconvex due to the bilinear term $\diag\sbra{\m{\beta}}\m{x}$. To solve \eqref{eq:STLS}, an alternating algorithm is adopted, iteratively solving for $\m{x}$ and $\m{\beta}$. Moreover, \eqref{eq:STLS} can be easily extended to the multiple snapshot case by using $\ell_{2,1}$ optimization to exploit the joint sparsity in $\m{X}$ (as in the preceding section). A difficult problem of these methods is parameter tuning, i.e., how to choose $\lambda_1$ and $\lambda_2$.

To exploit the prior knowledge that $\m{\beta}\in\mbra{-\frac{r}{2},\frac{r}{2}}^{\overline{N}}$, the following BPDN-like formulation was proposed in the single snapshot case \cite{yang2012robustly}:
\equ{\min_{\m{x},\;\m{\beta}\in \mbra{-\frac{r}{2},\frac{r}{2}}^{\overline{N}}} \onen{\m{x}}, \st \twon{\m{y} - \mbra{\m{A}+\m{B}\diag\sbra{\m{\beta}}}\m{x}}\leq \eta. \label{eq:BPDN_beta}}
Note that \eqref{eq:BPDN_beta} can be easily extended to the multiple snapshot case by using the $\ell_{2,1}$ norm. In \eqref{eq:BPDN_beta}, $\eta$ can be set according to information about the noise level and a possible estimate of the modeling error. Similar to \eqref{eq:STLS}, \eqref{eq:BPDN_beta} is nonconvex, and a similar alternating algorithm can be implemented to monotonically decrease the value of the objective function. Note that if $\m{\beta}$ is initialized as a zero vector, then the first iteration coincides with the standard BPDN.

It was shown in \cite{yang2012robustly} that if the matrix $\mbra{\m{A}, \m{B}}$ satisfies a certain RIP condition, then both $\m{x}$ and $\m{\beta}$ can be stably reconstructed, as in the standard sparse representation problem, with the reconstruction error being proportional to the noise level $\eta$. This means that in the ideal case of $\eta=0$ (assuming there is no noise or modeling error), $\m{x}$ and $\m{\beta}$ can be exactly recovered. A key step in showing this result is reformulating \eqref{eq:BPDN_beta} as
\equ{\min_{\m{x},\;\m{\beta}\in \mbra{-\frac{r}{2},\frac{r}{2}}^{\overline{N}}} \onen{\m{x}}, \st \twon{\m{y} - \begin{bmatrix} \m{A} & \m{B} \end{bmatrix} \begin{bmatrix}\m{x} \\ \m{\beta}\odot\m{x}\end{bmatrix}}\leq \eta, \label{eq:BPDN_beta2}}
where $\odot$ denotes the element-wise product. Although the RIP condition cannot be easily applied to the case of dense grid, the aforementioned result implies, to some extent, the superior performance of this off-grid optimization method as compared to the on-grid approach.

Following the lead of \cite{yang2012robustly}, a convex optimization method was proposed in \cite{tan2014joint} by exploiting the joint sparsity of $\m{x}$ and $\m{v}= \m{\beta}\odot\m{x}$. In particular, the following problem was formulated:
\equ{\min_{\m{x},\;\m{v}} \lambda \twonen{\begin{bmatrix}\m{x} & \m{v} \end{bmatrix}} + \frac{1}{2}\twon{\m{y} - \begin{bmatrix} \m{A} & \m{B} \end{bmatrix} \begin{bmatrix}\m{x} \\ \m{v}\end{bmatrix}}^2, \label{eq:BPDN_JP}}
which is equivalent to the following problem, for appropriate parameter choices:
\equ{\min_{\m{x},\;\m{v}} \twonen{\begin{bmatrix}\m{x} & \m{v} \end{bmatrix}}, \st \twon{\m{y} - \begin{bmatrix} \m{A} & \m{B} \end{bmatrix} \begin{bmatrix}\m{x} \\ \m{v}\end{bmatrix}}\leq \eta. \label{eq:BPDN_JP2}}
This approach is advantageous in that it is convex and can be globally solved in a polynomial time, with similar theoretical guarantees as provided in \cite{yang2012robustly}. However, it is worth noting that the prior knowledge on $\m{\beta}$ cannot be exploited in this method. Additionally, the obtained solution for $\beta_n=\frac{v_n}{x_n}$ might not even be real. To resolve this problem, \cite{tan2014joint} suggests a two-stage solution: 1) obtain $\m{x}$ from \eqref{eq:BPDN_JP}, and 2) fix $\m{x}$ and solve for $\m{\beta}$ by minimizing $\twon{\m{y} - \mbra{\m{A}+\m{B}\diag\sbra{\m{\beta}}}\m{x}}$.

\subsubsection{Sparse Bayesian Learning}
A systematic approach to off-grid DOA estimation, called off-grid sparse Bayesian inference (OGSBI), was proposed in \cite{yang2013off} within the framework of SBL in the multiple snapshot case. In order to estimate the additional parameter $\m{\beta}$, it is assumed that $\beta_n$, $n=1,\dots,\overline{N}$ are i.i.d. uniformly distributed on the interval $\mbra{-\frac{r}{2},\frac{r}{2}}$. In the resulting EM algorithm, the posterior distribution of the row-sparse signal $\m{X}$ can be computed in the expectation step as in the standard SBL. In the maximization step, $\m{\beta}$ is also updated, in addition to updating the power $\m{p}$ of the row-sparse signal and the noise variance $\sigma$. As in the standard SBL, the likelihood is guaranteed to monotonically increase and hence convergence of the algorithm can be obtained.

\subsection{Dynamic Grid}

\subsubsection{Data Model}
The data model now uses a dynamic grid $\overline{\m{\theta}}$ in the sense that the grid points $\theta_n$, $n=1,\dots,\overline{N}$ are not fixed:
\equ{\m{Y} = \m{A}\sbra{\overline{\m{\theta}}} \m{X} + \m{E}. \label{eq:datamodel_D_DG}}
For this model we need to jointly estimate the row-sparse matrix $\m{X}$ and the grid $\overline{\m{\theta}}$. Once they are obtained, the DOAs are estimated using those grid points of $\overline{\m{\theta}}$ corresponding to the nonzero rows of $\m{X}$. Since $\overline{\theta}_n$'s are estimated from the data and can be any values in the continuous DOA domain, this off-grid data model is accurate and does not suffer from grid mismatch. However, the difficulty lies in designing an algorithm for the joint estimation of $\m{X}$ and $\overline{\m{\theta}}$, due to the nonlinearity of the mapping $\m{a}\sbra{\theta}$. Note that the following algorithms that we will introduce are designated as off-grid methods, instead of gridless, since grid selection remains involved in them (e.g., choice of $\overline{N}$ and initialization of $\overline{\m{\theta}}$), which affects the computational speed and accuracy of the algorithms.

\subsubsection{Algorithms}
Several algorithms have been proposed based on the data model in \eqref{eq:datamodel_D_DG}. The first class is within the framework of SBL (see, e.g., \cite{shutin2011sparse,shutin2013incremental,hu2012compressed,hu2013fast}). But instead of using the EM algorithm as previously, a variational EM algorithm (or variational Bayesian inference) is typically exploited to carry out the sparse signal and parameter estimation. The reason is that the posterior distribution of the sparse vector $\m{x}$ usually cannot be explicitly computed here, and that distribution is required by the EM but not by the variational EM. The main difficulty of these algorithms is the update of $\m{\theta}$ due to the strong nonlinearity. Because closed-form solutions are not available, only numerical approaches can be used.

Another class of methods uses $\ell_1$ optimization. In the single snapshot case, as an example, the paper \cite{austin2013dynamic} used a small $\overline{N}\geq K$ and attempted to solve the following $\ell_1$ optimization problem by iteratively updating $\m{x}$ and $\m{\theta}$:
\equ{\min_{\m{x},\overline{\m{\theta}}} \lambda \onen{\m{x}} + \frac{1}{2}\twon{\m{y} - \m{A}\sbra{\overline{\m{\theta}}} \m{x}}^2. \label{eq:BP_DG}}
To avoid the possible convergence of some $\overline{\theta}_n$'s to the same value, an additional (nonconvex) term $g\sbra{\overline{\m{\theta}}}$ is included to penalize closely located parameters:
\equ{\min_{\m{x},\overline{\m{\theta}}} \lambda_1 \onen{\m{x}} + \frac{1}{2}\twon{\m{y} - \m{A}\sbra{\overline{\m{\theta}}} \m{x}}^2 + \lambda_2 g\sbra{\overline{\m{\theta}}}, \label{eq:BP_DG2} }
where $\lambda_1$ and $\lambda_2$ are regularization parameters that need be tuned. Note that both \eqref{eq:BP_DG} and \eqref{eq:BP_DG2} are nonconvex. Even for given $\m{x}$, it is difficult to solve for $\overline{\m{\theta}}$. Moreover, parameter tuning is tricky. Note that $\ell_q$, $q<1$ optimization was also considered in \cite{austin2013dynamic} to enhance sparsity but it suffers from similar problems.

To promote sparsity, similar to $\ell_1$ optimization, the following problem was proposed in \cite{fang2014super,fang2016super}:
\equ{\min_{\m{x},\overline{\m{\theta}}} \sum_{n=1}^{\overline{N}} \lambda\log\sbra{\abs{x_n}^2+\epsilon} + \twon{\m{y} - \m{A}\sbra{\overline{\m{\theta}}} \m{x}}^2. \label{eq:fangjun}}
To locally solve \eqref{eq:fangjun}, $\m{x}$ and $\overline{\m{\theta}}$ are iteratively updated. To solve for $\m{x}$ in closed form, the first term of the objective in \eqref{eq:fangjun} is replaced by a quadratic surrogate function that guarantees the decrease of the objective. The gradient descent method is then used to solve for $\overline{\m{\theta}}$. While it is generally difficult to choose $\lambda$, \cite{fang2016super} suggested setting $\lambda$ proportional to the inverse of the noise variance, leaving a constant coefficient to be tuned.

To conclude, in this section we introduced several off-grid sparse optimization methods for DOA estimation. By adopting a dynamic grid or estimating the grid offset jointly with the sparse signal, the grid mismatch encountered in the on-grid sparse methods can be overcome. However, this introduces more variables that need be estimated and complicates the algorithm design. As a consequence, most of the presented algorithms involve nonconvex optimization and thus only local convergence can be guaranteed (except for the algorithm in \cite{tan2014joint}). Moreover, few theoretical guarantees can be obtained for most of the algorithms (see, however, \cite{yang2012robustly,tan2014joint}).

\section{Gridless Sparse Methods} \label{sec:gridless}
In this section, we present several recent DOA estimation approaches that are designated as the gridless sparse methods. As their name suggests, these methods do not require gridding of the direction domain. Instead, they directly operate in the continuous domain and therefore can completely resolve the grid mismatch problem. Moreover, they are convex and have strong theoretical guarantees. However, so far, this kind of methods can only be applied to uniform or sparse linear arrays. Therefore, naturally, in this section we treat the DOA estimation problem as frequency estimation, following the discussions in Section \ref{sec:model}.

The rest of this section is organized as follows. We first revisit the data model in the context of ULAs and SLAs. We then introduce a mathematical tool known as the Vandermonde decomposition of Toeplitz covariance matrices, which is crucial for most gridless sparse methods. Finally, we discuss a number of gridless sparse methods for DOA/frequency estimation in the case of a single snapshot, followed by the case of multiple snapshots. The atomic norm and gridless SPICE methods will be particularly highlighted.


\subsection{Data Model} \label{sec:model_restate}
For convenience, we restate the data model that will be used in this section. For an $M$-element ULA, the array data are modeled as:
\equ{\m{Y} = \m{A}(\m{f})\m{S}+\m{E},\label{formu:observation_model2}}
where $f_k\in\bT$, $k=1,\dots,K$ are the frequencies of interest, which have a one-to-one relationship to the DOAs, $\m{A}(\m{f}) = \mbra{\m{a}\sbra{f_1},\dots, \m{a}\sbra{f_K}}\in\bC^{M\times K}$ is the array manifold matrix, and $\m{a}\sbra{f} = \mbra{1,e^{i2\pi f},\dots, e^{i2\pi(M-1)f}}^T\in\bC^M$ is the steering vector.

For an $M$-element SLA, suppose that the array is obtained from an $N$-element virtual ULA by retaining the antennas indexed by the set $\Omega = \lbra{\Omega_1,\dots,\Omega_M}$, where $N\geq M$ and $1\leq \Omega_1<\dots<\Omega_M\leq N$. In this case, we can view \eqref{formu:observation_model2} as the data model with the virtual ULA. Then the data model of the SLA is given by
\equ{\m{Y}_{\Omega} = \m{A}_{\Omega}(\m{f})\m{S}+\m{E}_{\Omega}. \label{formu:observation_model3}}
Therefore, \eqref{formu:observation_model2} can be considered as a special case of \eqref{formu:observation_model3} in which $M=N$ and $\Omega = \lbra{1,\dots,N}$. Given $\m{Y}$ (or $\m{Y}_{\Omega}$ and $\Omega$), the objective is to estimate the frequencies $f_k$'s (note that the source number $K$ is usually unknown).

In the single snapshot case the above problem coincides with the line spectral estimation problem. Since the first gridless sparse methods were developed for the latter problem, we present them in the single snapshot case and then discuss how they can be extended to the multiple snapshot case by exploiting the joint sparsity of the snapshots. Before doing that, an important mathematical tool is introduced in the following subsection.

\subsection{Vandermonde Decomposition of Toeplitz Covariance Matrices} \label{sec:VD}
The Vandermonde decomposition of Toeplitz covariance matrices plays an important role in this section. This classical result was discovered by Carath\'{e}odory and Fej\'{e}r in 1911 \cite{caratheodory1911zusammenhang}. It has become important in the area of data analysis and signal processing since the 1970s when it was rediscovered by Pisarenko and used for frequency retrieval from the data covariance matrix \cite{pisarenko1973retrieval}. From then on, the Vandermonde decomposition has formed the basis of a prominent subset of methods for frequency and DOA estimation, viz.~the subspace-based methods. To see why it is so, let us consider the data model in \eqref{formu:observation_model2} and assume uncorrelated sources. In the noiseless case, the data covariance matrix is given by
\equ{\m{R} = \bE\m{y}(t)\m{y}^H(t) = \m{A}\sbra{\m{f}}\diag\sbra{\m{p}}\m{A}^H\sbra{\m{f}}, \label{eq:Rnoiseless}}
where $p_k>0$, $k=1,\dots,K$ are the powers of the sources. It can be easily verified that $\m{R}$ is a (Hermitian) Toeplitz matrix that can be written as:
\equ{\m{R} = \m{T}\sbra{\m{u}}= \begin{bmatrix}u_1 & u_2 & \cdots & u_N\\ {u}_2^* & u_1 & \cdots & u_{N-1}\\ \vdots & \vdots & \ddots & \vdots \\ {u}_N^* & {u}_{N-1}^* & \cdots & u_1\end{bmatrix},}
where $\m{u}\in\bC^N$. Moreover, $\m{R}$ is PSD and has rank $K$ under the assumption that $K<N$. The Vandermonde decomposition result states that any rank-deficient, PSD Toeplitz matrix $\m{T}$ can be uniquely decomposed as in \eqref{eq:Rnoiseless}. Equivalently stated, this means that the frequencies can be exactly retrieved from the data covariance matrix. Formally, the result is stated in the following theorem, a proof of which (inspired by \cite{gurvits2002largest}) is also provided; note that the proof suggests a way of computing the decomposition.

\begin{thm} Any PSD Toeplitz matrix $\m{T}(\m{u})\in\bC^{N\times N}$ of rank $r\leq N$ admits the following $r$-atomic Vandermonde decomposition:
\equ{\m{T} = \sum_{k=1}^r p_k \m{a}\sbra{f_k}\m{a}^H\sbra{f_k} = \m{A}\sbra{\m{f}}\diag\sbra{\m{p}}\m{A}^H\sbra{\m{f}}, \label{eq:VD}}
where $p_k>0$, and $f_k\in\bT$, $k=1,\dots,r$ are distinct. Moreover, the decomposition in \eqref{eq:VD} is unique if $r< N$. \label{thm:VD}
\end{thm}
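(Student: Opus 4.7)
The plan is to tackle the rank-deficient case $r < N$ first via a nullspace-polynomial construction, and then reduce the full-rank case $r = N$ to it. For that reduction, I fix any $f^* \in \bT$ and consider the Toeplitz pencil $\m{T}(\m{u}) - \lambda\,\m{a}(f^*)\m{a}^H(f^*)$. This matrix is PSD for all $\lambda \in \mbra{0, \lambda^*}$ with $\lambda^* = 1/\sbra{\m{a}^H(f^*) \m{T}^{-1} \m{a}(f^*)}$, and at $\lambda = \lambda^*$ it becomes a rank-$(N-1)$ PSD Toeplitz matrix (by Schur complement). Applying the $r<N$ conclusion and then adding back the removed atom with positive weight $\lambda^*$ yields an $N$-atomic decomposition of $\m{T}$.

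For $r < N$, I would first extract a structured nullspace vector. The leading $k \times k$ principal submatrix $\m{T}_k$ is itself PSD Toeplitz, and its rank equals $\min(k, r)$ since the Toeplitz shift structure forces the rank to increase by exactly one until it plateaus at $r$. In particular, $\m{T}_{r+1}$ has a one-dimensional kernel; picking a nonzero $\m{q}' = \sbra{q_0, \ldots, q_r}^T$ in it forces $q_r \neq 0$, and padding with zeros to $\m{q} \in \bC^N$, the identity $\m{q}^H \m{T} \m{q} = (\m{q}')^H \m{T}_{r+1} \m{q}' = 0$ combined with $\m{T} \succeq \m{0}$ gives $\m{q} \in \ker \m{T}$. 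Applying the same argument to each shifted $(r+1) \times (r+1)$ principal submatrix yields $\m{q}^{(s)} \in \ker \m{T}$ for $s = 0, \ldots, N-r-1$, and these $N-r$ vectors are linearly independent, hence span the kernel. A direct computation gives $\m{a}^H(f) \m{q}^{(s)} = e^{-i2\pi s f} Q\sbra{e^{-i2\pi f}}$ where $Q(z) = \sum_{j=0}^r q_j z^j$, so $\m{a}(f) \in \text{range}(\m{T})$ if and only if $Q\sbra{e^{-i2\pi f}} = 0$.

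The main obstacle is proving that $Q$ has exactly $r$ roots on the unit circle, all simple. Each such root contributes a steering vector in $\text{range}(\m{T})$, and distinct roots give Vandermonde-independent vectors, so their number is at most $r = \dim \text{range}(\m{T})$. The reverse inequality is the delicate step: I would argue by contradiction, showing that any root of $Q$ off the unit circle can be exploited to build a vector $\m{v}$ with $\m{v}^H \m{T} \m{v} < 0$ (in essence, a Fej\'er--Riesz style argument that plays the PSD structure of $\m{T}$ against a would-be factorization of $Q$ with non-unit-circle roots). Once $Q$ is known to factor as $q_r \prod_{k=1}^r \sbra{z - e^{-i2\pi f_k}}$ with distinct $f_k \in \bT$, the Vandermonde matrix $\m{A}\sbra{\m{f}} \in \bC^{N \times r}$ is full column rank, so there exists a unique Hermitian $\m{M}$ with $\m{T} = \m{A}\sbra{\m{f}} \m{M} \m{A}^H\sbra{\m{f}}$; matching Toeplitz entries together with $\m{T}\succeq\m{0}$ and the full column rank of $\m{A}\sbra{\m{f}}$ then forces $\m{M} = \diag\sbra{\m{p}}$ with each $p_k > 0$.

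Uniqueness when $r < N$ follows almost immediately. Any alternative decomposition $\m{T} = \sum_l p'_l \m{a}(f'_l) \m{a}^H(f'_l)$ with distinct $f'_l$ and $p'_l>0$ must place each $\m{a}(f'_l)$ in $\text{range}(\m{T})$, so each $f'_l$ is one of the $r$ unit-circle roots of $Q$; because the outer products $\m{a}(f_k)\m{a}^H(f_k)$ for distinct $f_k$ are linearly independent as matrices, both the frequencies and the weights are pinned down.
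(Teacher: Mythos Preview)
Your reduction of the full-rank case $r=N$ to the rank-deficient case, and your uniqueness argument, are essentially the same as the paper's. The real difference is in how you handle $r<N$.

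The paper takes a factorization route: write $\m{T}=\m{V}\m{V}^H$ with $\m{V}\in\bC^{N\times r}$, observe from the Toeplitz structure that $\m{V}_{-1}=\m{V}_{-N}\m{Q}$ for some \emph{unitary} $\m{Q}$, and then diagonalize $\m{Q}$. Because $\m{Q}$ is unitary its eigenvalues $z_k=e^{i2\pi f_k}$ lie on the unit circle automatically, and $\m{V}_j=\m{V}_1\m{Q}^{j-1}$ immediately yields the Vandermonde form. In other words, the paper sidesteps precisely the obstacle you flag as ``the delicate step'': there is nothing to prove about roots on the circle, because unitarity hands it to you for free.

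Your nullspace-polynomial route (Prony/Pisarenko style) is classical and correct in outline --- your claims that $\rank(\m{T}_k)=\min(k,r)$, that $q_r\neq 0$, and that the shifted copies of $\m{q}'$ span $\ker\m{T}$ all hold --- but the step you yourself mark as the main obstacle is not resolved by the sketch you give. A Fej\'er--Riesz factorization concerns nonnegative trigonometric polynomials and does not directly force the zeros of $Q$ onto the circle; the contradiction ``build $\m{v}$ with $\m{v}^H\m{T}\m{v}<0$ from an off-circle root'' is not evident without further machinery. One clean way to close this gap is via the trigonometric moment problem (a rank-$r$ PSD Toeplitz matrix corresponds to a nonnegative measure supported on the $\leq r$ unit-circle zeros of $Q$), but that is a different toolbox than what you invoke. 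Your final step, that the Toeplitz structure forces $\m{M}$ diagonal, is correct and admits a short proof: from $\m{A}_{-1}=\m{A}_{-N}\m{D}$ with $\m{D}=\diag(e^{i2\pi f_k})$ and the equality of the leading and trailing $(N-1)\times(N-1)$ blocks one gets $\m{M}=\m{D}\m{M}\m{D}^H$, hence $M_{jk}=0$ for $j\neq k$.
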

\begin{proof} We first consider the case of $r=\rank\sbra{\m{T}}\leq N-1$. Since $\m{T}\geq \m{0}$, there exists $\m{V}\in\bC^{N\times r}$ satisfying $\m{T} = \m{V}\m{V}^H$. Let $\m{V}_{-N}$ and $\m{V}_{-1}$ be the matrices obtained from $\m{V}$ by removing its last and first row, respectively. By the structure of $\m{T}$, we have that $\m{V}_{-N}\m{V}_{-N}^H = \m{V}_{-1}\m{V}_{-1}^H$. Thus there must exist an $r\times r$ unitary matrix $\m{Q}$ satisfying $\m{V}_{-1} = \m{V}_{-N}\m{Q}$ (see, e.g., \cite[Theorem 7.3.11]{horn2012matrix}). It follows that $\m{V}_j = \m{V}_1\m{Q}^{j-1}$, $j=2,\dots,N$ and therefore,
\equ{u_j = \m{V}_1 \m{Q}^{1-j}\m{V}_1^H, \quad j=1,\dots,N, \label{eq:tjinvU}}
where $\m{V}_j$ is the $j$th row of $\m{V}$.
Next, write the eigen-decomposition of the unitary matrix $\m{Q}$, which is guaranteed to exist, as
\equ{\m{Q} = \widetilde{\m{Q}} \diag\sbra{z_1,\dots,z_r} \widetilde{\m{Q}}^H, \label{eqw:UinUz}}
where $\widetilde{\m{Q}}$ is also an $r\times r$ unitary matrix and $z_k$'s are the eigenvalues. Since the eigenvalues of a unitary matrix must have unit magnitude, we can find $f_k\in\bT$, $k=1,\dots,r$ satisfying $z_k = e^{i2\pi f_k}$, $k=1,\dots,r$. Inserting \eqref{eqw:UinUz} into \eqref{eq:tjinvU} and letting $p_k = \abs{\m{V}_1\widetilde{\m{Q}}_{:k}}^2>0$, $k=1,\dots,r$, where $\widetilde{\m{Q}}_{:k}$ denotes the $k$th column of $\widetilde{\m{Q}}$, we have that
\equ{u_j = \sum_{k=1}^r p_k e^{-i2\pi (j-1) f_k}.}
It follows that \eqref{eq:VD} holds. Moreover, $f_k$, $k=1,\dots,r$ are distinct since otherwise $\rank\sbra{\m{T}} < r$, which cannot be true.

We now consider the case of $r=N$ for which $\m{T} >0$. Let us arbitrarily choose $f_N \in\bT$ and let $p_N = \sbra{\m{a}^H\sbra{f_N} \m{T}^{-1} \m{a}\sbra{f_N}}^{-1}>0$. Moreover, we define a new vector $\m{u}'\in\bC^N$ by
\equ{u'_j = u_j - p_N e^{-i2\pi (j-1) f_N}. \label{eq:t1j}}
It can be readily verified that
\lentwo{\equa{\m{T}\sbra{\m{u}'}
&=& \m{T}\sbra{\m{u}} - p_N\m{a}\sbra{f_N}\m{a}^H\sbra{f_N}, \label{eq:Tu1} \\ \m{T}\sbra{\m{u}'}
&\geq& \m{0}, \label{eq:Tu1psd} \\\rank\sbra{\m{T}\sbra{\m{u}'}}
&=& N-1. \label{eq:Tu1rank}
}}Therefore, from the result in the case of $r\leq N-1$ proven above, $\m{T}\sbra{\m{u}'}$ admits a Vandermonde decomposition as in \eqref{eq:VD} with $r=N-1$. It then follows from \eqref{eq:Tu1} that $\m{T}\sbra{\m{u}}$ admits a Vandermonde decomposition with $N$ ``atoms''.

We finally show the uniqueness in the case of $r\leq N-1$. To do so, suppose there exists another decomposition $\m{T} = \m{A}\sbra{\m{f}'} \m{P}'\m{A}^H\sbra{\m{f}'}$ in which $p'_j>0$, $j=1,\dots,r$ and $f'_j\in\bT$ are distinct. It follows from the equation
\equ{\m{A}\sbra{\m{f}'} \m{P}'\m{A}^H\sbra{\m{f}'}=\m{A}\sbra{\m{f}} \m{P}\m{A}^H\sbra{\m{f}}}
that there exists an $r\times r$ unitary matrix $\m{Q}'$ such that $\m{A}\sbra{\m{f}'}\m{P}'^{\frac{1}{2}} = \m{A}\sbra{\m{f}} \m{P}^{\frac{1}{2}} \m{Q}'$ and therefore,
\equ{\m{A}\sbra{\m{f}'} = \m{A}\sbra{\m{f}} \m{P}^{\frac{1}{2}} \m{Q}' \m{P}'^{-\frac{1}{2}}.}
This means that for every $j = 1,\dots,r$, $\m{a}\sbra{f'_j}$ lies in the range space spanned by $\lbra{\m{a}\sbra{f_k}}_{k=1}^r$. By the fact that $r\leq N-1$ and that any $N$ atoms $\m{a}\sbra{f_k}$ with distinct $f_k$'s are linearly independent, we have that $f'_j\in\lbra{f_k}_{k=1}^r$ and thus the two sets $\lbra{f'_j}_{j=1}^r$ and $\lbra{f_k}_{k=1}^r$ are identical. It follows that the above two decompositions of $\m{T}\sbra{\m{u}}$ must be identical.
\end{proof}

Note that the proof of Theorem \ref{thm:VD} provides a computational approach to the Vandermonde decomposition. We simply consider the case of $r\leq N-1$, since in the case of $r= N$ we can arbitrarily choose $f_N\in\bT$ first. We use the following result:
\equ{\sbra{\m{V}_{-N}^H\m{V}_{-1} - z_k \m{V}_{-N}^H\m{V}_{-N}}\widetilde{\m{Q}}_{:k} = 0, \label{eq:geneigen}}
which can be shown along the lines of the above proof. To retrieve the frequencies and the powers from $\m{T}$, we first compute $\m{V}\in\bC^{N\times r}$ satisfying $\m{T} = \m{V}\m{V}^H$ using, e.g., the Cholesky decomposition. After that, we use \eqref{eq:geneigen} and compute $z_k$ and $\widetilde{\m{Q}}_{:k}$, $k=1,\dots,r$ as the eigenvalues and the normalized eigenvectors of the matrix pencil $\sbra{\m{V}_{-N}^H\m{V}_{-1}, \m{V}_{-N}^H\m{V}_{-N}}$. Finally, we obtain $f_k = \frac{1}{2\pi}\Im \ln z_k\in\bT$ and $p_k = \abs{\m{V}_1\widetilde{\m{Q}}_{:k}}^2$, $k=1,\dots,r$, where $\Im$ gives the imaginary part of its argument. In fact, this matrix pencil approach is similar to the ESPRIT algorithm that computes the frequency estimates from an estimate of the data covariance matrix.

In the presence of homoscedastic noise, the data covariance matrix $\m{R}$ remains Toeplitz. In this case, it is natural to decompose the Toeplitz covariance matrix as the sum of the signal covariance and the noise covariance. Consequently, the following corollary of Theorem \ref{thm:VD} can be useful in such a case. The proof is straightforward and will be omitted.
\begin{cor} Any PSD Toeplitz matrix $\m{T}(\m{u})\in\bC^{N\times N}$ can be uniquely decomposed as:
\equ{\m{T} = \sum_{k=1}^r p_k \m{a}\sbra{f_k}\m{a}^H\sbra{f_k} + \sigma\m{I} = \m{A}\sbra{\m{f}}\diag\sbra{\m{p}}\m{A}^H\sbra{\m{f}} + \sigma \m{I}, \label{eq:VD2}}
where $\sigma = \lambda_{\text{min}}\sbra{\m{T}}$ (the smallest eigenvalue of $\m{T}$), $r=\rank\sbra{\m{T}-\sigma\m{I}}< N$, $p_k>0$, and $f_k\in\bT$, $k=1,\dots,r$ are distinct. \label{cor:Tinsignalnoise}
\end{cor}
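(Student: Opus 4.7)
The plan is to deduce Corollary \ref{cor:Tinsignalnoise} directly from Theorem \ref{thm:VD} by reducing the noisy decomposition to the noiseless one after subtracting the smallest-eigenvalue shift.

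First, I would set $\sigma = \lambda_{\min}(\m{T})$ and consider the matrix $\m{T}' = \m{T} - \sigma\m{I}$. Since $\sigma\m{I}$ is Hermitian Toeplitz, $\m{T}'$ is also Hermitian Toeplitz; since $\sigma$ is the smallest eigenvalue of $\m{T}$, $\m{T}' \succeq \m{0}$; and by the definition of $\sigma$, $\m{T}'$ has $0$ as an eigenvalue, so $r := \rank(\m{T}') \le N-1$. Theorem \ref{thm:VD} now applies to $\m{T}'$ and yields a unique $r$-atomic Vandermonde decomposition
\equ{\m{T}' = \sum_{k=1}^{r} p_k \m{a}(f_k)\m{a}^H(f_k), \quad p_k>0, \; f_k\in\bT \text{ distinct}. \notag}
Adding $\sigma \m{I}$ to both sides produces the decomposition asserted in \eqref{eq:VD2}.

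For uniqueness, I would argue as follows. Suppose $\m{T}$ admits another decomposition of the same form, say with parameters $(\sigma', \m{p}', \m{f}')$ where the signal part has $r'<N$ atoms. Because the signal part is PSD of rank $r'<N$, it has $0$ as an eigenvalue, so the smallest eigenvalue of the right-hand side equals $\sigma'$. Hence $\sigma' = \lambda_{\min}(\m{T}) = \sigma$. Subtracting $\sigma\m{I}$ from both decompositions leaves two $r$-atomic (respectively $r'$-atomic) Vandermonde decompositions of the same PSD Toeplitz matrix $\m{T}'$ of rank strictly less than $N$, and by the uniqueness clause of Theorem \ref{thm:VD}, these must coincide; in particular $r=r'$ and $(\m{p},\m{f})=(\m{p}',\m{f}')$ up to reordering of atoms.

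No real obstacle is expected: the entire argument is a short reduction, and the only subtle point is observing that the strict rank deficiency $r<N$ of the signal part is what forces $\sigma$ to be read off as $\lambda_{\min}(\m{T})$, which in turn is what locks in the uniqueness. I would make this observation explicit so that the role of the hypothesis $r<N$ is clear.
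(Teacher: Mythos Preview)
Your proposal is correct and is exactly the straightforward reduction the paper has in mind; the paper itself states that the proof is straightforward and omits it, so your argument via subtracting $\lambda_{\min}(\m{T})\m{I}$ and invoking the uniqueness clause of Theorem~\ref{thm:VD} is the intended one.
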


\begin{rem} Note that the uniqueness of the decomposition in Corollary \ref{cor:Tinsignalnoise} is guaranteed by the condition that $\sigma = \lambda_{\text{min}}\sbra{\m{T}}$. If the condition is violated by letting $0\leq \sigma < \lambda_{\text{min}}\sbra{\m{T}}$ (in such a case $\m{T}$ has full rank and $r\geq N$), then the decomposition in \eqref{eq:VD2} cannot be unique. \label{rem:uniqueTdecomp}
\end{rem}

The Vandermonde decomposition of Toeplitz covariance matrices forms an important tool in several recently proposed gridless sparse methods. In particular, these methods transform the frequency estimation problem into the estimation of a PSD Toeplitz matrix in which the frequencies are encoded. Once the matrix is computed, the frequencies can be retrieved from its Vandermonde decomposition. Therefore, these gridless sparse methods can be viewed as being covariance-based by interpreting the Toeplitz matrix as the data covariance matrix (though it might not be, since certain statistical assumptions may not be satisfied). In contrast to conventional subspace methods that estimate the frequencies directly from the sample covariance matrix, the gridless methods utilize more sophisticated optimization approaches to estimate the data covariance matrix by exploiting its special structures, e.g., Toeplitz, low rank and PSDness, and therefore are expected to achieve superior performance.

\subsection{The Single Snapshot Case}
In this subsection we introduce several gridless sparse methods for DOA/frequency estimation in the single snapshot case (a.k.a.~the line spectral estimation problem). Two kinds of methods will be discussed: deterministic optimization methods, e.g., the atomic norm and the Hankel-based nuclear norm methods \cite{candes2013towards,candes2013super,tang2012compressive,bhaskar2013atomic, tang2012compressed,tang2015near, chen2014robust,yang2015gridless,azais2015spike, duval2015exact,fernandez2016super,cai2016robust, sun2015noncontact}, and a covariance fitting method that is a gridless version of SPICE \cite{yang2014discretization, yang2015gridless,stoica2014gridless,yang2016gridless1,sun2016new,zhang2016gridless,bao2016dlsla}. The connections between these methods will also be investigated. By `deterministic' we mean that we do not make any statistical assumptions on the signal of interest. Instead, the signal is deterministic and it is sought as the sparsest candidate, measured by a certain sparse metric, among a prescribed set.

\subsubsection{A General Framework for Deterministic Methods}

In the single snapshot case, corresponding to \eqref{formu:observation_model3}, the data model in the SLA case is given by:
\equ{\m{y}_{\Omega}=\m{z}_{\Omega} + \m{e}_{\Omega}, \quad \m{z}= \m{A}\sbra{\m{f}}\m{s}, \label{eq:modelSMV}}
where $\m{z}$ denotes the noiseless signal. Note that the ULA is a special case with $\Omega = \lbra{1,\dots,N}$. For deterministic sparse methods, in general, we need to solve a constrained optimization problem of the following form:
\equ{\min_{\m{z}} \cM\sbra{\m{z}}, \st \twon{\m{z}_{\Omega} - \m{y}_{\Omega}}\leq \eta, \label{eq:sparseoptinz}}
where the noise is assumed to be bounded: $\twon{\m{e}_{\Omega}}\leq \eta$. In \eqref{eq:sparseoptinz}, $\m{z}$ is the sinusoidal signal of interest, and $\cM\sbra{\m{z}}$ denotes a sparse metric that is defined such that by minimizing $\cM\sbra{\m{z}}$ the number of components/atoms $\m{a}\sbra{f}$ used to express $\m{z}$ is reduced, and these atoms give the frequency estimates.
Instead of \eqref{eq:sparseoptinz}, we may solve a regularized optimization problem given by:
\equ{\min_{\m{z}} \lambda\cM\sbra{\m{z}} + \frac{1}{2}\twon{\m{z}_{\Omega} - \m{y}_{\Omega}}^2, \label{eq:sparseoptinz2}}
where the noise is typically assumed to be Gaussian and $\lambda>0$ is a regularization parameter. In the extreme noiseless case, by letting $\eta\rightarrow0$ and $\lambda\rightarrow0$, both \eqref{eq:sparseoptinz} and \eqref{eq:sparseoptinz2} reduce to the following problem:
\equ{\min_{\m{z}} \cM\sbra{\m{z}}, \st \m{z}_{\Omega} = \m{y}_{\Omega}. \label{eq:sparseoptinz3}}
We next discuss different choices of $\cM\sbra{\m{z}}$.

\subsubsection{Atomic $\ell_0$ Norm}

To promote sparsity to the greatest extent possible, inspired by the literature on sparse recovery and compressed sensing, the natural choice of $\cM\sbra{\m{z}}$ is an $\ell_0$ norm like sparse metric, referred to as the atomic $\ell_0$ (pseudo-)norm. Let us formally define the set of atoms used here:
\equ{\cA= \lbra{\m{a}\sbra{f,\phi}=\m{a}\sbra{f}\phi: \; f\in\bT, \phi\in\bC, \abs{\phi}=1}.}
It is evident from \eqref{eq:modelSMV} that the true signal $\m{z}$ is a linear combination of $K$ atoms in the atomic set $\cA$. The atomic $\ell_0$ (pseudo-)norm, denoted by $\norm{\m{z}}_{\cA,0}$, is defined as the minimum number of atoms in $\cA$ that can synthesize $\m{z}$:
\equ{\begin{split}\norm{\m{z}}_{\cA,0}=
&\inf_{c_k, f_k, \phi_k} \lbra{\cK:\; \m{z} = \sum_{k=1}^{\cK} \m{a}\sbra{f_k,\phi_k}c_k, f_k\in\bT, \abs{\phi_k}=1, c_k>0 } \\
=& \inf_{f_k, s_k} \lbra{\cK:\; \m{z} = \sum_{k=1}^{\cK} \m{a}\sbra{f_k}s_k, f_k\in\bT}. \end{split} \label{eq:atom0n}}

To provide a finite-dimensional formulation for $\norm{\m{z}}_{\cA,0}$, the Vandermonde decomposition of Toeplitz covariance matrices is invoked. To be specific, let $\m{T}\sbra{\m{u}}$ be a Toeplitz matrix and impose the condition that
\equ{\begin{bmatrix} x & \m{z}^H \\ \m{z} & \m{T}\sbra{\m{u}} \end{bmatrix}\geq \m{0}, \label{eq:psdcond}}
where $x$ is a free variable to be optimized. It follows from \eqref{eq:psdcond} that $\m{T}\sbra{\m{u}}$ is PSD and thus admits a $\rank\sbra{\m{T}\sbra{\m{u}}}$-atomic Vandermonde decomposition. Moreover, $\m{z}$ lies in the range space of $\m{T}\sbra{\m{u}}$. Therefore, $\m{z}$ can be expressed by $\rank\sbra{\m{T}\sbra{\m{u}}}$ atoms. This means that the atomic $\ell_0$ norm is linked to the rank of $\m{T}\sbra{\m{u}}$. Formally, we have the following result.

\begin{thm}[\cite{tang2012compressed}] $\norm{\m{z}}_{\cA,0}$ defined in \eqref{eq:atom0n} equals the optimal value of the following rank minimization problem:
\equ{\min_{x,\m{u}} \rank\sbra{\m{T}\sbra{\m{u}}}, \st \eqref{eq:psdcond}. \label{formu:AL0_rankmin}}
\label{thm:AL0_rankmin}
\end{thm}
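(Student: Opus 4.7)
The plan is to establish the equality by showing the two inequalities $\norm{\m{z}}_{\cA,0} \geq v^\star$ and $\norm{\m{z}}_{\cA,0} \leq v^\star$, where $v^\star$ denotes the optimal value of \eqref{formu:AL0_rankmin}. Both directions will be constructive and will lean on two standard facts: the Schur complement characterization of positive semidefiniteness of a $2\times 2$ block matrix, and the Vandermonde decomposition of PSD Toeplitz matrices (Theorem \ref{thm:VD}). Since any $\m{z}\in\bC^N$ is expressible as a linear combination of $N$ atoms with distinct frequencies (Vandermonde structure gives $N$ linearly independent atoms spanning $\bC^N$), we may assume without loss of generality that $\norm{\m{z}}_{\cA,0}\leq N$, which keeps both sides of the claimed equality finite and bounded by $N$.

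For the direction $v^\star \leq \norm{\m{z}}_{\cA,0}$, I would take any decomposition $\m{z}=\sum_{k=1}^{\cK}\m{a}(f_k)s_k$ attaining $\cK=\norm{\m{z}}_{\cA,0}$ with distinct $f_k\in\bT$, and construct a candidate $(\m{u},x)$ by setting
\equ{\m{T}(\m{u}) = \sum_{k=1}^{\cK} p_k\,\m{a}(f_k)\m{a}^H(f_k)}
for arbitrary weights $p_k>0$; this is automatically Toeplitz and PSD, and since the $\cK$ distinct-frequency atoms are linearly independent, $\rank(\m{T}(\m{u}))=\cK$ and its range coincides with $\text{span}\{\m{a}(f_k)\}$, which contains $\m{z}$. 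Choosing $x$ large enough, specifically $x\geq \m{z}^H\m{T}(\m{u})^{\dagger}\m{z}$, the Schur complement theorem ensures the block matrix in \eqref{eq:psdcond} is PSD, so $(\m{u},x)$ is feasible for \eqref{formu:AL0_rankmin} with objective value $\cK$.

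For the reverse direction $v^\star \geq \norm{\m{z}}_{\cA,0}$, I would start from any feasible $(\m{u},x)$ with $r=\rank(\m{T}(\m{u}))=v^\star$ and apply Theorem \ref{thm:VD} to obtain the $r$-atomic Vandermonde decomposition $\m{T}(\m{u})=\sum_{k=1}^{r}p_k\m{a}(f_k)\m{a}^H(f_k)$ with distinct $f_k\in\bT$ and $p_k>0$. The PSD condition \eqref{eq:psdcond}, again via the Schur complement, forces $\m{z}\in\text{range}(\m{T}(\m{u}))=\text{span}\{\m{a}(f_1),\dots,\m{a}(f_r)\}$, so there exist coefficients $c_k$ with $\m{z}=\sum_{k=1}^{r}c_k\m{a}(f_k)$. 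This exhibits $\m{z}$ as a combination of at most $r$ atoms from $\cA$, yielding $\norm{\m{z}}_{\cA,0}\leq r=v^\star$.

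The only subtle point, which I anticipate as the main obstacle, is the case $r=N$, where the Vandermonde decomposition of Theorem \ref{thm:VD} is not unique. This does not affect the argument: the theorem still guarantees the \emph{existence} of some $N$-atomic decomposition, and that is all I need for the range-inclusion step. A second mild technicality is justifying that the infimum in the definition of $\norm{\m{z}}_{\cA,0}$ is attained, so that talking about ``a decomposition achieving $\cK=\norm{\m{z}}_{\cA,0}$'' is legitimate; this follows because the quantity is integer-valued and bounded above by $N$, hence the infimum is a minimum.
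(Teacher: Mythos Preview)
Your proposal is correct and follows essentially the same approach as the paper. The paper does not give a full proof of this theorem (it is cited from \cite{tang2012compressed}), but the paragraph preceding the theorem sketches exactly your second direction---using the PSD block condition to conclude $\m{z}\in\text{range}(\m{T}(\m{u}))$ and then invoking the Vandermonde decomposition to express $\m{z}$ with $\rank(\m{T}(\m{u}))$ atoms---and the companion proof of Theorem~\ref{thm:AN_SDP} carries out both directions with the same constructions you propose.
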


By Theorem \ref{thm:AL0_rankmin}, the atomic $\ell_0$ norm method needs to solve a rank minimization problem that, as might have been expected, cannot be easily solved. By the rank minimization formulation, the frequencies of interest are actually encoded in the PSD Toeplitz matrix $\m{T}\sbra{\m{u}}$. If $\m{T}\sbra{\m{u}}$ can be solved for, then the frequencies can be retrieved from its Vandermonde decomposition. Therefore, the Toeplitz matrix $\m{T}\sbra{\m{u}}$ in \eqref{formu:AL0_rankmin} can be viewed as the covariance matrix of the noiseless signal $\m{z}$ as if certain statistical assumptions were satisfied (however, those assumptions are not required here). Note that the Toeplitz structure of the covariance matrix is explicitly enforced, the PSDness is imposed by the constraint in \eqref{eq:psdcond}, and the low-rankness is the objective.

\subsubsection{Atomic Norm} \label{sec:AN_S}
A practical choice of the sparse metric $\cM\sbra{\m{z}}$ is the atomic norm that is a convex relaxation of the atomic $\ell_0$ norm. The resulting optimization problems in \eqref{eq:sparseoptinz}-\eqref{eq:sparseoptinz3} are referred to as atomic norm minimization (ANM). The concept of atomic norm was first proposed in \cite{chandrasekaran2012convex} and it generalizes several norms commonly used for sparse representation and recovery, e.g., the $\ell_1$ norm and the nuclear norm, for appropriately chosen atoms. The atomic norm is basically equivalent to the total variation norm \cite{rudin1987real} that was adopted, e.g., in \cite{candes2013towards}. We have decided to use the atomic norm in this article since it is simpler to present and easier to understand. Formally, the atomic norm is defined as the gauge function of $\text{conv}\sbra{\cA}$, the convex hull of $\cA$ \cite{chandrasekaran2012convex}:
\equ{\begin{split}\atomn{\m{z}}=
& \inf\lbra{t>0:\; \m{z}\in t \text{conv}\sbra{\cA}}\\
=&\inf_{c_k,f_k, \phi_{k}} \lbra{\sum_{k} c_k:\; \m{z} = \sum_k \m{a}\sbra{f_k,\phi_k}c_{k}, f_k\in\bT,\abs{\phi_k}=1, c_k>0 } \\
=& \inf_{f_k, s_{k}} \lbra{\sum_{k} \abs{s_{k}}:\; \m{z} = \sum_k \m{a}\sbra{f_k}s_{k}, f_k\in\bT }. \end{split} \label{eq:atomn}}
By definition, the atomic norm can be viewed as a continuous counterpart of the $\ell_1$ norm used in the discrete setting. Different from the $\ell_1$ norm, however, it is unclear how to compute the atomic norm from the definition. In fact, initially this has been a major obstacle in applying the atomic norm technique \cite{chandrasekaran2012convex,bhaskar2011atomic}. To solve this problem, a computationally efficient SDP formulation of $\atomn{\m{z}}$ is provided in the following result. A proof of the result is also provided, which helps illustrate how the frequencies can be obtained.

\begin{thm}[\cite{tang2012compressed}] $\norm{\m{z}}_{\cA}$ defined in \eqref{eq:atomn} equals the optimal value of the following SDP:
\equ{\min_{x,\m{u}} \frac{1}{2} x + \frac{1}{2} u_1, \st \eqref{eq:psdcond}. \label{formu:AN_SDP}} \label{thm:AN_SDP}
\end{thm}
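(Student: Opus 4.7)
The plan is to prove equality of the two quantities by establishing two matching inequalities, one in each direction. I will exploit the fact that the atomic norm is defined as an infimum over decompositions, while the SDP is an infimum over $(x,\m{u})$, and produce a bijection (or at least a pairing with matching objective values) between near-optimal points on the two sides via the Vandermonde decomposition of Toeplitz matrices (Theorem \ref{thm:VD}).

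\textbf{Direction 1: SDP value $\leq \atomn{\m{z}}$.} First I would take any atomic decomposition $\m{z} = \sum_{k} \m{a}(f_k) s_k$ and construct an explicit feasible point of the SDP whose objective equals $\sum_k \abs{s_k}$. The natural choice is $\m{u}$ such that $\m{T}(\m{u}) = \sum_k \abs{s_k} \m{a}(f_k)\m{a}^H(f_k)$ (this is a legitimate Toeplitz matrix because each rank-one term $\m{a}(f)\m{a}^H(f)$ is Toeplitz), together with $x = \sum_k \abs{s_k}$. A direct computation gives $u_1 = \sum_k \abs{s_k}$. Feasibility of \eqref{eq:psdcond} then follows from the identity
\equ{\begin{bmatrix} x & \m{z}^H \\ \m{z} & \m{T}(\m{u}) \end{bmatrix} = \sum_k \abs{s_k} \begin{bmatrix} 1 \\ \m{a}(f_k)\phi_k \end{bmatrix} \begin{bmatrix} 1 & \phi_k^*\m{a}^H(f_k)\end{bmatrix}, \notag}
where $\phi_k = s_k/\abs{s_k}$, which is manifestly PSD as a sum of rank-one PSD matrices. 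Since $\frac{1}{2}x + \frac{1}{2}u_1 = \sum_k \abs{s_k}$, taking the infimum over decompositions yields the desired bound.

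\textbf{Direction 2: SDP value $\geq \atomn{\m{z}}$.} Now I take an arbitrary feasible $(x,\m{u})$ and manufacture an atomic decomposition whose $\ell_1$-weight is at most $\frac{1}{2}(x + u_1)$. The PSD block condition \eqref{eq:psdcond} forces $\m{T}(\m{u})\geq\m{0}$, so Theorem \ref{thm:VD} supplies a Vandermonde decomposition $\m{T}(\m{u}) = \sum_{k=1}^r p_k \m{a}(f_k)\m{a}^H(f_k)$ with $p_k > 0$. Reading off the diagonal entries yields $u_1 = \sum_k p_k$. The same PSD condition, via the generalized Schur complement, additionally forces $\m{z} \in \operatorname{range}(\m{T}(\m{u})) = \operatorname{range}(\m{A}(\m{f}))$ and $x \geq \m{z}^H \m{T}(\m{u})^\dagger \m{z}$; hence $\m{z} = \sum_k q_k \m{a}(f_k)$ for a unique coefficient vector $\m{q}$ (since the columns of $\m{A}(\m{f})$ are linearly independent when $r \leq N$). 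A straightforward pseudoinverse calculation using $\m{T}(\m{u}) = \m{A}\m{P}\m{A}^H$ shows
\equ{\m{z}^H \m{T}(\m{u})^\dagger \m{z} = \m{q}^H \m{P}^{-1} \m{q} = \sum_{k=1}^r \frac{\abs{q_k}^2}{p_k}, \notag}
so $x \geq \sum_k \abs{q_k}^2/p_k$. The AM--GM inequality applied termwise, $\abs{q_k} = \sqrt{p_k}\cdot \abs{q_k}/\sqrt{p_k} \leq \tfrac{1}{2}(p_k + \abs{q_k}^2/p_k)$, then gives
\equ{\atomn{\m{z}} \leq \sum_{k} \abs{q_k} \leq \tfrac{1}{2}\sum_k p_k + \tfrac{1}{2}\sum_k \frac{\abs{q_k}^2}{p_k} \leq \tfrac{1}{2} u_1 + \tfrac{1}{2}x, \notag}
which is exactly what we need.

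\textbf{Expected obstacles.} The only delicate point is the second direction when $\m{T}(\m{u})$ is rank-deficient: we must justify the use of the pseudoinverse in the Schur complement, verify that $\m{z}$ lies in $\operatorname{range}(\m{T}(\m{u}))$, and check that the pseudoinverse identity collapses to $\sum_k \abs{q_k}^2/p_k$ despite $\m{A}$ not being square. The $r = N$ case of Theorem \ref{thm:VD} (where the decomposition is non-unique) requires only that \emph{some} decomposition exists, and any one suffices for the bound. A minor edge case is $\m{z}=\m{0}$, handled trivially by $x = \m{u} = \m{0}$. Once these technicalities are in place, combining both directions yields the claimed SDP representation.
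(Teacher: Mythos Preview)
Your proposal is correct and follows essentially the same approach as the paper: construct feasible $(x,\m{u})$ from an atomic decomposition for one direction, and use the Vandermonde decomposition of $\m{T}(\m{u})$ together with the Schur-complement/pseudoinverse bound and the AM--GM inequality for the other. The only cosmetic difference is that the paper works with an optimal pair $(\widehat{x},\widehat{\m{u}})$ while you bound $\atomn{\m{z}}$ against an arbitrary feasible pair and then pass to the infimum; your version is arguably slightly cleaner since it does not presuppose attainment of the SDP infimum.
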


\begin{proof} Let $F$ be the optimal value of the objective in \eqref{formu:AN_SDP}. We need to show that $F =\norm{\m{z}}_{\cA}$.

We first show that $F \leq \norm{\m{z}}_{\cA}$. To do so, let $\m{z} = \sum_k c_k\m{a}\sbra{f_k,\phi_k} = \sum_k \m{a}\sbra{f_k}s_k$ be an atomic decomposition of $\m{z}$. Then let $\m{u}$ be such that $\m{T}(\m{u}) = \sum_k c_k \m{a}\sbra{f_k}\m{a}^H\sbra{f_k}$ and $x = \sum_k c_k$. It follows that
\equ{\begin{bmatrix} x & \m{z}^H \\ \m{z} & \m{T} \end{bmatrix} = \sum_k c_k\begin{bmatrix} \phi_k^* \\ \m{a}\sbra{f_k} \end{bmatrix} \begin{bmatrix} \phi_k^* \\ \m{a}\sbra{f_k} \end{bmatrix}^H \geq \m{0}.}
Therefore, $x$ and $\m{u}$ constructed as above form a feasible solution to the problem in \eqref{formu:AN_SDP}, at which the objective value equals
\equ{\frac{1}{2}x + \frac{1}{2}u_1 = \sum_k c_k.}
It follows that $F \leq \sum_k c_k$. Since the inequality holds for any atomic decomposition of $\m{z}$, we have that $F \leq \norm{\m{z}}_{\cA}$ by the definition of the atomic norm.

On the other hand, suppose that $\sbra{\widehat{x}, \widehat{\m{u}}}$ is an optimal solution to the problem in \eqref{formu:AN_SDP}. By the fact that $\m{T}\sbra{\widehat{\m{u}}}\geq \m{0}$ and applying Theorem \ref{thm:VD}, we have that $\m{T}\sbra{\widehat{\m{u}}}$ admits a Vandermonde decomposition as in \eqref{eq:VD} with $\sbra{r, p_k, f_k}$ denoted by $\sbra{\widehat{r}, \widehat{p}_k, \widehat{f}_k}$.
Moreover, since $\begin{bmatrix} \widehat{x} & \m{z}^H \\ \m{z} & \m{T}\sbra{\widehat{\m{u}}} \end{bmatrix}\geq\m{0}$, we have that $\m{z}$ lies in the range space of $\m{T}\sbra{\widehat{\m{u}}}$ and thus has the following atomic decomposition:
\equ{\m{z} = \sum_{k=1}^{\widehat{r}} \widehat{c}_k\m{a}\sbra{\widehat{f}_k,\widehat{\phi}_k} = \sum_{k=1}^{\widehat{r}}\m{a}\sbra{\widehat{f}_k}\widehat{s}_k. \label{eq:yatomdec}}
Moreover, it holds that
{\lentwo\equa{\widehat{x}
&\geq& \m{z}^H \mbra{\m{T}\sbra{\widehat{\m{u}}}}^{\dag}\m{z} = \sum_{k=1}^{\widehat{r}} \frac{\widehat{c}_k^{2}}{\widehat{p}_k},\\ \widehat{u}_0
&=& \sum_{k=1}^{\widehat{r}} \widehat{p}_k.
}}It therefore follows that
\equ{\begin{split}F
&= \frac{1}{2}\widehat{x} + \frac{1}{2}\widehat{u}_1 \\
&\geq \frac{1}{2}\sum_k \frac{\widehat{c}_k^{2}}{\widehat{p}_k} + \frac{1}{2}\sum_k \widehat{p}_k \\
&\geq \sum_{k} \widehat{c}_k\\
&\geq \norm{\m{z}}_{\cA}. \end{split} \label{eq:Fleqsum} }
Combining \eqref{eq:Fleqsum} and the inequality $F \leq \norm{\m{z}}_{\cA}$ that was shown previously, we conclude that $F = \norm{\m{z}}_{\cA}$, which completes the proof. It is worth noting that by \eqref{eq:Fleqsum} we must have that $\widehat{p}_k = \widehat{c}_k=\abs{\widehat{s}_k}$ and $\norm{\m{z}}_{\cA} = \sum_{k} \widehat{c}_k = \sum_{k} \abs{\widehat{s}_k}$. Therefore, the atomic decomposition in \eqref{eq:yatomdec} achieves the atomic norm.
\end{proof}

Interestingly (but not surprisingly), the SDP in \eqref{formu:AN_SDP} is actually a convex relaxation of the rank minimization problem in \eqref{formu:AL0_rankmin}. Concretely, the second term $\frac{1}{2}u_1$ in the objective function in \eqref{formu:AN_SDP} is actually the nuclear norm or the trace norm of $\m{T}\sbra{\m{u}}$ (up to a scaling factor), which is a commonly used convex relaxation of the matrix rank, while the first term $\frac{1}{2}x$ is a regularization term that prevents a trivial solution.

Similar to the atomic $\ell_0$ norm, the frequencies in the atomic norm approach are also encoded in the Toeplitz matrix $\m{T}\sbra{\m{u}}$. Once the resulting SDP problem is solved, the frequencies can be retrieved from the Vandermonde decomposition of $\m{T}\sbra{\m{u}}$. Therefore, similar to the $\ell_0$ norm, the atomic norm can also be viewed as being covariance-based. The only difference lies in enforcing the low-rankness of the `covariance' matrix $\m{T}\sbra{\m{u}}$. The atomic $\ell_0$ norm directly uses the rank function that exploits the low-rankness to the greatest extent possible but cannot be practically solved. In contrast to this, the atomic norm uses a convex relaxation, the nuclear norm (or the trace norm), and provides a practically feasible approach.

In the absence of noise, the theoretical performance of ANM has been studied in \cite{candes2013towards,tang2012compressed}. In the case of ULA where all the entries of $\m{y}$ are observed, the ANM problem derived from \eqref{eq:sparseoptinz3} actually admits a trivial solution $\m{z}=\m{y}$. But the following SDP resulting from \eqref{formu:AN_SDP} still makes sense and can be used for frequency estimation:
\equ{\min_{x,\m{u}} \frac{1}{2} x + \frac{1}{2} u_1, \st \begin{bmatrix} x & \m{y}^H \\ \m{y} & \m{T}\sbra{\m{u}} \end{bmatrix}\geq \m{0}. \label{formu:AN_SDP2}}
Let $\cT= \lbra{f_1,\dots,f_K}$ and define the minimum separation of $\cT$ as the closest wrap-around distance between any two elements:
\equ{\Delta_{\cT} = \inf_{1\leq j\neq k \leq K} \min\lbra{\abs{f_j-f_k}, 1-\abs{f_j-f_k}}.}
The following theoretical guarantee for the atomic norm is provided in \cite{candes2013towards}.

\begin{thm}[\cite{candes2013towards}] $\m{y}=\sum_{j=1}^K c_j\m{a}\sbra{f_j,\phi_j}$ is the unique atomic decomposition satisfying $\norm{\m{y}}_{\cA}=\sum_{j=1}^K c_j$ if $\Delta_{\cT}\geq \frac{1}{\lfloor(N-1)/4\rfloor}$ and $N\geq257$. \label{thm:completedata}
\end{thm}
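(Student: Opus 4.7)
My plan is to prove this via the standard dual certificate argument from convex analysis applied to the atomic norm. First, I would write down the dual problem of $\min \atomn{\m{z}}$ subject to $\m{z} = \m{y}$. Since the primal feasible set is a single point, strong duality is trivial, but the key observation is that $\m{y}=\sum_j c_j \m{a}(f_j,\phi_j)$ with $\atomn{\m{y}} = \sum_j c_j$ and uniqueness of this decomposition follows if one can exhibit a dual certificate: a vector $\m{q} \in \bC^N$ whose associated trigonometric polynomial
\equ{Q(f) = \inp{\m{q},\m{a}(f)} = \sum_{n=0}^{N-1} q_n^* e^{i2\pi nf} \notag}
satisfies $Q(f_j) = \phi_j$ for $j=1,\dots,K$ and $\abs{Q(f)} < 1$ for every $f \in \bT \setminus \cT$. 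The existence of such a $Q$ certifies that $\sum_j c_j \m{a}(f_j,\phi_j)$ is the unique minimizing atomic decomposition of $\m{y}$: if $\m{y}$ had another decomposition $\m{y}=\sum_\ell \tilde{c}_\ell \m{a}(\tilde f_\ell,\tilde\phi_\ell)$ attaining the atomic norm, then pairing with $\m{q}$ gives $\sum_\ell \tilde c_\ell Q(\tilde f_\ell) \tilde\phi_\ell^{-1} \cdot \tilde\phi_\ell = \sum_\ell \tilde c_\ell$, which forces each $\tilde f_\ell \in \cT$ and each $Q(\tilde f_\ell)=\tilde\phi_\ell$.

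The heart of the argument, and the hard part, is the explicit construction of $Q$. Following Candès--Fernandez-Granda, I would build $Q$ as a linear combination of translates of the squared Fejér kernel $K(f)$ and its derivative:
\equ{Q(f) = \sum_{j=1}^K \alpha_j K(f-f_j) + \sum_{j=1}^K \beta_j K'(f-f_j), \notag}
where $K$ is chosen so that $K(0)=1$, $K$ is even, nonnegative, real-analytic, and concentrated near $0$ at scale $\sim 1/N$; concretely, for $N-1$ divisible by $4$, one takes $K$ to be the normalized square of the Fejér kernel of order $(N-1)/4$. The coefficients $\alpha_j,\beta_j$ are determined by imposing the $2K$ interpolation conditions $Q(f_j)=\phi_j$ and $Q'(f_j)=0$. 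The latter ensures that each $f_j$ is a local extremum of $\abs{Q}^2$, so that, together with $\abs{Q(f_j)}=1$, the polynomial stays strictly below $1$ in modulus in a neighborhood of every $f_j$.

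The delicate analytic work is to show (i) that the $2K\times 2K$ interpolation system is invertible, and (ii) that the resulting polynomial actually satisfies $\abs{Q(f)} < 1$ on all of $\bT\setminus\cT$. Both are established by a perturbation argument around the diagonal case: write the interpolation matrix as $\m{I}+\m{E}$ with $\m{E}$ controlled by the off-diagonal decay of $K,K',K''$. Here is exactly where the separation hypothesis $\Delta_{\cT}\geq 1/\lfloor(N-1)/4\rfloor$ and the numerical threshold $N\geq 257$ enter: using the explicit kernel bounds $\abs{K^{(\ell)}(f)} \lesssim (Nf)^{-2}$ for $\abs{f}\gtrsim 1/N$, one can sum the off-diagonal contributions geometrically over the points of $\cT$ (whose gaps are at least $\Delta_{\cT}$), and the constants work out to give $\|\m{E}\| < 1$ precisely when $N\Delta_{\cT}$ exceeds the stated threshold. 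Once the coefficients $\alpha_j,\beta_j$ are bounded in norm, one splits $\bT$ into a near region $\bigcup_j [f_j - c/N, f_j+c/N]$, on which a Taylor/quadratic expansion at $f_j$ yields $\abs{Q(f)}^2 \leq 1 - \epsilon (f-f_j)^2 N^2$, and a far region on which the kernel bounds directly give $\abs{Q(f)} < 1$. Both bounds require the same kind of summation estimate, and verifying the numerical constants (which is why the threshold $N\geq 257$ appears) is the main technical burden; I would defer it to the detailed computations in the cited papers.
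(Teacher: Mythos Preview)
The paper does not supply its own proof of this theorem; it is quoted verbatim as a result of Cand\`es--Fernandez-Granda \cite{candes2013towards} and invoked as a black box. Your proposal is the correct outline of the proof in that reference: the dual-certificate reduction, the interpolation ansatz built from a squared Fej\'er-type kernel and its derivative, the diagonal-dominance argument for invertibility of the $2K\times 2K$ system, and the near/far split for establishing $\abs{Q(f)}<1$ off $\cT$. That is exactly the approach of \cite{candes2013towards}, and there is no alternative route in the present paper to compare against.
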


By Theorem \ref{thm:completedata}, in the noiseless case the frequencies can be exactly recovered by solving the SDP in \eqref{formu:AN_SDP2} if the frequencies are separated by at least $\frac{4}{N}$ (note that this frequency separation condition is sufficient but not necessary and it has recently been relaxed to $\frac{2.52}{N}$ in \cite{fernandez2016super}). Moreover, the condition $N\geq257$ is a technical requirement that should not pose any serious problem in practice.

In the SLA case, the SDP resulting from \eqref{eq:sparseoptinz3} is given by:
\equ{\min_{x,\m{u},\m{z}} \frac{1}{2} x + \frac{1}{2} u_1, \st \begin{bmatrix} x & \m{z}^H \\ \m{z} & \m{T}\sbra{\m{u}} \end{bmatrix}\geq \m{0}, \m{z}_{\Omega} = \m{y}_{\Omega}. \label{formu:AN_SDP3}}
The following result shows that the frequencies can be exactly recovered by solving \eqref{formu:AN_SDP3} if sufficiently many samples are observed and the same frequency separation condition as above is satisfied.

\begin{thm}[\cite{tang2012compressed}] Suppose we observe $\m{y}=\sum_{j=1}^K c_j\m{a}\sbra{f_j,\phi_j}$
on the index set $\Omega$, where $\Omega\subset\lbra{1,\dots,N}$ is of size $M$ and is selected uniformly at random. Assume that $\lbra{\phi_j}_{j=1}^K$ are drawn i.i.d. from
the uniform distribution on the complex unit circle.\footnote{This condition has been relaxed in \cite{yang2016exact}, where it was assumed that $\lbra{\phi_j}_{j=1}^K$ are independent with zero mean.} If $\Delta_{\cT}\geq \frac{1}{\lfloor(N-1)/4\rfloor}$, then there exists a numerical constant $C$ such that
\equ{M\geq C\max\lbra{\log^2\frac{N}{\delta}, K\log\frac{K}{\delta}\log\frac{N}{\delta}} \label{formu:AN_bound}}
is sufficient to guarantee that, with probability at least $1-\delta$, $\m{y}$ is the unique optimizer for \eqref{formu:AN_SDP3} and $\m{y}=\sum_{j=1}^K c_j\m{a}\sbra{f_j,\phi_j}$ is the unique atomic decomposition satisfying $\norm{\m{y}}_{\cA}=\sum_{j=1}^K c_j$. \label{thm:incompletedata}
\end{thm}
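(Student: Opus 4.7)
The proof proceeds through convex duality. A Lagrangian analysis of the atomic norm problem underlying \eqref{formu:AN_SDP3} shows that $\m{y}_{\Omega}$ is the unique primal optimizer, and $\sum_{j=1}^K c_j\m{a}\sbra{f_j,\phi_j}$ the unique atomic decomposition achieving $\atomn{\m{y}}$, as soon as one can exhibit a \emph{dual certificate}: a vector $\m{q}\in\bC^M$ whose trigonometric polynomial
\equ{Q(f) = \sum_{n\in\Omega} q_n e^{-i2\pi(n-1)f}}
satisfies the interpolation conditions $Q\sbra{f_j}=\phi_j$ for every $j$ together with the strict sub-unit bound $\abs{Q(f)}<1$ for every $f\in\bT\setminus\cT$. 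The whole task therefore reduces to producing such a $Q$ with probability at least $1-\delta$.

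The plan is to construct $Q$ by adapting the Fej\'er-type interpolation underlying Theorem \ref{thm:completedata} to the random sampling setting. First, I would model $\Omega$ by Bernoulli sampling with parameter $p=M/N$, which is equivalent to the uniform subset model up to absolute constants. Introducing the random kernel
\equ{K_{\Omega}(f) = \frac{1}{p}\sum_{n\in\Omega} g_n e^{-i2\pi(n-1)f},}
where the $g_n$ are the coefficients of the squared Fej\'er kernel used by Cand\`es and Fernandez-Granda \cite{candes2013towards}, together with its derivatives $K_{\Omega}^{(\ell)}$, I would look for the certificate in the form
\equ{Q(f) = \sum_{j=1}^K \alpha_j K_{\Omega}\sbra{f-f_j} + \beta_j K_{\Omega}^{(1)}\sbra{f-f_j},}
with $\sbra{\m{\alpha},\m{\beta}}\in\bC^{2K}$ solving the $2K$ interpolation conditions $Q\sbra{f_j}=\phi_j$ and $Q'\sbra{f_j}=0$. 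The derivative conditions force each $f_j$ to be a critical point of $\abs{Q}^2$, which is the mechanism that eventually yields $\abs{Q(f)}<1$ off the support.

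The analysis then has three ingredients. A matrix Bernstein argument shows that the $2K\times 2K$ block Gram matrix encoding the interpolation system concentrates around its deterministic counterpart from the complete-data proof once $M\gtrsim \log\sbra{N/\delta}$, so that the system is invertible and $\sbra{\m{\alpha},\m{\beta}}$ is well controlled; the separation $\Delta_{\cT}\geq \frac{1}{\lfloor(N-1)/4\rfloor}$ enters at this point to guarantee conditioning of the deterministic Gram matrix. A vector Bernstein bound then controls $\abs{Q(f)}$ at a sufficiently fine deterministic grid, and a Bernstein polynomial inequality lifts the pointwise bound to the whole torus. The delicate issue that the randomness of $\Omega$ is coupled across the on-support interpolation and the off-support sup-norm control is handled by a golfing scheme \`a la Gross: the certificate is built as $Q=\sum_{\ell}Q^{(\ell)}$ over $O\sbra{\log N}$ independent Bernoulli batches, each $Q^{(\ell)}$ correcting the interpolation residual left by its predecessors. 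The iid uniform phases $\phi_j$ are used through Hoeffding-type bounds on the quadratic forms that appear when bounding $\abs{Q}$, and they are what produces the $K\log\sbra{K/\delta}\log\sbra{N/\delta}$ term in \eqref{formu:AN_bound}.

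The hardest part will be executing the golfing argument cleanly: one must simultaneously track the on-support residual and the off-support sup-norm across $O\sbra{\log N}$ rounds, verifying that each round contracts the residual geometrically while consuming only a $\log\sbra{N/\delta}$ factor from its Bernstein bound; the product of these two logarithms is exactly the $\log^2\sbra{N/\delta}$ term of \eqref{formu:AN_bound}. Random phases are essential here, since deterministic $\phi_j$ would force a union bound over all possible sign patterns and spoil the $K\log K$ scaling; as noted in \cite{yang2016exact}, the uniformity assumption can in fact be relaxed to independent zero-mean phases, which is precisely what the Hoeffding-type step requires.
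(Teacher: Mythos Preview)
The paper itself does not prove this theorem: it is quoted as a result from \cite{tang2012compressed} and no argument is given in the text, so there is no ``paper's own proof'' to compare against. What you have written is a sketch of the original Tang--Bhaskar--Shah--Recht argument, and the overall architecture (dual certificate via a randomized Fej\'er-type kernel, interpolation system, matrix Bernstein for invertibility, scalar concentration on a grid plus a Bernstein polynomial inequality to pass to the continuum, and Hoeffding-type bounds exploiting the random phases) is correct.

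One point in your outline deviates from the actual proof in \cite{tang2012compressed}: the dual polynomial there is \emph{not} built by a golfing scheme. The certificate is constructed in a single shot from the random kernel $K_\Omega$ and its derivative, and the coupling between the on-support interpolation and the off-support sup-norm control is handled by conditioning and by separating the random polynomial into a deterministic part (the full-data certificate of \cite{candes2013towards}) plus a zero-mean fluctuation, rather than by iterating over independent Bernoulli batches. The two logarithmic factors in \eqref{formu:AN_bound} arise from a union bound over an $O(N)$-point grid combined with the Bernstein tail, not from $O(\log N)$ golfing rounds. A golfing approach could presumably be made to work and would be an interesting alternative, but it is not what the cited reference does, and your sketch should not attribute that structure to it.
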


In the presence of noise, the SDP resulting from the unconstrained formulation in \eqref{eq:sparseoptinz2} is given by:
\equ{\min_{x,\m{u},\m{z}} \frac{\lambda}{2} \sbra{x + u_1} + \frac{1}{2}\twon{\m{z}_{\Omega} - \m{y}_{\Omega}}^2, \st \begin{bmatrix} x & \m{z}^H \\ \m{z} & \m{T}\sbra{\m{u}} \end{bmatrix}\geq \m{0}. \label{formu:AN_SDP4}}
While it is clear that the regularization parameter $\lambda$ is used to balance the signal sparsity and the data fidelity, it is less clear how to choose it. Under the assumption of i.i.d. Gaussian noise, this choice has been studied in \cite{bhaskar2013atomic,tang2015near,yang2015gridless}. For ULAs the paper \cite{bhaskar2013atomic} shows that if we let $\lambda \approx \sqrt{M\log M\sigma}$, where $\sigma$ denotes the noise variance, then the signal estimate $\widehat{\m{z}}$ given by \eqref{formu:AN_SDP4} has the following per-element expected reconstruction error:
\equ{\frac{1}{M} \bE \twon{\widehat{\m{z}} - \m{z}}^2 \leq \sqrt{\frac{\log M}{M}\sigma} \cdot\sum_{k=1}^K c_k. \label{eq:zboundslow}}
This error bound implies that if $K=o\sbra{\sqrt{\frac{M}{\log M}}}$, then the estimate $\widehat{\m{z}}$ is statistically consistent. Moreover, the paper \cite{tang2015near} shows that if we let $\lambda = C\sqrt{M\log M\sigma}$, where $C>1$ is a constant (not explicitly given), and if the frequencies are sufficiently separated as in Theorem \ref{thm:completedata}, then the following error bound can be obtained with high probability:
\equ{\frac{1}{M} \twon{\widehat{\m{z}} - \m{z}}^2 = O\sbra{\frac{K\log M}{M}\sigma},}
which is nearly minimax optimal. This implies that the estimate is consistent if $K=o\sbra{\frac{M}{\log M}}$. Furthermore, the frequencies and the amplitudes can be stably estimated as well \cite{tang2015near}. Finally, note that the result in \cite{bhaskar2013atomic} has been generalized to the SLA case in \cite{yang2015gridless}. It was shown that if we let $\lambda \approx \sqrt{M\log N\sigma}$ in such a case, it holds similarly to \eqref{eq:zboundslow} that
\equ{\frac{1}{M} \bE \twon{\widehat{\m{z}}_{\Omega} - \m{z}_{\Omega}}^2 \leq \sqrt{\frac{\log N}{M}\sigma}\cdot \sum_{k=1}^K c_k.}
This means that the estimate $\widehat{\m{z}}_{\Omega}$ is consistent if $K=o\sbra{\sqrt{\frac{M}{\log N}}}$.

\subsubsection{Hankel-based Nuclear Norm}
Another choice of $\cM\sbra{\m{z}}$ is the Hankel-based nuclear norm that was proposed in \cite{chen2014robust}. This metric is introduced based on the following observation. Given $\m{z}$ as in \eqref{eq:modelSMV}, let us form the Hankel matrix: \equ{\m{H}\sbra{\m{z}} = \begin{bmatrix} z_1 & z_2 & \dots & z_n \\ z_2 & z_3 & \dots & z_{n+1} \\ \vdots & \vdots & \ddots & \vdots \\ z_m & z_{m+1} & \dots & z_N \end{bmatrix}, \label{eq:hankel}}
where $m+n = N+1$. It follows that
\equ{\m{H}\sbra{\m{z}} = \sum_{k=1}^K s_k \begin{bmatrix} 1 \\ e^{i2\pi f_k} \\ \vdots \\ e^{i2\pi (m-1)f_k} \end{bmatrix} \begin{bmatrix} 1 & e^{i2\pi f_k} & \dots & e^{i2\pi (n-1)f_k} \end{bmatrix}. \label{eq:Hankdec}}
If $K<\min\sbra{m,n}$, then we have that $\m{H}\sbra{\m{z}}$ is a low rank matrix with
\equ{\rank\sbra{\m{H}\sbra{\m{z}}} = K. }
To reconstruct $\m{z}$, therefore, we may consider the reconstruction of $\m{H}\sbra{\m{z}}$ by choosing the sparse metric as $\rank\sbra{\m{H}\sbra{\m{z}}}$. If $\m{z}$ can be determined for the resulting rank minimization problem, then the frequencies may be recovered from $\m{z}$.

Since the rank minimization cannot be easily solved, we seek a convex relaxation of $\rank\sbra{\m{H}\sbra{\m{z}}}$. The nuclear norm is a natural choice, which leads to:
\equ{\cM\sbra{\m{z}} = \norm{\m{H}\sbra{\m{z}}}_{\star}. \label{eq:MisnnHz}}
The optimization problems resulting from \eqref{eq:sparseoptinz}-\eqref{eq:sparseoptinz3} by using \eqref{eq:MisnnHz} are referred to as enhanced matrix completion (EMaC) in \cite{chen2014robust}. Note that the nuclear norm can be formulated as the following SDP \cite{fazel2001rank}:
\equ{\norm{\m{H}\sbra{\m{z}}}_{\star} = \min_{\m{Q}_1,\m{Q}_2}\frac{1}{2}\mbra{\tr\sbra{\m{Q}_1} + \tr\sbra{\m{Q}_2}}, \st \begin{bmatrix} \m{Q}_1 & \m{H}\sbra{\m{z}}^H \\ \m{H}\sbra{\m{z}} & \m{Q}_2 \end{bmatrix} \geq \m{0}. \label{eq:Hnnsdp}}
As a result, like the atomic norm method, the EMaC problems can be cast as SDP and solved using off-the-shelf solvers.

Theoretical guarantees for EMaC have been provided in the SLA case in \cite{chen2014robust} which, to some extent, are similar to those for the atomic norm method. In particular, it was shown that the signal $\m{z}$ can be exactly recovered in the absence of noise and stably recovered in the presence of bounded noise if the number of measurements $M$ exceeds a constant times the number of sinusoids $K$ up to a polylog factor, as given by \eqref{formu:AN_bound}, and if a certain coherence condition is satisfied. It is argued in \cite{chen2014robust} that the coherence condition required by EMaC can be weaker than the frequency separation condition required by ANM and thus higher resolution might be obtained by EMaC as compared to ANM. Connections between the two methods will be studied in the following subsection.

\subsubsection{Connection between ANM and EMaC}
To investigate the connection between ANM and EMaC, we define the following set of complex exponentials as a new atomic set:
\equ{\cA'= \lbra{\m{a}'\sbra{\phi} = \mbra{1,\phi,\dots,\phi^{N-1}}^T:\; \phi\in\bC}.}
It is evident that, as compared to $\cA'$, the complex exponentials are restricted to have constant modulus in $\cA$ with $\abs{\phi}=1$ and thus $\cA$ is a subset of $\cA'$. For any $\m{z}\in\bC^N$, we can similarly define the atomic $\ell_0$ norm with respect to $\cA'$, denoted by $\norm{\m{z}}_{\cA',0}$. We have the following result.
\begin{thm} For any $\m{z}$ it holds that
\equ{\norm{\m{z}}_{\cA',0} \geq \rank\sbra{\m{H}\sbra{\m{z}}}. \label{eq:ineqatom0zH}}
Moreover, if $\rank\sbra{\m{H}\sbra{\m{z}}}<\min\sbra{m,n}$, then
\equ{\norm{\m{z}}_{\cA',0} = \rank\sbra{\m{H}\sbra{\m{z}}} \label{eq:eqatom0zH}}
except for degenerate cases. \label{thm:atom0nH}
\end{thm}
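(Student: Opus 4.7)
The plan follows the shift-invariance / ESPRIT-style paradigm already illustrated in the proof of Theorem \ref{thm:VD}. To establish the inequality \eqref{eq:ineqatom0zH}, I would take any minimal atomic decomposition $\m{z} = \sum_{k=1}^{r'} s_k \m{a}'(\phi_k)$ with $r' = \norm{\m{z}}_{\cA',0}$ and substitute it into the Hankel map. Because the $(i,j)$ entry of $\m{H}(\m{a}'(\phi))$ equals $\phi^{i+j-2} = \phi^{i-1}\phi^{j-1}$, each atom contributes a rank-one outer product of two Vandermonde vectors of sizes $m$ and $n$ built from $\phi_k$; hence $\m{H}(\m{z})$ is a sum of $r'$ rank-one matrices and therefore $\rank\sbra{\m{H}\sbra{\m{z}}} \leq r' = \norm{\m{z}}_{\cA',0}$.

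For the equality \eqref{eq:eqatom0zH} under $r := \rank\sbra{\m{H}\sbra{\m{z}}} < \min(m,n)$, my plan is to recover an explicit $r$-atom decomposition of $\m{z}$. First, I would factor $\m{H}(\m{z}) = \m{U}\m{V}^T$ with $\m{U}\in\bC^{m\times r}$ and $\m{V}\in\bC^{n\times r}$ of full column rank $r$, and then exploit the defining Hankel identity $\m{H}(\m{z})_{i+1,j} = \m{H}(\m{z})_{i,j+1}$. Denoting by $\m{U}_{-m},\m{U}_{-1}$ (respectively $\m{V}_{-n},\m{V}_{-1}$) the matrices obtained by deleting the last or first row of $\m{U}$ (respectively $\m{V}$), this identity translates into $\m{U}_{-m}\m{V}_{-1}^T = \m{U}_{-1}\m{V}_{-n}^T$. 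In the non-degenerate case where $\m{U}_{-m}$ still has rank $r$---which is possible precisely because $r<m$---there is a unique $r\times r$ shift matrix $\m{Q}$ with $\m{U}_{-1} = \m{U}_{-m}\m{Q}$; left-cancelling $\m{U}_{-m}$ in the shift identity then gives $\m{V}_{-1} = \m{V}_{-n}\m{Q}^T$, so the same $r\times r$ operator governs both factors.

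Finally, I would diagonalize $\m{Q} = \widetilde{\m{Q}}\,\diag\sbra{\phi_1,\dots,\phi_r}\,\widetilde{\m{Q}}^{-1}$ and pass to the new bases $\m{U}\widetilde{\m{Q}}$ and $\m{V}\widetilde{\m{S}}$, where the columns of $\widetilde{\m{S}}$ are the eigenvectors of $\m{Q}^T$ sharing the eigenvalues $\phi_k$. The shift identities force the $k$-th new column on each side to obey a scalar recurrence with ratio $\phi_k$, hence to be a scalar multiple of the appropriate Vandermonde vector built from $\phi_k$. Since $\m{H}(\m{z})_{i,j}$ must depend only on $i+j$, the cross terms with $k\neq k'$ in the resulting bilinear expansion cannot survive (their modes $\phi_k^{i-1}\phi_{k'}^{j-1}$ are not functions of $i+j$ when $\phi_k\neq \phi_{k'}$), so $\m{H}(\m{z})$ reduces to $r$ Vandermonde rank-one terms. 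Reading off the entries gives $z_j = \sum_{k=1}^r \alpha_k \phi_k^{j-1}$, i.e.\ $\m{z} = \sum_{k=1}^r \alpha_k \m{a}'(\phi_k)$, which yields $\norm{\m{z}}_{\cA',0}\leq r$ and, combined with \eqref{eq:ineqatom0zH}, the equality. The main obstacle---and precisely what the qualifier \emph{except for degenerate cases} absorbs---is the two genericity assumptions invoked above: $\m{U}_{-m}$ (equivalently $\m{V}_{-n}$) must retain full column rank, and $\m{Q}$ must have $r$ distinct eigenvalues. Both conditions cut out Zariski-open sets; when either fails, repeated-root modes of the form $j\phi^{j-1}$ enter and cannot be absorbed into $\cA'$, so one typically has $\norm{\m{z}}_{\cA',0}>r$ on this exceptional locus.
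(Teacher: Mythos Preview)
Your argument for the inequality \eqref{eq:ineqatom0zH} is identical to the paper's: take a minimal $\cA'$-decomposition and observe that $\m{H}(\m{z})$ is a sum of that many rank-one Vandermonde outer products.

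For the equality \eqref{eq:eqatom0zH}, however, you take a genuinely different route. The paper simply invokes Kronecker's theorem for low-rank Hankel matrices as a black box: if $\rank\sbra{\m{H}(\m{z})}=K'<\min(m,n)$ then, except in degenerate cases, $\m{z}$ admits a representation as a sum of $K'$ complex exponentials, which immediately gives $\norm{\m{z}}_{\cA',0}\le K'$. You instead \emph{reprove} (the generic case of) Kronecker's theorem constructively, via a shift-invariance / matrix-pencil argument that mirrors the proof of Theorem~\ref{thm:VD}: factor $\m{H}(\m{z})=\m{U}\m{V}^T$, extract the shift operator $\m{Q}$ from $\m{U}_{-1}=\m{U}_{-m}\m{Q}$, diagonalize, and read off the modes $\phi_k$. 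This buys you an explicit recovery algorithm and a concrete description of the exceptional set (rank-drop of $\m{U}_{-m}$ or repeated eigenvalues of $\m{Q}$, corresponding to confluent / Jordan-block modes $j\phi^{j-1}$), whereas the paper's appeal to Kronecker is shorter but less transparent about what ``degenerate'' means. One small remark: once you pass to the bases $\m{U}\widetilde{\m{Q}}$ and $\m{V}\widetilde{\m{S}}$ with $\widetilde{\m{S}}=\widetilde{\m{Q}}^{-T}$, the factorization $\m{H}(\m{z})=(\m{U}\widetilde{\m{Q}})(\m{V}\widetilde{\m{S}})^T$ is already a sum of $r$ rank-one Vandermonde terms with no cross terms appearing, so the separate argument about cross terms ``not surviving'' is unnecessary.
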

\begin{proof} Suppose that $\norm{\m{z}}_{\cA',0} = K'$. This means that there exists a $K'$-atomic decomposition for $\m{z}$ with respect to $\cA'$:
\equ{\m{z} = \sum_{k=1}^{K'} \m{a}\sbra{\phi_k} s'_k, \quad \phi_k\in\bC. \label{eq:zatomdecH}}
It follows that $\m{H}\sbra{\m{z}}$ admits a decomposition similar to \eqref{eq:Hankdec} and therefore that $\rank\sbra{\m{H}\sbra{\m{z}}}\leq K'$ and hence \eqref{eq:ineqatom0zH} holds.

The second part can be shown by applying the Kronecker's theorem for Hankel matrices (see, e.g., \cite{rochberg1987toeplitz}). In particular, the Kronecker's theorem states that if $\rank\sbra{\m{H}\sbra{\m{z}}} = K' < \min\sbra{m,n}$, then $\m{z}$ can be written as in \eqref{eq:zatomdecH} except for degenerate cases. According to the definition of $\norm{\m{z}}_{\cA',0}$, we have that
\equ{\norm{\m{z}}_{\cA',0} \leq K' = \rank\sbra{\m{H}\sbra{\m{z}}}}
This together with \eqref{eq:ineqatom0zH} concludes the proof of \eqref{eq:eqatom0zH}.
\end{proof}

By Theorem \ref{thm:atom0nH}, we have linked $\rank\sbra{\m{H}\sbra{\m{z}}}$, which motivated the use of its convex relaxation $\norm{\m{H}\sbra{\m{z}}}_{\star}$, to an atomic $\ell_0$ norm. In the regime of interest here $\m{H}\sbra{\m{z}}$ is low-rank and hence $\rank\sbra{\m{H}\sbra{\m{z}}}$ is almost identical to the atomic $\ell_0$ norm induced by $\cA'$. To compare $\norm{\m{z}}_{\cA,0}$ and  $\norm{\m{z}}_{\cA',0}$, we have the following result.

\begin{thm} For any $\m{z}$ it holds that
\equ{\norm{\m{z}}_{\cA',0} \leq\norm{\m{z}}_{\cA,0}.} \label{thm:comp2atom0n}
\end{thm}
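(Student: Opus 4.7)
The plan is to observe that $\cA$ is literally a subset of $\cA'$, so every admissible decomposition used in the definition of $\norm{\m{z}}_{\cA,0}$ is also an admissible decomposition used in the definition of $\norm{\m{z}}_{\cA',0}$. Since the atomic $\ell_0$ norm is an infimum of the number of atoms over such decompositions, enlarging the feasible set of decompositions can only decrease (or leave unchanged) the infimum.

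More concretely, I would first recall the two atomic sets: $\cA = \lbra{\m{a}(f)\phi:\, f\in\bT,\, \abs{\phi}=1}$, where $\m{a}(f)=[1,e^{i2\pi f},\dots,e^{i2\pi(N-1)f}]^T$, and $\cA' = \lbra{\m{a}'(\phi)=[1,\phi,\dots,\phi^{N-1}]^T:\, \phi\in\bC}$. The embedding $\cA\hookrightarrow\cA'$ is given explicitly by identifying $\m{a}(f)\phi$ with $\m{a}'(e^{i2\pi f})\phi$: the first coordinate is $\phi$ in both representations, and for every subsequent coordinate $e^{i2\pi(n-1)f}\cdot\phi$ equals $(e^{i2\pi f})^{n-1}\phi$, which is a scalar multiple of $\m{a}'(e^{i2\pi f})$. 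Hence each atom of $\cA$ (times a nonnegative coefficient) is a scalar multiple of an atom in $\cA'$, so any decomposition realizing $\norm{\m{z}}_{\cA,0}$ also realizes a decomposition of $\m{z}$ over $\cA'$ using the same number of atoms.

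I would then conclude by taking infima. If $\norm{\m{z}}_{\cA,0}=K$ with witnessing decomposition $\m{z}=\sum_{k=1}^{K}\m{a}(f_k)s_k$, then setting $\phi_k=e^{i2\pi f_k}$ yields $\m{z}=\sum_{k=1}^{K}\m{a}'(\phi_k)s_k$, which by definition \eqref{eq:atom0n} applied to $\cA'$ forces $\norm{\m{z}}_{\cA',0}\leq K=\norm{\m{z}}_{\cA,0}$. The statement for the infimum case (when the infimum is not attained) follows by the same argument applied to any minimizing sequence.

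There is no real obstacle here; the inequality is essentially a tautology built into the definition once one notices the set inclusion $\cA\subset\cA'$ (after absorbing unit-modulus phases into the atom parameter). The only thing worth being careful about is to make the identification $\m{a}(f)\phi\leftrightarrow\m{a}'(e^{i2\pi f})\phi$ explicit, so that the reader sees why the unit-modulus constraint on $\phi$ in $\cA$ does not matter: the mapping $f\mapsto e^{i2\pi f}$ already produces a unit-modulus $\phi_k$ in $\cA'$, and $\cA'$ allows arbitrary complex $\phi_k$, which only enlarges the feasible set.
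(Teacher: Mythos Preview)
Your proposal is correct and takes essentially the same approach as the paper: the paper's proof is the single sentence ``The inequality is a direct consequence of the fact that $\cA\subset \cA'$,'' and you have simply spelled out this containment and the monotonicity of the infimum in more detail. Your extra care in making the identification $\m{a}(f)=\m{a}'(e^{i2\pi f})$ explicit (so that any $\cA$-decomposition $\m{z}=\sum_k \m{a}(f_k)s_k$ is automatically an $\cA'$-decomposition) is exactly what is behind that one line.
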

\begin{proof} The inequality is a direct consequence of the fact that $\cA\subset \cA'$.
\end{proof}

By Theorem \ref{thm:comp2atom0n} the newly defined $\norm{\m{z}}_{\cA',0}$, which is closely related to $\rank\sbra{\m{H}\sbra{\m{z}}}$, is actually a lower bound on $\norm{\m{z}}_{\cA,0}$. It is worth noting that this lower bound is obtained by ignoring a known structure of the signal: from $\cA$ to $\cA'$ we have neglected the prior knowledge that each exponential component of $\m{z}$ has constant modulus. As a consequence, using $\norm{\m{z}}_{\cA',0}$ as the sparse metric instead of $\norm{\m{z}}_{\cA,0}$, we cannot guarantee in general, especially in the noisy case, that each component of the obtained signal $\m{z}$ corresponds to one frequency. Note that this is also true for the convex relaxation metric $\norm{\m{H}\sbra{\m{z}}}_{\star}$ as compared to $\norm{\m{z}}_{\cA}$. In contrast to this, the frequencies can be directly retrieved from the solution of the atomic norm method. From this point of view, the atomic norm method may be expected to outperform EMaC due to its better capability to capture the signal structure.

On the other hand, EMaC might have higher resolution than the atomic norm method. This is indeed true in the noiseless ULA case where the signal $\m{z}$ is completely known. In this extreme case EMaC does not suffer from any resolution limit, while the atomic norm requires a frequency separation condition for its successful operation (at least theoretically).

\subsubsection{Covariance Fitting Method: Gridless SPICE (GLS)} \label{sec:GLS_SMV}

GLS was introduced in \cite{yang2014discretization,yang2015gridless} as a gridless version of the SPICE method presented in Subsection \ref{sec:SPICE}. Since SPICE is covariance-based and the data covariance matrix is a highly nonlinear function of the DOA parameters of interest, gridding is performed in SPICE to linearize the problem based on the zeroth order approximation. But this is not required in the case of ULAs or SLAs. The key idea of GLS is to re-parameterize the data covariance matrix using a PSD Toeplitz matrix $\m{T}\sbra{\m{u}}$, which is linear in the new parameter vector $\m{u}$, by making use of the Vandermonde decomposition of Toeplitz covariance matrices. To derive GLS, naturally, we make the same assumptions as for SPICE.

We first consider the ULA case. Assume that the noise is homoscedastic (note that, like SPICE, GLS can be extended to the case of heteroscedastic noise). It follows from the arguments in Subsection \ref{sec:VD} that the data covariance matrix $\m{R}$ is a Toeplitz matrix. Therefore, $\m{R}$ can be linearly re-parameterized as:
\equ{\m{R} = \m{T}\sbra{\m{u}}, \quad \m{T}\sbra{\m{u}}\geq \m{0}. \label{eq:R_ULA_equalsigma}}
For a single snapshot, SPICE minimizes the following covariance fitting criterion:
\equ{\frobn{\m{R}^{-\frac{1}{2}}\sbra{\m{y}\m{y}^H-\m{R}}}^2 = \twon{\m{y}}^2\cdot\m{y}^H\m{R}^{-1}\m{y}+\tr\sbra{\m{R}}-2\twon{\m{y}}^2. \label{formu:criterion_SMV}}
Inserting \eqref{eq:R_ULA_equalsigma} into \eqref{formu:criterion_SMV}, the resulting GLS optimization problem is given by:
\equ{\begin{split}
&\min_{\m{u}} \twon{\m{y}}^2\cdot\m{y}^H\m{T}^{-1}\sbra{\m{u}}\m{y}+ \tr\sbra{\m{T}\sbra{\m{u}}}, \st \m{T}\sbra{\m{u}}\geq\m{0}\\
\Leftrightarrow& \min_{x,\m{u}} \twon{\m{y}}^2 x + Mu_1, \st \m{T}\sbra{\m{u}}\geq\m{0} \text{ and } x\geq \m{y}^H\m{T}^{-1}\sbra{\m{u}}\m{y}\\
\Leftrightarrow& \min_{x,\m{u}} \twon{\m{y}}^2 x + Mu_1, \st \begin{bmatrix}x& \m{y}^H \\ \m{y} & \m{T}\sbra{\m{u}} \end{bmatrix}\geq\m{0}.\end{split} \label{formu:SDP_identical_SMV}}
Therefore, the covariance fitting problem has been cast as an SDP that can be solved in a polynomial time. Once the problem is solved, the data covariance estimate $\widehat{\m{R}}=\m{T}\sbra{\widehat{\m{u}}}$ is obtained, where $\widehat{\m{u}}$ denotes the solution of $\m{u}$. Finally, the estimates of the parameters $\sbra{\widehat{\m{f}}, \widehat{\m{p}}, \widehat{\sigma}}$ can be obtained from the decomposition of $\widehat{\m{R}}$ by applying Corollary \ref{cor:Tinsignalnoise}. Moreover, we note that the GLS optimization problem in \eqref{formu:SDP_identical_SMV} is very similar to the SDP for the atomic norm. Their connection will be discussed later in more detail.

In the SLA case, corresponding to \eqref{eq:R_ULA_equalsigma}, the covariance fitting criterion of SPICE is given by:
\equ{\frobn{\m{R}_{\Omega}^{-\frac{1}{2}}\sbra{\m{y}_{\Omega}\m{y}_{\Omega}^H- \m{R}_{\Omega}}}^2 = \twon{\m{y}_{\Omega}}^2\cdot\m{y}_{\Omega}^H \m{R}_{\Omega}^{-1}\m{y}_{\Omega}+\tr\sbra{\m{R}_{\Omega}} -2\twon{\m{y}_{\Omega}}^2, \label{formu:criterion_SMV_SLA}}
where $\m{R}_{\Omega}$ denotes the covariance matrix of $\m{y}_{\Omega}$. To explicitly express $\m{R}_{\Omega}$, we let $\m{\Gamma}_{\Omega}\in\lbra{0,1}^{M\times N}$ be the row-selection matrix satisfying
\equ{\m{y}_{\Omega} = \m{\Gamma}_{\Omega} \m{y},}
where $\m{y}\in\bC^N$ denotes the full data vector of the virtual $N$-element ULA. More concretely, $\m{\Gamma}_{\Omega}$ is such that its $j$th row contains all 0s but a single 1 at the $\Omega_j$th position. It follows that
\equ{\m{R}_{\Omega} = \bE\m{y}_{\Omega}\m{y}_{\Omega}^H = \m{\Gamma}_{\Omega} \cdot \bE\m{y}\m{y}^H \cdot \m{\Gamma}_{\Omega}^T = \m{\Gamma}_{\Omega} \m{R} \m{\Gamma}_{\Omega}^T.}
This means that $\m{R}_{\Omega}$ is a submatrix of the covariance matrix $\m{R}$ of the virtual full data $\m{y}$. Therefore, using the parameterization of $\m{R}$ as in \eqref{eq:R_ULA_equalsigma}, $\m{R}_{\Omega}$ can be linearly parameterized as:
\equ{\m{R}_{\Omega} = \m{\Gamma}_{\Omega} \m{T}\sbra{\m{u}} \m{\Gamma}_{\Omega}^T, \quad \m{T}\sbra{\m{u}}\geq \m{0}.\label{eq:R_SLA1}}
Inserting \eqref{eq:R_SLA1} into \eqref{formu:criterion_SMV_SLA}, we have that the GLS optimization problem resulting from \eqref{formu:criterion_SMV_SLA} can be cast as the following SDP:
\equ{\min_{x,\m{u}} \twon{\m{y}_{\Omega}}^2 x + Mu_1, \st \begin{bmatrix}x& \m{y}_{\Omega}^H \\ \m{y}_{\Omega} & \m{\Gamma}_{\Omega} \m{T}\sbra{\m{u}} \m{\Gamma}_{\Omega}^T \end{bmatrix}\geq\m{0}. \label{formu:SDP_identical_SMV_SLA}}
Once the SDP is solved, the parameters of interest can be retrieved from the full data covariance estimate $\widehat{\m{R}}=\m{T}\sbra{\widehat{\m{u}}}$ as in the ULA case.

GLS is guaranteed to produce a sparse solution with at most $N-1$ sources. This is a direct consequence of the frequency retrieval step, see Corollary \ref{cor:Tinsignalnoise}. In general, GLS overestimates the true source number $K$ in the presence of noise. This is reasonable because in GLS we do not assume any knowledge of the source number or of the noise variance(s).

An automatic source number estimation approach (a.k.a.~model order selection) has been proposed in \cite{yang2015gridless} based on the eigenvalues of the data covariance estimate $\widehat{\m{R}}$. The basic intuition behind it is that the larger eigenvalues correspond to the sources while the smaller ones are caused more likely by noise. The SORTE algorithm \cite{he2010detecting} is adopted in \cite{yang2015gridless} to identify these two groups of eigenvalues and thus provide an estimate of the source number. Furthermore, based on the source number, the frequency estimates can be refined by using a subspace method such as MUSIC. Readers are referred to \cite{yang2015gridless} for details.

\subsubsection{Connection between ANM and GLS} \label{sec:ANM_GLS_SS}
GLS is strongly connected to ANM. In this subsection, we show that GLS is equivalent to ANM as if there were no noise in $\m{y}$ and with a slightly different frequency retrieval process \cite{yang2015gridless}. We note that a similar connection in the discrete setting has been provided in \cite{rojas2013note,babu2014connection}.

In the ULA case this connection can be shown based on the following equivalences/equalities:
\equ{\begin{split} \eqref{formu:SDP_identical_SMV}
&\Leftrightarrow \min_{\m{u}} \twon{\m{y}}^2\cdot\m{y}^H\m{T}^{-1}\sbra{\m{u}}\m{y}+ Mu_1, \st \m{T}\sbra{\m{u}}\geq\m{0} \\
&\Leftrightarrow \min_{\m{u}} M\lbra{\mbra{\frac{\twon{\m{y}}}{\sqrt{M}}\m{y}^H}\m{T}^{-1}\sbra{\m{u}} \mbra{\frac{\twon{\m{y}}}{\sqrt{M}}\m{y}}+ u_1 }, \st \m{T}\sbra{\m{u}}\geq\m{0} \\
&\Leftrightarrow \min_{x,\m{u}} M \lbra{ x + u_1}, \st \begin{bmatrix}x& \frac{\twon{\m{y}}}{\sqrt{M}}\m{y}^H \\ \frac{\twon{\m{y}}}{\sqrt{M}}\m{y} & \m{T}\sbra{\m{u}} \end{bmatrix}\geq\m{0} \\
&\Leftrightarrow 2M \atomn{\frac{\twon{\m{y}}}{\sqrt{M}}\m{y}} \\
&\Leftrightarrow  2\sqrt{M}\twon{\m{y}} \cdot \atomn{\m{y}}.
\end{split}}
This means that GLS is equivalent to computing the atomic norm of the noisy data $\m{y}$ (up to a scaling factor).

In the SLA case, we need the following equality that holds for $\m{R}>\m{0}$ \cite{yang2015gridless}:
\equ{\m{y}_{\Omega}^H \mbra{\m{\Gamma}_{\Omega}\m{R} \m{\Gamma}^T_{\Omega}} ^{-1}\m{y}_{\Omega} = \min_{\m{z}} \m{z}^H \m{R}^{-1}\m{z}, \st \m{z}_{\Omega} = \m{y}_{\Omega}. \label{eq:augment}}
As a result, we have that
\equ{\begin{split} \eqref{formu:SDP_identical_SMV_SLA}
&\Leftrightarrow \min_{\m{u}} \twon{\m{y}_{\Omega}}^2\cdot\m{y}_{\Omega}^H \mbra{\m{\Gamma}_{\Omega}\m{T}\sbra{\m{u}} \m{\Gamma}^T_{\Omega}} ^{-1}\m{y}_{\Omega}+ Mu_1, \st \m{T}\sbra{\m{u}}\geq\m{0} \\
&\Leftrightarrow \min_{\m{u},\m{z}} \twon{\m{y}_{\Omega}}^2\cdot\m{z}^H \m{T}^{-1}\sbra{\m{u}} \m{z}+ Mu_1, \st \m{T}\sbra{\m{u}}\geq\m{0}, \m{z}_{\Omega} = \m{y}_{\Omega} \\
&\Leftrightarrow \min_{\m{u},\m{z}} M\lbra{\mbra{\frac{\twon{\m{y}_{\Omega}}}{\sqrt{M}}\m{z}^H}\m{T}^{-1}\sbra{\m{u}} \mbra{\frac{\twon{\m{y}_{\Omega}}}{\sqrt{M}}\m{z}}+ u_1 }, \st \m{T}\sbra{\m{u}}\geq\m{0}, \m{z}_{\Omega} = \m{y}_{\Omega} \\
&\Leftrightarrow \min_{x,\m{u},\m{z}} M \lbra{ x + u_1}, \st \begin{bmatrix}x& \frac{\twon{\m{y}_{\Omega}}}{\sqrt{M}}\m{z}^H \\ \frac{\twon{\m{y}_{\Omega}}}{\sqrt{M}}\m{z} & \m{T}\sbra{\m{u}} \end{bmatrix}\geq\m{0}, \m{z}_{\Omega} = \m{y}_{\Omega} \\
&\Leftrightarrow \min_{\m{z}} 2M \atomn{\frac{\twon{\m{y}_{\Omega}}}{\sqrt{M}}\m{z}}, \st \m{z}_{\Omega} = \m{y}_{\Omega} \\
&\Leftrightarrow \min_{\m{z}} 2\sqrt{M}\twon{\m{y}_{\Omega}} \cdot \atomn{\m{z}}, \st \m{z}_{\Omega} = \m{y}_{\Omega} \\
&\Leftrightarrow \min_{\m{z}} \atomn{\m{z}}, \st \m{z}_{\Omega} = \m{y}_{\Omega}.
\end{split}}
This means, like in the ULA case, that GLS is equivalent to ANM subject to the data consistency as if there were no noise.

Finally, note that GLS is practically attractive since it does not require knowledge of the noise level. Regarding this fact, note that GLS is different from ANM in the frequency retrieval process: whereas Theorem \ref{thm:VD} is used in ANM, Corollary \ref{cor:Tinsignalnoise} is employed in GLS since the noise variance is also encoded in the Toeplitz covariance matrix $\m{T}\sbra{\m{u}}$, besides the frequencies.


\subsection{The Multiple Snapshot Case: Covariance Fitting Methods}
In this and the following subsections we will study gridless DOA estimation methods for multiple snapshots. The methods that we will introduce are mainly based on or inspired by those in the single snapshot case in the preceding subsection. The key techniques of these methods exploit the temporal redundancy of the multiple snapshots for possibly improved performance. We have decided to introduce the covariance fitting methods first since they appeared earlier than their deterministic peers. In this kind of methods, differently from the deterministic ones, certain statistical assumptions on the sources (like for SPICE) are required to explicitly express the data covariance matrix. We will discuss three covariance-based gridless sparse methods: GLS in \cite{yang2014discretization}, the SMV-based atomic norm method in \cite{tan2014direction} and the low rank matrix denoising approach in \cite{pal2014grid}. While GLS is applicable to an arbitrary number of snapshots, the latter two can only be used if there are sufficiently many snapshots.

\subsubsection{Gridless SPICE (GLS)} \label{sec:GLS_MMV}
In the presence of multiple snapshots, GLS is derived in a similar way as in the single snapshot case by utilizing the convex reparameterization of the data covariance matrix $\m{R}$ in \eqref{eq:R_ULA_equalsigma}. For convenience, some derivations of SPICE provided in Subsection \ref{sec:SPICE} will be repeated here. We first consider the ULA case. Let $\widetilde{\m{R}} = \frac{1}{L}\m{Y}\m{Y}^H$ be the sample covariance. In the case of $L\geq M$ when the sample covariance $\widetilde{\m{R}}$ is nonsingular, the following SPICE covariance fitting criterion is minimized:
\equ{\frobn{\m{R}^{-\frac{1}{2}}\sbra{\widetilde{\m{R}}-\m{R}}\widetilde{\m{R}}^{-\frac{1}{2}}}^2 = \tr\sbra{\m{R}^{-1}\widetilde{\m{R}}}+ \tr\sbra{\widetilde{\m{R}}^{-1}\m{R}}-2M. \label{formu:criterion1}}
Inserting \eqref{eq:R_ULA_equalsigma} into \eqref{formu:criterion1}, we have the following equivalences:
\equ{\begin{split}&\min_{\m{u}} \tr\sbra{\widetilde{\m{R}}^{\frac{1}{2}}\m{T}^{-1}\sbra{\m{u}}\widetilde{\m{R}}^{\frac{1}{2}}}+ \tr\sbra{\widetilde{\m{R}}^{-1}\m{T}\sbra{\m{u}}}, \st \m{T}\sbra{\m{u}}\geq\m{0}\\
\Leftrightarrow& \min_{\m{X},\m{u}} \tr\sbra{\m{X}}+ \tr\sbra{\widetilde{\m{R}}^{-1}\m{T}\sbra{\m{u}}}, \st \m{T}\sbra{\m{u}}\geq\m{0} \text{ and } \m{X}\geq \widetilde{\m{R}}^{\frac{1}{2}}\m{T}^{-1}\sbra{\m{u}}\widetilde{\m{R}}^{\frac{1}{2}}\\
\Leftrightarrow& \min_{\m{X},\m{u}} \tr\sbra{\m{X}}+ \tr\sbra{\widetilde{\m{R}}^{-1}\m{T}\sbra{\m{u}}}, \st \begin{bmatrix}\m{X}& \widetilde{\m{R}}^{\frac{1}{2}} \\ \widetilde{\m{R}}^{\frac{1}{2}} & \m{T}\sbra{\m{u}} \end{bmatrix}\geq\m{0}.\end{split} \label{formu:SDP_Llarge_identical}}
Therefore, the covariance fitting problem is cast as SDP. Once the problem is solved, the estimates of the parameters $\sbra{\widehat{\m{f}}, \widehat{\m{p}}, \widehat{\sigma}}$ can be obtained from the data covariance estimate $\widehat{\m{R}}=\m{T}\sbra{\widehat{\m{u}}}$ by applying Corollary \ref{cor:Tinsignalnoise}.

In the case of $L<M$ when $\widetilde{\m{R}}$ is singular, we instead minimize the criterion \equ{\frobn{\m{R}^{-\frac{1}{2}}\sbra{\widetilde{\m{R}}-\m{R}}}^2 = \tr\sbra{\m{R}^{-1}\widetilde{\m{R}}^2}+\tr\sbra{\m{R}}-2\tr\sbra{\widetilde{\m{R}}}. \label{formu:criterion2}}
Similarly, inserting \eqref{eq:R_ULA_equalsigma} into \eqref{formu:criterion2}, we obtain the following SDP:
\equ{\min_{\m{X},\m{u}} \tr\sbra{\m{X}}+ \tr\sbra{\m{T}\sbra{\m{u}}}, \st \begin{bmatrix}\m{X}& \widetilde{\m{R}} \\ \widetilde{\m{R}} & \m{T}\sbra{\m{u}} \end{bmatrix}\geq\m{0}. \label{formu:SDP_Lsmall_identical}}
The parameter estimates $\sbra{\widehat{\m{f}}, \widehat{\m{p}}, \widehat{\sigma}}$ can be obtained in the same manner as above.

The dimensionality of the SDP in \eqref{formu:SDP_Lsmall_identical} can be reduced \cite{yang2016gridless1}. To do so, let $\widetilde{\m{Y}} = \frac{1}{L}\m{Y}\sbra{\m{Y}^H\m{Y}}^{\frac{1}{2}}\in\bC^{M\times L}$ and observe that
\equ{\widetilde{\m{R}}^2 = \frac{1}{L^2}\m{Y}\m{Y}^H\m{Y}\m{Y}^H = \widetilde{\m{Y}}\widetilde{\m{Y}}^H. \label{eq:R2inYtilde}}
Inserting \eqref{eq:R2inYtilde} into \eqref{formu:criterion2}, we obtain another SDP formulation of the covariance fitting problem:
\equ{\min_{\m{X},\m{u}} \tr\sbra{\m{X}}+ \tr\sbra{\m{T}\sbra{\m{u}}}, \st \begin{bmatrix}\m{X}& \widetilde{\m{Y}}^H \\ \widetilde{\m{Y}} & \m{T}\sbra{\m{u}} \end{bmatrix}\geq\m{0}. \label{formu:SDP_Lsmall_identical2}}
Compared to \eqref{formu:SDP_Lsmall_identical}, the dimensionality of the semidefinite matrix in \eqref{formu:SDP_Lsmall_identical2} is reduced from $2M\times 2M$ to $\sbra{M+L} \times \sbra{M+L}$ (note that in this case $L<M$). This reduction can be significant in the case of a small number of snapshots.

Similar results can be obtained in the SLA case. In particular, let $\m{R}_{\Omega}$ and $\widetilde{\m{R}}_{\Omega} = \frac{1}{L}\m{Y}_{\Omega}\m{Y}_{\Omega}^H$ denote the data covariance and the sample covariance, respectively. Using the linear reparameterization of $\m{R}_{\Omega}$ in \eqref{eq:R_SLA1}, similar SDPs to those above can be formulated. In the case of $L\geq M$ when $\widetilde{\m{R}}_{\Omega}$ is nonsingular, we have the following SDP:
\equ{\begin{split}
\min_{\m{X},\m{u}} \tr\sbra{\m{X}}+ \tr\sbra{\m{\Gamma}_{\Omega}^T\widetilde{\m{R}}_{\Omega}^{-1}\m{\Gamma}_{\Omega}\m{T}\sbra{\m{u}}}, \st \begin{bmatrix}\m{X}& \widetilde{\m{R}}_{\Omega}^{\frac{1}{2}} \\ \widetilde{\m{R}}_{\Omega}^{\frac{1}{2}} & \m{\Gamma}_{\Omega}\m{T}\sbra{\m{u}}\m{\Gamma}_{\Omega}^T \end{bmatrix}\geq\m{0}.\end{split} \label{formu:SDP_Llarge_identical3}}
When $L< M$, the SDP is
\equ{\min_{\m{X},\m{u}} \tr\sbra{\m{X}}+ \tr\sbra{\m{\Gamma}_{\Omega}^T\m{\Gamma}_{\Omega}\m{T}\sbra{\m{u}}}, \st \begin{bmatrix}\m{X}& \widetilde{\m{R}}_{\Omega} \\ \widetilde{\m{R}}_{\Omega} & \m{\Gamma}_{\Omega}\m{T}\sbra{\m{u}}\m{\Gamma}_{\Omega}^T \end{bmatrix}\geq\m{0}, \label{formu:SDP_Lsmall_identical3}}
where $\widetilde{\m{R}}_{\Omega}$ can also be replaced by $\frac{1}{L}\m{Y}_{\Omega}\sbra{\m{Y}_{\Omega}^H\m{Y}_{\Omega}}^{\frac{1}{2}}\in\bC^{M\times L}$ for dimensionality reduction.
Once the SDP is solved, the parameters of interest can be retrieved as in the ULA case.

%

As in the single snapshot case, GLS is guaranteed to produce a sparse solution with at most $N-1$ sources. Besides this, GLS has other attractive properties as detailed below.

Let us assume that the antenna array is a redundancy array or, equivalently, that the full matrix $\m{T}\sbra{\m{u}}$ can be recovered from its submatrix $\m{\Gamma}_{\Omega} \m{T}\sbra{\m{u}} \m{\Gamma}_{\Omega}^T$ (see, e.g., \cite{linebarger1993difference, viberg2014introduction}). Note that all ULAs and many SLAs are redundancy arrays. Then the GLS estimator is statistically consistent in the snapshot number $L$ if the true source number $K\leq N-1$. To see this, let us consider a ULA first. As $L\rightarrow\infty$, $\widetilde{\m{R}}$ approaches the true data covariance matrix that is denoted by $\m{R}^o$ and has the same Toeplitz structure as $\m{R}$. Hence, it follows from \eqref{formu:criterion1} that $\widehat{\m{R}} = \m{R}^o$ and the true parameters can be retrieved from $\widehat{\m{R}}$. In the SLA case, similarly, the covariance matrix estimate $\widehat{\m{R}}_{\Omega}$ converges to the true one as $L\rightarrow\infty$. Then, the assumption of redundancy array can be used to show that all the information in the Toeplitz matrix $\m{T}\sbra{\m{u}}$ for frequency retrieval can be recovered from $\widehat{\m{R}}_{\Omega}$ and hence that the true parameters can be obtained.

Furthermore, under the stronger assumption of Gaussian sources and noise, GLS is an asymptotic ML estimator when $K\leq N-1$ and a redundancy array is employed. This is true because the global solution of the SPICE covariance fitting problem is a large-snapshot realization of the ML estimator ( \cite{stoica1989reparametrization,ottersten1998covariance}) and because GLS globally solves the problem. As a direct consequence of this, GLS has improved resolution as $L$ increases and the resolution limit vanishes as $L$ grows to infinity. Another consequence is that GLS can estimate more sources than the number of antennas. In fact, a redundancy array with aperture $N$ can usually be formed by using $M \approx \sqrt{3(N-1)}$ antennas \cite{linebarger1993difference}. This means that up to about $\frac{1}{3}M^2$ sources can be estimated using GLS with only $M$ antennas.

It is worth noting that generally the above properties of GLS are not shared by its discrete version, viz.~SPICE, due to an ambiguity problem, even when the on-grid assumption of SPICE holds. To see this, let us consider the ULA case as an example and suppose that SPICE can accurately estimate the data covariance matrix $\m{R}=\m{T}\sbra{\m{u}}$, as GLS does. Note that when $\m{R}$ has full rank, which is typically the case in practice, the solution of GLS is provided by the unique decomposition of $\m{R}$ from Corollary \ref{cor:Tinsignalnoise}. But this uniqueness cannot be guaranteed in SPICE according to Remark \ref{rem:uniqueTdecomp} (note that the condition that $r<N$ of Corollary \ref{cor:Tinsignalnoise} might not be satisfied in SPICE).

\subsubsection{SMV-based Atomic Norm Minimization (ANM-SMV)}
Within the SMV super-resolution framework introduced in \cite{candes2013towards}, an ANM approach was proposed in \cite{tan2014direction}, designated here as ANM-SMV. While the paper \cite{tan2014direction} focused on co-prime arrays (\cite{vaidyanathan2011sparse}), which form a special class of SLAs, ANM-SMV can actually be applied to a broader class of SLAs such as redundancy arrays or even general SLAs. To simplify our discussions, we consider without loss of generality a redundancy array, denoted by $\Omega$.

Under the assumption of uncorrelated sources, as for GLS, the data covariance matrix $\m{R}_{\Omega}$ can be expressed as:
\equ{\m{R}_{\Omega} = \m{A}_{\Omega}\sbra{\m{f}}\diag\sbra{\m{p}}\m{A}_{\Omega}^H\sbra{\m{f}} + \sigma\m{I},}
where $p_k>0$, $k=1,\dots,K$ denote the source powers.
According to the discussions in Section \ref{sec:GLS_SMV}, $\m{R}_{\Omega}$ is a submatrix of a Toeplitz covariance matrix:
\equ{\m{R}_{\Omega} = \m{\Gamma}_{\Omega}\m{T}\sbra{\m{u}}\m{\Gamma}_{\Omega}^T + \sigma\m{I}, \label{eq:ROmegainTs}}
where $\m{u} \in\bC^N$ and $\m{T}\sbra{\m{u}} = \m{A}\sbra{\m{f}}\diag\sbra{\m{p}}\m{A}^H\sbra{\m{f}}$. Since the frequencies have been encoded in $\m{T}\sbra{\m{u}}$, ANM-SMV, like GLS, also carries out covariance fitting to estimate $\m{u}$ and thus the frequencies. But the technique used by ANM-SMV is different. To describe this technique, note that $\m{u}$ can be expressed as
\equ{\m{u} = \m{A}^*\sbra{\m{f}} \m{p}. \label{eq:uinfp}}
Let us define $\m{v}\in\bC^{2N-1}$ such that
\equ{v_j = \left\{\begin{array}{ll} u_{N-j+1}, & j=1,\dots,N, \\ u_{j-N+1}^*, & j=N+1,\dots,2N-1. \end{array} \right. \label{eq:vinu}}
Given $\m{u}$ in \eqref{eq:uinfp}, note that $\m{v}$ can be viewed as a snapshot of a virtual $\sbra{2N-1}$-element ULA on which $K$ sources impinge. Based on this observation, ANM-SMV attempts to solve the following ANM problem:
\equ{\min_{\m{v}} \atomn{\m{v}}, \st \twon{\m{v} - \widetilde{\m{v}}} \leq \eta, \label{eq:ANM_SMV}}
where $\widetilde{\m{v}}$ denotes an estimate of $\m{v}$, which will be given later, and $\eta$ is an upper bound on the error of $\widetilde{\m{v}}$. By casting \eqref{eq:ANM_SMV} as SDP, this problem can be solved and the frequencies can then be estimated as those composing the solution $\m{v}$.

The remaining task is to compose the estimate $\widetilde{\m{v}}$ and analyze its error bound $\eta$. To do so, the noise variance $\sigma$ is assumed to be known. Using \eqref{eq:ROmegainTs}, an estimate of $\m{\Gamma}_{\Omega}\m{T}\sbra{\m{u}}\m{\Gamma}_{\Omega}^T$ is formed as $\widetilde{\m{R}}_{\Omega} - \sigma\m{I}$. After that, an estimate of $\m{u}$ is obtained by averaging the corresponding entries of the estimate $\widetilde{\m{R}}_{\Omega} - \sigma\m{I}$. This can be done since $\Omega$ is assumed to be a redundancy array. Finally, $\widetilde{\m{v}}$ is obtained by using \eqref{eq:vinu}. Under the assumption of i.i.d. Gaussian sources and noise and assuming sufficiently many snapshots, an error bound $\eta\propto\sigma$ is provided in \cite{tan2014direction} in the case of co-prime arrays. This bound might be extended to other cases but that is beyond the scope of this article. Based on the above observations and the result in \cite{fernandez2013support}, it can be shown that ANM-SMV can stably estimate the frequencies, provided that they are sufficiently separated, with a probability that increases with the snapshot number $L$.

\subsubsection{Nuclear Norm Minimization followed by MUSIC (NNM-MUSIC)}
Using a low rank matrix recovery technique and a subspace method, another covariance-based method was proposed in \cite{pal2014grid} that is called NNM-MUSIC. Based on \eqref{eq:ROmegainTs}, the paper \cite{pal2014grid} proposed a two-step approach: 1) First estimate the full data covariance matrix $\m{T}\sbra{\m{u}}$ by exploiting its low rank structure, and 2) Then estimate the frequencies from $\m{T}\sbra{\m{u}}$ using MUSIC.

In the first step, the following NNM problem is solved to estimate $\m{T}\sbra{\m{u}}$:
\equ{\min_{\m{R}} \norm{\m{R}}_{\star}, \st \frobn{\m{R} - \m{T}\sbra{\widetilde{\m{u}}}} \leq \eta.
\label{eq:NNM_MUSIC}}
In \eqref{eq:NNM_MUSIC}, $\m{T}\sbra{\widetilde{\m{u}}}$ denotes an estimate of the data covariance matrix, which is obtained by averaging the corresponding entries of the sample covariance matrix $\widetilde{\m{R}}_{\Omega}$, and $\eta$ measures the distance between the true low rank matrix $\m{T}\sbra{\m{u}}$ and the above estimate in the Frobenius norm. Once $\m{R}$ is solved for, MUSIC is adopted to estimate the frequencies from the subspace of $\m{R}$.

By setting $\eta = \sqrt{N}\sigma$ and assuming $L\rightarrow \infty$, it was shown in \cite{pal2014grid} that solving \eqref{eq:NNM_MUSIC} can exactly recover $\m{T}\sbra{\m{u}}$. However, we note that it is not an easy task to choose $\eta$ in practice. Moveover, although the Toeplitz structure embedded in the data covariance matrix $\m{R}_{\Omega}$ is exploited to form the estimate $\widetilde{\m{u}}$, this structure is not utilized in the matrix denoising step. It was argued in \cite{pal2014grid} that the PSDness of $\m{R}$ can be preserved by solving \eqref{eq:NNM_MUSIC} if $\m{T}\sbra{\widetilde{\m{u}}}$ is PSD, which holds true if sufficiently many snapshots are available.

\subsubsection{Comparison of GLS, ANM-SMV and NNM-MUSIC}
In this subsection we compare the three covariance-based methods, namely GLS, ANM-SMV and NNM-MUSIC. While these methods are derived under similar assumptions on sources and noise, we argue that GLS can outperform ANM-SMV and NNM-MUSIC in several aspects. First, GLS is hyperparameter-free and can consistently estimate the noise variance $\sigma$, whereas ANM-SMV and NNM-MUSIC usually require knowledge of this variance since the error bounds $\eta$ in \eqref{eq:ANM_SMV} and \eqref{eq:NNM_MUSIC} are functions of $\sigma$. In fact, even when $\sigma$ is known the choice of $\eta$ is still not easy. Second, ANM-SMV and NNM-MUSIC are usable only with sufficiently many snapshots (which are needed to obtain a reasonable estimate of $\m{u}$ as well as a reasonable error bound $\eta$), while GLS can be used even with a single snapshot. Third, GLS and NNM-MUSIC are statistically consistent but ANM-SMV is not. Note that ANM-SMV still suffers from a resolution limit even if $\widetilde{\m{v}}$ in \eqref{eq:ANM_SMV} is exactly estimated given an infinitely number of snapshots. Fourth, GLS is a large-snapshot realization of the ML estimation while ANM-SMV and NNM-MUSIC are not.

Last but not least, ANM-SMV and NNM-MUSIC cannot exactly recover the frequencies in the absence of noise since the estimate $\widetilde{\m{v}}$ or $\widetilde{\m{u}}$ will suffer from some approximation error with finite snapshots. In contrast to this, GLS can exactly recover the frequencies under a mild frequency separation condition due to its connection to the atomic norm that will be discussed in Section \ref{sec:atomnorm_MMV} (considering the single snapshot case as an example).

\subsection{The Multiple Snapshot Case: Deterministic Methods}
In this subsection we present several deterministic gridless sparse methods for the case of multiple snapshots. The main idea is to utilize the temporal redundancy of the snapshots. Different from the covariance-based methods in the preceding subsection, these deterministic optimization methods are derived without statistical assumptions on the sources (though weak technical assumptions might be needed to provide theoretical guarantees). As in the single snapshot case, we first provide a general framework for such methods. We then discuss the potential advantages of multiple snapshots for DOA/frequency estimation based on an MMV atomic $\ell_0$ norm formulation. After that, the MMV atomic norm method will be presented. Finally, a possible extension of EMaC to the multiple snapshot case is discussed.

\subsubsection{A General Framework}

The data model in \eqref{formu:observation_model3} can be written as:
\equ{\m{Y}_{\Omega}=\m{Z}_{\Omega} + \m{E}_{\Omega}, \quad \m{Z}= \m{A}\sbra{\m{f}}\m{S}, \label{eq:modelMMV}}
where $\m{Z}$ denotes the noiseless multiple snapshot signal that contains the frequencies of interest. In the presence of bounded noise with $\frobn{\m{E}_{\Omega}}\leq \eta$, we solve the constrained optimization problem:
\equ{\min_{\m{Z}} \cM\sbra{\m{Z}}, \st \frobn{\m{Z}_{\Omega} - \m{Y}_{\Omega}}\leq \eta. \label{eq:sparseoptinz_MS}}
We can instead solve the regularized optimization problem given by
\equ{\min_{\m{Z}} \lambda\cM\sbra{\m{Z}} + \frac{1}{2}\frobn{\m{Z}_{\Omega} - \m{Y}_{\Omega}}^2, \label{eq:sparseoptinz2_MS}}
where $\lambda>0$ is a regularization parameter. In the extreme noiseless case, both \eqref{eq:sparseoptinz_MS} and \eqref{eq:sparseoptinz2_MS} degenerate to the following problem:
\equ{\min_{\m{Z}} \cM\sbra{\m{Z}}, \st \m{Z}_{\Omega} = \m{Y}_{\Omega}. \label{eq:sparseoptinz3_MS}}
In \eqref{eq:sparseoptinz_MS}-\eqref{eq:sparseoptinz3_MS}, $\cM\sbra{\m{Z}}$ denotes a sparse metric. By minimizing $\cM\sbra{\m{Z}}$ we attempt to reduce the number of frequencies composing $\m{Z}$.

\subsubsection{Atomic $\ell_0$ Norm}
In this subsection we study the advantage of using multiple snapshots for frequency estimation and extend the result in Section \ref{sec:l0_MMV} from the discrete to the continuous setting. To do so, we extend the atomic $\ell_0$ norm from the single to the multiple snapshot case. Note that the noiseless multiple snapshot signal $\m{Z}$ in \eqref{eq:modelMMV} can be written as:
\equ{\m{Z} = \sum_{k=1}^K \m{a}\sbra{f_k} \m{s}_k = \sum_{k=1}^K c_k \m{a}\sbra{f_k} \m{\phi}_k, }
where $\m{s}_k= \mbra{s_{k1},\dots,s_{kL}}\in\bC^{1\times L}$, $c_k = \twon{\m{s}_k}>0$, and $\m{\phi}_k = c_k^{-1}\m{s}_k\in\bC^{1\times L}$ with $\twon{\m{\phi}_k} = 1$. We therefore define the set of atoms in this case as:
\equ{\cA= \lbra{\m{a}\sbra{f_k,\m{\phi}_k}=\m{a}\sbra{f_k} \m{\phi}_k:\; f_k\in\bT, \m{\phi}_k \in\bC^{1\times L}, \twon{\m{\phi}_k}=1} \label{eq:atomset_MS}}
Note that $\m{Z}$ is a linear combination of $K$ atoms in $\cA$. The atomic $\ell_0$ norm of $\m{Z}$ induced by the new atomic set $\cA$ is given by:
\equ{\begin{split}\norm{\m{Z}}_{\cA,0}=
&\inf_{c_k, f_k, \m{\phi}_k} \lbra{\cK:\; \m{Z} = \sum_{k=1}^{\cK} \m{a}\sbra{f_k,\m{\phi}_k}c_k, f_k\in\bT, \twon{\m{\phi}_k}=1, c_k>0 } \\
=& \inf_{f_k, \m{s}_k} \lbra{\cK:\; \m{Z} = \sum_{k=1}^{\cK} \m{a}\sbra{f_k}\m{s}_k, f_k\in\bT}. \end{split} \label{eq:atom0n_MS}}

Using the atomic $\ell_0$ norm, in the noiseless case, the problem resulting from \eqref{eq:sparseoptinz3_MS} is given by:
\equ{\min_{\m{Z}} \norm{\m{Z}}_{\cA,0}, \st \m{Z}_{\Omega} = \m{Y}_{\Omega}. \label{eq:atom0nm_MS}}
To show the advantage of multiple snapshots, we define the continuous dictionary (see also \eqref{eq:cAtheta})
\equ{\cA_{\Omega}^1 = \lbra{\m{a}_{\Omega}\sbra{f}:\; f\in\bT}}
and let $\text{spark}\sbra{\cA_{\Omega}^1}$ be its spark. We have the following theoretical guarantee for \eqref{eq:atom0nm_MS} that generalizes the result in \cite[Theorem 2.4]{chen2006theoretical}.

\begin{thm}[\cite{yang2016exact}] $\m{Y}=\sum_{j=1}^K c_j\m{a}\sbra{f_j,\m{\phi}_j}$ is the unique solution to \eqref{eq:atom0nm_MS} if
\equ{K< \frac{\spark\sbra{\cA_{\Omega}^1}-1+\rank \sbra{\m{Y}_{\Omega}}}{2}. \label{Kbound}}
Moreover, the atomic decomposition above is the only one that satisfies $\norm{\m{Y}}_{\cA,0}=K$.
\label{thm:AL0_guanrantee}
\end{thm}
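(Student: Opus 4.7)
The plan is to adapt the MMV uniqueness argument of \cite{chen2006theoretical} to the continuous dictionary $\cA_\Omega^1$. Suppose $\m{Y}_\Omega$ admits a second admissible representation $\m{Y}_\Omega = \sum_{j=1}^{K'} c'_j \m{a}_\Omega\sbra{f'_j}\m{\phi}'_j$ with $K' \leq K$; I need to show it coincides with the given one (which simultaneously yields uniqueness of the minimizer $\m{Y}$ and of its $K$-atom decomposition). Let $\lbra{g_1,\dots,g_m}$ be the pooled distinct frequencies, $m\leq K+K'$, and introduce the $m\times L$ matrices $\m{T}_1, \m{T}_2$ whose $k$-th row is the source signal assigned to $g_k$ in the corresponding decomposition (zero if $g_k$ is absent there). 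Then $\m{A}_\Omega\sbra{\m{g}}\m{T}_1 = \m{A}_\Omega\sbra{\m{g}}\m{T}_2 = \m{Y}_\Omega$, so $\m{T}:=\m{T}_1-\m{T}_2$ satisfies $\m{A}_\Omega\sbra{\m{g}}\m{T}=\m{0}$; proving $\m{T}=\m{0}$ suffices, since then the two decompositions agree row-by-row.

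The first ingredient is to extend the rank/row-support inequality to the continuous setting: if $\m{T}\neq\m{0}$ lies in $\ker\m{A}_\Omega\sbra{\m{g}}$, then $\norm{\m{T}}_{2,0} \geq \spark\sbra{\cA_\Omega^1} + \rank\sbra{\m{T}} - 1$. I would prove this by picking $\rank\sbra{\m{T}}$ linearly independent columns of $\m{T}$ and performing elementary column operations within their span to produce a nonzero vector $\m{v}\in\ker\m{A}_\Omega\sbra{\m{g}}$ whose support has at most $\norm{\m{T}}_{2,0}-\rank\sbra{\m{T}}+1$ entries. Since $\m{v}$ encodes a linear dependence among that many atoms of $\cA_\Omega^1$, the definition of $\spark\sbra{\cA_\Omega^1}$ as the smallest number of linearly dependent atoms forces the claimed inequality.

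The second ingredient is a matching lower bound $\rank\sbra{\m{T}} \geq r - \abs{I_{12}}$, where $r = \rank\sbra{\m{Y}_\Omega}$ and $I_{12}$ is the intersection of the row supports of $\m{T}_1$ and $\m{T}_2$. From $\m{A}_\Omega\sbra{\m{g}}\m{T}_1 = \m{Y}_\Omega$ one gets $\rank\sbra{\m{T}_1}\geq r$; splitting the rows of $\m{T}_1$ across the exclusive part $I_1$ and the shared part $I_{12}$ and using $\rank\sbra{\m{T}_1|_{I_{12}}}\leq\abs{I_{12}}$ yields $\rank\sbra{\m{T}_1|_{I_1}}\geq r-\abs{I_{12}}$. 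Since $\m{T}_2$ vanishes on $I_1$, one has $\m{T}|_{I_1} = \m{T}_1|_{I_1}$, so $\rank\sbra{\m{T}}\geq r-\abs{I_{12}}$. Combining this with $\norm{\m{T}}_{2,0}\leq K+K'-\abs{I_{12}}\leq 2K-\abs{I_{12}}$ and the first-step inequality gives
\[
\spark\sbra{\cA_\Omega^1} + r - 1 - \abs{I_{12}} \;\leq\; \spark\sbra{\cA_\Omega^1} + \rank\sbra{\m{T}} - 1 \;\leq\; \norm{\m{T}}_{2,0} \;\leq\; 2K - \abs{I_{12}}.
\]
The two $\abs{I_{12}}$ terms cancel, yielding $\spark\sbra{\cA_\Omega^1} + r - 1 \leq 2K$, which contradicts the hypothesis. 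Hence $\m{T}=\m{0}$ and both uniqueness statements follow.

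The main obstacle will be the refined rank bound $\rank\sbra{\m{T}}\geq r-\abs{I_{12}}$: a naive attempt to prove $\rank\sbra{\m{T}}\geq r$ fails whenever the two supports overlap, and the correct bound must degrade by exactly the overlap size so that it cancels against the same term in the support count. Once that matched bookkeeping is arranged, the continuous-dictionary extension of the rank/support lemma becomes the only other nontrivial ingredient, and the remainder reduces to the same spark-plus-rank accounting as in the discrete MMV case.
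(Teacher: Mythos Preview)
The paper does not supply its own proof of this theorem; it merely cites \cite{yang2016exact} and remarks that the statement generalizes \cite[Theorem~2.4]{chen2006theoretical}. Your proposal is exactly this generalization: you transplant the Chen--Huo MMV uniqueness argument from a finite dictionary to the continuous atom set $\cA_\Omega^1$, which is precisely the content of the cited result. The two ingredients you isolate are correct. For the first, the column-reduction argument producing a nonzero $\m{v}\in\ker\m{A}_\Omega(\m{g})$ with $\abs{\supp\m{v}}\leq\norm{\m{T}}_{2,0}-\rank(\m{T})+1$ goes through verbatim because $\spark(\cA_\Omega^1)$ is defined intrinsically on the infinite atom set and only finitely many atoms $\m{a}_\Omega(g_k)$ are involved. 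For the second, the row-splitting bound $\rank(\m{T})\geq\rank(\m{T}_1|_{I_1})\geq r-\abs{I_{12}}$ follows from $\rank(\m{T}_1)\leq\rank(\m{T}_1|_{I_1})+\rank(\m{T}_1|_{I_{12}})$ and $\m{T}_2|_{I_1}=\m{0}$, exactly as you outline. The cancellation of the $\abs{I_{12}}$ terms then forces the contradiction with \eqref{Kbound}, and the simultaneous handling of both uniqueness claims (optimizer $\m{Z}$ and decomposition of $\m{Y}$) via the restriction $\m{Z}_\Omega=\m{Y}_\Omega$ is sound. So your plan matches what the paper implicitly relies on, and there is no gap.
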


Note that the condition in \eqref{Kbound} coincides with that in \eqref{eq:cond2} required to guarantee parameter identifiability for DOA estimation. In fact it can be shown that, in the noiseless case, the frequencies/DOAs can be uniquely determined by the atomic $\ell_0$ norm optimization if and only if they can be uniquely identified (see, e.g., \cite[Proposition 2]{yang2016vandermonde}). Furthermore, the above result also holds true for general array geometries and even for general parameter estimation problems, provided that the atomic $\ell_0$ norm is appropriately defined.

By Theorem \ref{thm:AL0_guanrantee} the frequencies can be exactly determined by \eqref{eq:atom0nm_MS} if the number of sources $K$ is sufficiently small with respect to the array geometry $\Omega$ and the observed data $\m{Y}_{\Omega}$. Note that the number of recoverable frequencies can be increased, as compared to the SMV case, if $\rank \sbra{\m{Y}_{\Omega}}>1$, which always happens but in the trivial case when the multiple snapshots in $\m{Y}_{\Omega}$ are identical up to scaling factors.

As in the single snapshot case, computing $\norm{\m{Z}}_{\cA,0}$ can be cast as a rank minimization problem. To see this, let $\m{T}\sbra{\m{u}}$ be a Toeplitz matrix and impose the condition that
\equ{\begin{bmatrix} \m{X} & \m{Z}^H \\ \m{Z} & \m{T}\sbra{\m{u}} \end{bmatrix}\geq \m{0}, \label{eq:psdcond_MS}}
where $\m{X}$ is a free matrix variable. By invoking the Vandermonde decomposition, we see that $\m{T}\sbra{\m{u}}$ admits a $\rank\sbra{\m{T}\sbra{\m{u}}}$-atomic Vandermonde decomposition. Moreover, $\m{Z}$ lies in the range space of $\m{T}\sbra{\m{u}}$ and thus it can be expressed by $\rank\sbra{\m{T}\sbra{\m{u}}}$ atoms. Formally, we have the following result.

\begin{thm}[\cite{yang2014continuous,yang2016exact}] $\norm{\m{Z}}_{\cA,0}$ defined in \eqref{eq:atom0n_MS} equals the optimal value of the following rank minimization problem:
\equ{\min_{\m{X},\m{u}} \rank\sbra{\m{T}\sbra{\m{u}}}, \st \eqref{eq:psdcond_MS}. \label{formu:AL0_rankmin_MS}}
\label{thm:AL0_rankmin_MS}
\end{thm}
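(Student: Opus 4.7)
The plan is to establish the claimed equality by two matching inequalities, following the same strategy used for the single snapshot atomic $\ell_0$ norm in Theorem \ref{thm:AL0_rankmin} and for the SDP characterization of the atomic norm in Theorem \ref{thm:AN_SDP}. The Vandermonde decomposition of Toeplitz covariance matrices (Theorem \ref{thm:VD}) will be the pivotal tool; its role is exactly the same here as it is in the single snapshot case, except that now the ``right factor'' of each atom is a row vector $\m{\phi}_k \in\bC^{1\times L}$ with $\twon{\m{\phi}_k}=1$ rather than a scalar of unit modulus.

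For the direction $\rank\sbra{\m{T}\sbra{\m{u}}}\leq \norm{\m{Z}}_{\cA,0}$ at the optimum, I would begin from an arbitrary atomic decomposition $\m{Z} = \sum_{k=1}^{K} c_k \m{a}\sbra{f_k}\m{\phi}_k$ with $c_k>0$ and $\twon{\m{\phi}_k}=1$, and construct a feasible pair by setting $\m{T}\sbra{\m{u}}=\sum_{k=1}^{K} c_k \m{a}\sbra{f_k}\m{a}^H\sbra{f_k}$ and $\m{X} = \sum_{k=1}^{K} c_k \m{\phi}_k^H\m{\phi}_k$. Then $\m{T}\sbra{\m{u}}$ is PSD and Toeplitz with rank at most $K$, and the block matrix in \eqref{eq:psdcond_MS} factors as $\sum_k c_k \begin{bmatrix} \m{\phi}_k^H \\ \m{a}\sbra{f_k} \end{bmatrix}\begin{bmatrix} \m{\phi}_k^H \\ \m{a}\sbra{f_k} \end{bmatrix}^H \geq \m{0}$. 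Taking the infimum over all such decompositions yields $\min \rank\sbra{\m{T}\sbra{\m{u}}} \leq \norm{\m{Z}}_{\cA,0}$.

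For the reverse direction, take any feasible $\sbra{\m{X},\m{u}}$ with $r=\rank\sbra{\m{T}\sbra{\m{u}}}$. Apply Theorem \ref{thm:VD} (using the constructive proof there in the borderline case $r=N$, where $f_r$ may be chosen freely) to write $\m{T}\sbra{\m{u}} = \sum_{k=1}^{r} p_k \m{a}\sbra{f_k}\m{a}^H\sbra{f_k}$ with $p_k>0$ and distinct $f_k\in\bT$. The PSDness of the block matrix in \eqref{eq:psdcond_MS} forces the range inclusion $\text{range}\sbra{\m{Z}^H}\subseteq\text{range}\sbra{\m{T}\sbra{\m{u}}}$, so $\m{Z}^H = \m{T}\sbra{\m{u}}\m{C}$ for some $\m{C}$. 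Substituting the Vandermonde decomposition gives $\m{Z}=\sum_{k=1}^{r}\m{a}\sbra{f_k}\m{s}_k$ with $\m{s}_k = p_k\m{C}^H \m{a}\sbra{f_k}$ suitably written as a row, which is an $r$-atomic representation of $\m{Z}$ and hence $\norm{\m{Z}}_{\cA,0}\leq r$.

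The main obstacle is the range-inclusion step together with the rank-$N$ edge case. The range inclusion is the standard fact that $\begin{bmatrix}\m{X}&\m{Z}^H\\\m{Z}&\m{T}\end{bmatrix}\geq\m{0}$ implies $\m{Z}^H = \m{T}\m{T}^{\dagger}\m{Z}^H$, proven by a generalized Schur complement argument; this must be invoked carefully because $\m{T}\sbra{\m{u}}$ can be singular, which is in fact the regime of interest. The $r=N$ case is handled exactly as in the proof of Theorem \ref{thm:VD}: one peels off a rank-one atom $p_N\m{a}\sbra{f_N}\m{a}^H\sbra{f_N}$ to reduce to the unique-decomposition case $r\leq N-1$, and this introduces at most one extra atom beyond the rank, which is consistent with the theorem statement. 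Combining the two inequalities yields equality, and the infima are attained whenever $\norm{\m{Z}}_{\cA,0}$ is finite.
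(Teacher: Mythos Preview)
Your two-inequality argument via the Vandermonde decomposition and the range inclusion is exactly the paper's own (sketched) approach, extended verbatim from the single-snapshot Theorem~\ref{thm:AL0_rankmin}. One small correction: the PSD constraint \eqref{eq:psdcond_MS} gives $\text{range}(\m{Z})\subseteq\text{range}(\m{T}(\m{u}))$ (not $\text{range}(\m{Z}^H)$), so that $\m{Z}=\m{T}(\m{u})\m{C}$ with $\m{C}\in\bC^{N\times L}$ and $\m{s}_k=p_k\m{a}^H(f_k)\m{C}\in\bC^{1\times L}$; and no ``extra atom beyond the rank'' arises in the $r=N$ case, since Theorem~\ref{thm:VD} already furnishes an $r$-atomic decomposition for every $r\leq N$.
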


While the rank minimization problem cannot be easily solved, we next discuss its convex relaxation, namely the atomic norm method.

\subsubsection{Atomic Norm} \label{sec:atomnorm_MMV}
To provide a practical approach, we study the atomic norm induced by the atomic set $\cA$ defined in \eqref{eq:atomset_MS}. As in the single snapshot case, we have that
\equ{\begin{split}\atomn{\m{Z}}
&=\inf\lbra{t>0: \m{Z}\in t\text{conv}\sbra{\cA}} \\
&=\inf_{c_k,f_k, \m{\phi}_{k}} \lbra{\sum_{k} c_k:\; \m{Z} = \sum_k \m{a}\sbra{f_k,\m{\phi}_k}c_{k}, f_k\in\bT,\twon{\m{\phi}_k}=1, c_k>0 } \\
&= \inf_{f_k, \m{s}_{k}} \lbra{\sum_{k} \twon{\m{s}_{k}}:\; \m{Z} = \sum_k \m{a}\sbra{f_k}\m{s}_{k}, f_k\in\bT }. \end{split} \label{eq:atomn_MS}}
Note that $\atomn{\m{Z}}$ is in fact a continuous counterpart of the $\ell_{2,1}$ norm.
Moreover, it is shown in the following result that $\atomn{\m{Z}}$ can also be cast as SDP.

\begin{thm}[\cite{yang2016exact,li2016off}] $\norm{\m{Z}}_{\cA}$ defined in \eqref{eq:atomn_MS} equals the optimal value of the following SDP:
\equ{\min_{\m{X},\m{u}} \frac{1}{2\sqrt{N}}\mbra{\tr\sbra{\m{X}} + \tr\sbra{\m{T}\sbra{\m{u}}}}, \st \eqref{eq:psdcond_MS}. \label{formu:AN_SDP_MS}} \label{thm:AN_SDP_MS}
\end{thm}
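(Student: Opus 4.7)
The plan is to establish the equality $\|\m{Z}\|_{\cA} = F$, where $F$ denotes the optimal value of the SDP in \eqref{formu:AN_SDP_MS}, by proving the two inequalities $F \leq \|\m{Z}\|_{\cA}$ and $F \geq \|\m{Z}\|_{\cA}$ separately, in close analogy with the single-snapshot Theorem \ref{thm:AN_SDP}. The new wrinkle in the MMV setting is that $\m{Z}$ is a matrix and the diagonal blocks of the PSD certificate are no longer scalar/trace-matched, so a scaling optimization (AM--GM) must be inserted at two symmetric places to get the precise constant $\frac{1}{2\sqrt{N}}$.

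For the direction $F \leq \|\m{Z}\|_{\cA}$, I would start with any atomic decomposition $\m{Z} = \sum_k c_k \m{a}(f_k)\m{\phi}_k$ with $\twon{\m{\phi}_k}=1$ and construct a feasible SDP point as the rank-$1$ sum
\equ{\begin{bmatrix} \m{X} & \m{Z}^H \\ \m{Z} & \m{T}(\m{u}) \end{bmatrix} = \sum_k \begin{bmatrix} \alpha_k \m{\phi}_k^H \\ \beta_k \m{a}(f_k) \end{bmatrix} \begin{bmatrix} \alpha_k \m{\phi}_k^H \\ \beta_k \m{a}(f_k) \end{bmatrix}^H \succeq \m{0}, \notag}
subject to $\alpha_k\beta_k = c_k$. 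This yields $\tr(\m{X}) = \sum_k \alpha_k^2$ and $\tr(\m{T}(\m{u})) = N\sum_k \beta_k^2$. Minimizing $\alpha_k^2 + N\beta_k^2$ under $\alpha_k\beta_k=c_k$ by AM--GM gives the optimum at $\alpha_k = N^{1/4}c_k^{1/2}$, $\beta_k = N^{-1/4}c_k^{1/2}$, with minimum value $2\sqrt{N}c_k$. Summing and dividing by $2\sqrt{N}$ gives objective $\sum_k c_k$; taking the infimum over atomic decompositions yields $F \leq \|\m{Z}\|_{\cA}$.

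For the reverse direction $F \geq \|\m{Z}\|_{\cA}$, let $(\widehat{\m{X}}, \widehat{\m{u}})$ be optimal. Applying Theorem \ref{thm:VD} to $\m{T}(\widehat{\m{u}}) \succeq \m{0}$ yields a Vandermonde decomposition $\m{T}(\widehat{\m{u}}) = \sum_k \widehat{p}_k \m{a}(\widehat{f}_k)\m{a}^H(\widehat{f}_k)$. The PSD condition \eqref{eq:psdcond_MS} forces $\m{Z}$ to lie in the range of $\m{T}(\widehat{\m{u}})$, so $\m{Z} = \sum_k \m{a}(\widehat{f}_k)\widehat{\m{s}}_k$ for some rows $\widehat{\m{s}}_k$, and the Schur complement gives $\widehat{\m{X}} \succeq \m{Z}^H[\m{T}(\widehat{\m{u}})]^{\dagger}\m{Z}$. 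Using the identity $\m{A}^H[\m{A}\diag(\widehat{\m{p}})\m{A}^H]^{\dagger}\m{A} = \diag(\widehat{\m{p}})^{-1}$ valid when $\m{A}(\widehat{\m{f}})$ has full column rank, the trace becomes $\sum_k \widehat{p}_k^{-1}\twon{\widehat{\m{s}}_k}^2$. Combined with $\tr(\m{T}(\widehat{\m{u}})) = N\sum_k \widehat{p}_k$ and another application of AM--GM on each $k$, we obtain $F \geq \frac{1}{2\sqrt{N}}\sum_k(\widehat{p}_k^{-1}\twon{\widehat{\m{s}}_k}^2 + N\widehat{p}_k) \geq \sum_k \twon{\widehat{\m{s}}_k} \geq \|\m{Z}\|_{\cA}$, closing the argument.

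The main technical obstacle is the second direction, specifically the handling of $\m{Z}^H[\m{T}(\widehat{\m{u}})]^{\dagger}\m{Z}$: one must be careful when the Vandermonde rank equals $N$ (so $\m{T}(\widehat{\m{u}})$ has full rank but $\m{A}(\widehat{\m{f}})$ is no longer of full column rank in the $r\times r$ case needed for the pseudoinverse identity). This can be circumvented either by appealing to the $r<N$ case of Theorem \ref{thm:VD} combined with a limiting/perturbation argument that introduces an atom with vanishing power, or by reducing the rank with a preliminary step analogous to the $r=N$ construction used in the proof of Theorem \ref{thm:VD}. All other steps (the PSD sum-of-outer-products construction, Schur complement, and AM--GM balancing) are routine once the scaling factor $\frac{1}{2\sqrt{N}}$ is anticipated from the diagonal mismatch between $\m{X}$ (of size $L\times L$ but with trace weight $1$) and $\m{T}(\m{u})$ (of size $N\times N$ but whose diagonal entries are all $u_1$).
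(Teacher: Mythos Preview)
Your proposal is correct and follows essentially the same approach as the paper's single-snapshot proof of Theorem~\ref{thm:AN_SDP} (the paper explicitly omits the multi-snapshot proof, remarking only that it is ``very similar''). The two-sided inequality via a rank-one PSD construction and Schur complement plus Vandermonde decomposition is exactly the intended route; your explicit AM--GM balancing with $\alpha_k = N^{1/4}c_k^{1/2}$, $\beta_k = N^{-1/4}c_k^{1/2}$ is the right way to absorb the $\tfrac{1}{2\sqrt{N}}$ normalization that arises because $\tr(\m{T}(\m{u})) = N u_1$ while $\tr(\m{X})$ carries no such factor.

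One small correction: your stated ``technical obstacle'' for $r=N$ is not actually an obstacle. When $r=N$, the Vandermonde decomposition of Theorem~\ref{thm:VD} still produces $N$ \emph{distinct} frequencies, so $\m{A}(\widehat{\m{f}})\in\bC^{N\times N}$ is square Vandermonde with distinct nodes and hence invertible. The identity $\m{A}^H[\m{A}\diag(\widehat{\m{p}})\m{A}^H]^{-1}\m{A} = \diag(\widehat{\m{p}})^{-1}$ therefore holds directly (with an honest inverse, not a pseudoinverse), and the trace computation $\tr(\m{Z}^H\m{T}(\widehat{\m{u}})^{-1}\m{Z}) = \sum_k \widehat{p}_k^{-1}\twon{\widehat{\m{s}}_k}^2$ goes through without any limiting or perturbation argument. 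No workaround is needed.
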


The proof of Theorem \ref{thm:AN_SDP_MS} is omitted since it is very similar to that in the case of a single snapshot. Similarly, the frequencies and the powers are encoded in the Toeplitz matrix $\m{T}\sbra{\m{u}}$ and therefore, once $\m{T}\sbra{\m{u}}$ is obtained, they can be retrieved from its Vandermonde decomposition.

By applying Theorem \ref{thm:AN_SDP_MS}, in the noiseless case, the ANM problem resulting from \eqref{eq:sparseoptinz3_MS} can be cast as the following SDP:
\equ{\min_{\m{X},\m{u},\m{Z}} \tr\sbra{\m{X}} + \tr\sbra{\m{T}\sbra{\m{u}}},\st \eqref{eq:psdcond_MS} \text{ and } \m{Z}_{\Omega}=\m{Y}_{\Omega}.  \label{formu:AN_sdp_MS}}
For \eqref{formu:AN_sdp_MS}, we have theoretical guarantees similar to those in the single snapshot case; in other words, the frequencies can be exactly recovered by solving \eqref{formu:AN_sdp_MS} under appropriate conditions. Formally, we have the following results that generalize those in the single snapshot case.

\begin{thm}[\cite{yang2016exact}] $\m{Y}=\sum_{j=1}^K c_j\m{a}\sbra{f_j,\m{\phi}_j}$ is the unique atomic decomposition satisfying $\norm{\m{Y}}_{\cA}=\sum_{j=1}^K c_j$ if $\Delta_{\cT}\geq \frac{1}{\lfloor(N-1)/4\rfloor}$ and $N\geq257$. \label{thm:completedata_MS}
\end{thm}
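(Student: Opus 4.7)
The plan is to lift the dual-certificate proof of the single-snapshot result (Theorem \ref{thm:completedata}) to the MMV setting. By Lagrangian duality applied to the SDP formulation in \eqref{formu:AN_SDP_MS}, showing that $\m{Y}=\sum_{j=1}^K c_j \m{a}\sbra{f_j,\m{\phi}_j}$ is the unique atomic decomposition attaining $\atomn{\m{Y}}=\sum_j c_j$ reduces to exhibiting a vector-valued trigonometric polynomial $\m{q}(f)=\m{Q}^H\m{a}(f)\in\bC^L$ of degree at most $N-1$ satisfying the interpolation conditions $\m{q}(f_j)=\m{\phi}_j^H$ for $j=1,\dots,K$ together with the strict boundedness $\twon{\m{q}(f)}<1$ for every $f\in\bT\setminus\cT$, where $\cT=\lbra{f_1,\dots,f_K}$. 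Uniqueness then follows because a strict certificate forces any optimal decomposition to be supported on $\cT$, while linear independence of the $K\leq N$ distinct steering vectors $\m{a}(f_j)$ pins down both the magnitudes $c_j$ and the unit-norm directions $\m{\phi}_j$.

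The key technical input is a mild strengthening of what is already proved in Theorem \ref{thm:completedata}. The construction of \cite{candes2013towards}, based on the squared Fej\'er kernel $\bar K$, produces for any choice of unit-modulus $v_1,\dots,v_K$ a scalar polynomial of the form $q(f)=\sum_j \alpha_j \bar K(f-f_j)+\beta_j \bar K'(f-f_j)$ with $q(f_j)=v_j$, $q'(f_j)=0$, and $\abs{q(f)}<1$ for $f\notin\cT$. Under the hypotheses $\Delta_{\cT}\geq 1/\lfloor (N-1)/4\rfloor$ and $N\geq 257$, the $2K\times 2K$ interpolation matrix is invertible, so $(\alpha_j,\beta_j)$ are determined uniquely and by linearity $q(f)=\sum_j v_j H_j(f)$, where $H_j$ is the Lagrange-type interpolant in the same span satisfying $H_j(f_i)=\delta_{ij}$ and $H_j'(f_i)=0$. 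Choosing $v_j=\overline{H_j(f)}/\abs{H_j(f)}$ at any fixed $f$ saturates $\abs{q(f)}=\sum_j \abs{H_j(f)}$, so the single-snapshot bound implies the sharper estimate $\sum_{j=1}^K \abs{H_j(f)}\leq 1$, with strict inequality for $f\in\bT\setminus\cT$.

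With these $H_j$ in hand, I would set
\equ{\m{q}(f):=\sum_{j=1}^K H_j(f)\,\m{\phi}_j^H,}
a vector-valued trigonometric polynomial of degree at most $N-1$ that satisfies $\m{q}(f_j)=\m{\phi}_j^H$ by the Lagrange property. Boundedness follows from the triangle inequality and $\twon{\m{\phi}_j^H}=1$:
\equ{\twon{\m{q}(f)} \leq \sum_{j=1}^K \abs{H_j(f)}\,\twon{\m{\phi}_j^H} = \sum_{j=1}^K \abs{H_j(f)},}
which is strictly less than $1$ for $f\in\bT\setminus\cT$ by the previous paragraph, and equals $1$ at each $f_j$. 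This is exactly the strict dual certificate required, so the proposed atomic decomposition is optimal and unique.

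The main obstacle is conceptual rather than computational: it is the jump from the scalar modulus bound $\abs{q(f)}<1$ to the vector-valued $\ell_2$ bound $\twon{\m{q}(f)}<1$, since each $\m{\phi}_j$ is now an arbitrary unit vector in $\bC^L$ rather than a complex scalar of unit modulus. The resolution, by extracting the sharper inequality $\sum_j \abs{H_j(f)}\leq 1$ from the single-snapshot proof via worst-case phase alignment, makes the triangle inequality sufficient to handle arbitrary unit $\m{\phi}_j$ for free. All remaining ingredients—the separation condition, the size constraint $N\geq 257$, invertibility of the interpolation system, and the kernel decay estimates—are inherited directly from Theorem \ref{thm:completedata}.
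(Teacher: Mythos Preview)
Your approach is correct and is essentially the same as the one in the cited reference \cite{yang2016exact}: the paper itself does not give a proof of Theorem~\ref{thm:completedata_MS} but defers to that source, where the argument is precisely to build a vector-valued dual polynomial $\m{q}(f)=\sum_j H_j(f)\m{\phi}_j^H$ from the Lagrange-type interpolants of the Cand\`es--Fernandez-Granda construction and bound $\twon{\m{q}(f)}$ by $\sum_j\abs{H_j(f)}$. Your observation that the uniform-in-signs scalar bound $\abs{\sum_j v_j H_j(f)}<1$ is exactly the $\ell_1$ bound $\sum_j\abs{H_j(f)}<1$ via worst-case phase alignment is the right bridge, and it is the same one used in \cite{yang2016exact}.
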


\begin{thm}[\cite{yang2016exact}] Suppose we observe the rows of $\m{Y}=\sum_{j=1}^K c_j\m{a}\sbra{f_j,\m{\phi}_j}$
that are indexed by $\Omega$, where $\Omega\subset\mbra{1,\dots,N}$ is of size $M$ and is selected uniformly at random. Assume that $\lbra{\m{\phi}_j}$ are independent random variables on the unit hyper-sphere with $\bE\m{\phi}_j=\m{0}$. If $\Delta_{\cT}\geq \frac{1}{\lfloor(N-1)/4\rfloor}$, then there exists a numerical constant $C$ such that
\equ{M\geq C\max\lbra{\log^2\frac{\sqrt{L}N}{\delta}, K\log\frac{K}{\delta}\log\frac{\sqrt{L}N}{\delta}} \label{formu:AN_bound_MS}}
is sufficient to guarantee that, with probability at least $1-\delta$, $\m{Y}$ is the unique solution to \eqref{formu:AN_sdp_MS} and $\m{Y}=\sum_{j=1}^K c_j\m{a}\sbra{f_j,\m{\phi}_j}$ is the unique atomic decomposition satisfying $\norm{\m{Y}}_{\cA}=\sum_{j=1}^K c_j$. \label{thm:incompletedata_MS}
\end{thm}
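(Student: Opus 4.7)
The plan is to follow the dual certificate strategy pioneered by Cand\`{e}s--Fern\'{a}ndez-Granda and extended by Tang--Bhaskar--Shah--Recht to the single snapshot SLA setting (Theorem \ref{thm:incompletedata}), but now adapted to vector-valued atoms $\m{a}(f,\m{\phi})=\m{a}(f)\m{\phi}$ with $\m{\phi}\in\bC^{1\times L}$. By strong duality of the SDP in \eqref{formu:AN_sdp_MS}, the unique optimality of the claimed atomic decomposition is equivalent to the existence of a matrix-valued trigonometric polynomial (a dual certificate) $\m{Q}(f)=\sum_{n\in\Omega}\m{q}_n e^{-i2\pi(n-1)f}$ with rows indexed by $\lbra{1,\dots,L}$, whose Fourier support lies in $\Omega$ (this encodes the data-consistency constraint $\m{Z}_{\Omega}=\m{Y}_{\Omega}$), such that $\m{Q}(f_j)=\m{\phi}_j$ at each true frequency and $\twon{\m{Q}(f)}<1$ for all $f\notin\cT$. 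Constructing such a $\m{Q}$ and showing it is a valid certificate is the crux of the proof.

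First, I would build the certificate using the Cand\`{e}s--Fern\'{a}ndez-Granda kernel construction in the full (non-subsampled) setting: define a candidate polynomial as a linear combination of a squared Fej\'{e}r-type kernel $K(f)$ and its derivative $K'(f)$, with vector-valued coefficients $\m{\alpha}_j,\m{\beta}_j\in\bC^{1\times L}$ chosen so that $\m{Q}(f_j)=\m{\phi}_j$ and $\m{Q}'(f_j)=\m{0}$ at each $f_j$. The minimum-separation hypothesis $\Delta_{\cT}\geq\frac{1}{\lfloor(N-1)/4\rfloor}$ ensures that the associated interpolation matrix is invertible with well-controlled inverse, as in \cite{candes2013towards}, so that $\twon{\m{Q}(f)}<1$ off the support. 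The $N\geq257$ assumption enters only through this Fej\'{e}r-kernel bookkeeping.

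Second, I would pass from the full certificate to one Fourier-supported on $\Omega$ by the random-sampling argument of Tang et al.: solve a random least-squares problem that projects the ideal kernel-based certificate onto the subspace of polynomials with Fourier support on $\Omega$, and control the perturbation in the sup norm. This is where the sample complexity bound \eqref{formu:AN_bound_MS} arises, by combining operator Bernstein inequalities with the golfing scheme. The additional $\sqrt{L}$ factor inside the logarithm comes from taking a net over the unit $L$-sphere (rather than the unit circle in the SMV case) to union-bound the supremum $\twon{\m{Q}(f)}$ over $f$; since $\twon{\cdot}$ is defined on vectors of dimension $L$, moving from the scalar $\abs{Q(f)}<1$ test to $\twon{\m{Q}(f)}<1$ costs a covering-number factor of order $L^{O(1)}$, whose logarithm contributes the $\log(\sqrt{L}N/\delta)$ term. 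The hypotheses $\bE\m{\phi}_j=\m{0}$ and the independence of the $\m{\phi}_j$'s make the interpolation values zero-mean, which is what allows the concentration step to go through in the vector-valued case (this is the relaxation noted in the footnote of Theorem \ref{thm:incompletedata}).

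The main obstacle will be the vector-valued concentration: the scalar random sums that appear in the SMV golfing scheme become matrix/operator-valued sums acting on $L$-dimensional vectors, so one has to invoke matrix Bernstein in a way that keeps the dependence on $L$ only logarithmic. Once this is handled carefully, uniqueness of the decomposition follows by a standard argument: any competing decomposition $\m{Y}=\sum_k \m{a}(\tilde f_k)\tilde{\m{s}}_k$ with $\sum\twon{\tilde{\m{s}}_k}\leq\sum_j c_j$ would have to be correlated with $\m{Q}$ via $\inp{\m{Q},\m{Y}}=\sum_j c_j$, and the strict inequality $\twon{\m{Q}(f)}<1$ off $\cT$ together with $\m{Q}(f_j)=\m{\phi}_j$ forces $\tilde f_k\in\cT$ and $\tilde{\m{s}}_k$ aligned with $\m{\phi}_j$, yielding the claimed uniqueness.
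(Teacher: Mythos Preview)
The paper does not supply its own proof of this theorem; it is stated with a citation to \cite{yang2016exact}, and the surrounding text offers only interpretive remarks (worst-case nature of the bound, weakness of the zero-mean assumption, comparison with independent column-by-column recovery). Your dual-certificate plan---build a vector-valued trigonometric polynomial $\m{Q}(f)$ interpolating the sign vectors $\m{\phi}_j$ at the $f_j$ with vanishing derivative via the Cand\`es--Fernandez-Granda squared-Fej\'er kernel, then pass to $\Omega$-supported coefficients using the random-kernel construction and matrix Bernstein bounds of Tang et al., exploiting the independence and zero mean of the $\m{\phi}_j$---is exactly the route taken in the cited reference, so the approach is correct.

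One technical point in your sketch is off. You attribute the $\sqrt{L}$ inside the logarithm to an $\epsilon$-net over the unit sphere in $\bC^{1\times L}$, claiming a covering number ``of order $L^{O(1)}$''. But the covering number of that sphere is $\sbra{C/\epsilon}^{O(L)}$, i.e.\ exponential in $L$, so a naive union bound over directions would insert an additive $O(L)$, not $O(\log L)$, inside the logarithm and would badly overshoot \eqref{formu:AN_bound_MS}. In the actual argument the $L$-dependence does not come from a sphere net: one bounds $\twon{\m{Q}(f)}^2$ (or the associated $L\times L$ Gram quantities) directly and applies a matrix/vector Bernstein inequality, where $L$ enters only through the dimensional prefactor in the tail bound, producing merely a $\log\sqrt{L}$ penalty. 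This is the step you will need to rework; the rest of your outline is sound.
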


Note that we have not made any assumptions on the sources in Theorem \ref{thm:completedata_MS} and therefore it applies to all kinds of source signals, including coherent sources. As a result, one cannot expect that the theoretical guarantee improves over the single snapshot case.

Note that the dependence of $M$ on $L$ in the bound \eqref{formu:AN_bound_MS} is for controlling the probability of successful recovery. To make it clear, consider the case when we seek to recover the columns of $\m{Y}$ independently using the single snapshot ANM method. When $M$ satisfies (\ref{formu:AN_bound_MS}) for $L=1$, each column of $\m{Y}$ can be recovered with probability $1-\delta$. It follows that $\m{Y}$ can be exactly recovered with probability at least $1-L\delta$. In contrast to this, if we recover $\m{Y}$ using the multiple snapshot ANM method, then with the same number of measurements the success probability is improved to $1-\sqrt{L}\delta$ (to see this, replace $\delta$ in (\ref{formu:AN_bound_MS}) by $\sqrt{L}\delta$).

Note also that the assumption on $\lbra{\m{\phi}_j}$ in Theorem \ref{thm:incompletedata_MS} is weak in the sense that the sources can be coherent. To see this, suppose that the rows of $\m{S}$ are i.i.d. Gaussian with mean zero and a covariance matrix whose rank is equal to one. Then the sources are identical up to random global phases and hence they are independent but coherent. This explains why the theoretical guarantee given in Theorem \ref{thm:incompletedata_MS} does not improve over the similar result in the single snapshot case. In other words, the results of Theorems \ref{thm:completedata_MS} and \ref{thm:incompletedata_MS} are \emph{worst case} analysis. Although these results do not shed light on the advantage of multiple snapshots, numerical simulations show that the atomic norm approach significantly improves the recovery performance, compared to the case of $L=1$, when the source signals are at general positions (see, e.g., \cite{yang2014continuous,yang2016exact,li2016off}).

In the presence of noise, the ANM problem resulting from \eqref{eq:sparseoptinz2_MS} can be formulated as the following SDP:
\equ{\min_{\m{X},\m{u},\m{Z}} \frac{\lambda}{2\sqrt{N}}\mbra{\tr\sbra{\m{X}}+\tr\sbra{\m{T}\sbra{\m{u}}}} + \frac{1}{2}\frobn{\m{Z}_{\Omega} - \m{Y}_{\Omega}}^2, \st \eqref{eq:psdcond_MS}. \label{eq:sparseoptinz2_MS2}}
It was shown in \cite{li2016off} that in the ULA case the choice of $\lambda \approx \sqrt{M\sbra{L+\log M + \sqrt{2L\log M}}\sigma}$ results in a stable recovery of the signal $\m{Z}$, which generalizes the result in the single snapshot case. In the SLA case, a similar parameter tuning method can be derived by combining the techniques in \cite{li2016off} and \cite{yang2015gridless}.

\subsubsection{Hankel-based Nuclear Norm}
In this subsection we extend the EMaC method from the single to the multiple snapshot case. For each noiseless snapshot $\m{z}(t)$, we can form an $m\times n$ Hankel matrix $\m{H}\sbra{\m{z}(t)}$ as in \eqref{eq:hankel}, where $m+n=N+1$, and then combine them in the following $m\times nL$ matrix:
\equ{\m{H}\sbra{\m{Z}}= \mbra{\m{H}\sbra{\m{z}(1)}, \dots, \m{H}\sbra{\m{z}(L)}}.}
Using the decomposition in \eqref{eq:Hankdec}, we have that
\equ{\begin{split}\m{H}\sbra{\m{Z}} =
&\sum_{k=1}^K \begin{bmatrix} 1 \\ e^{i2\pi f_k} \\ \vdots \\ e^{i2\pi (m-1)f_k} \end{bmatrix}\times \\
&\mbra{ s_k(1), s_k(1)e^{i2\pi f_k}, \dots, s_k(1)e^{i2\pi (n-1)f_k}, \dots,  s_k(L), s_k(L)e^{i2\pi f_k}, \dots, s_k(L)e^{i2\pi (n-1)f_k} }. \end{split} \label{eq:BigHankdec}}
As a result,
\equ{\rank\sbra{\m{H}\sbra{\m{Z}}}\leq K. \label{eq:rankHZleqK}}
It follows that if $K<\min\sbra{m,nL}$, then $\m{H}\sbra{\m{Z}}$ is a low rank matrix. Therefore, we may recover $\m{H}\sbra{\m{Z}}$ by minimizing its nuclear norm, i.e., by letting (see \eqref{eq:Hnnsdp})
\equ{\begin{split}\cM\sbra{\m{Z}}
&= \norm{\m{H}\sbra{\m{Z}}}_{\star}\\
&=\min_{\m{Q}_1,\m{Q}_2}\frac{1}{2}\mbra{\tr\sbra{\m{Q}_1} + \tr\sbra{\m{Q}_2}}, \st \begin{bmatrix} \m{Q}_1 & \m{H}\sbra{\m{Z}}^H \\ \m{H}\sbra{\m{Z}} & \m{Q}_2 \end{bmatrix} \geq \m{0}. \end{split} \label{eq:MisnnBigHz}}
The resulting approach is referred to as M-EMaC.

A challenging problem when applying M-EMaC is the choice of the parameter $m$. Intuitively, we need to ensure that the equality holds in \eqref{eq:rankHZleqK} for the true signal $\m{Z}$ so that the data information can be appropriately encoded and the frequencies can be correctly recovered from $\m{H}\sbra{\m{Z}}$. This is guaranteed for a single snapshot once $\m{H}\sbra{\m{Z}}$ is rank deficient. Unfortunately, a similar argument does not hold in the case of multiple snapshots. In particular, it can be seen from \eqref{eq:BigHankdec} that the rank of $\m{H}\sbra{\m{Z}}$ also depends on the unknown source signals $\m{S}$. As an example, in the extreme case when all the sources are coherent, we have that
\equ{\rank\sbra{\m{H}\sbra{\m{Z}}}\leq \min\sbra{K, m,n},}
which can be much smaller than $nL$. As a result, if we know that $K<\frac{N}{2}$, then we can set $m=\lceil \frac{N}{2} \rceil$. We leave the parameter tuning in the case of $K\geq\frac{N}{2}$ as an open problem.

\subsection{Reweighted Atomic Norm Minimization}
In contrast to the EMaC, the atomic norm methods of Sections \ref{sec:AN_S} and \ref{sec:atomnorm_MMV} can better preserve the signal structure, which is important especially in the noisy case. However, a major disadvantage of the atomic norm is that it can theoretically guarantee successful frequency recovery if they are separated by at least $\frac{2.52}{N}$ \cite{fernandez2016super}. To overcome this ``resolution limit''\footnote{Since the aforementioned frequency separation condition is sufficient but not necessary, this might not be the real resolution limit of the atomic norm.}, we present in this subsection the reweighted atomic-norm minimization (RAM) method that was proposed in \cite{yang2016enhancing}. The key idea of RAM is to use a smooth surrogate for the atomic $\ell_0$ norm, which exploits the sparsity to the greatest extent possible and does not suffer from any resolution limit but is nonconvex and non-smooth, and then optimize the surrogate using a reweighted approach. Interestingly, the resulting reweighted approach is shown to be a reweighted atomic norm with a sound weighting function that gradually enhances sparsity and resolution. While several reweighted approaches have been proposed in the discrete setting (see, e.g., \cite{lobo2007portfolio,candes2008enhancing, wipf2010iterative,stoica2012spice}), RAM appears to be the first for continuous dictionaries. Since RAM can be applied to single or multiple snapshots as well as to ULA or SLA, we present the result in the most general multiple snapshot SLA case, as in the preceding subsection.

\subsubsection{A Smooth Surrogate for $\norm{\m{Z}}_{\cA,0}$}
To derive RAM, we first introduce a smooth surrogate for $\norm{\m{Z}}_{\cA,0}$ defined in \eqref{eq:atom0n_MS}. Note that if the surrogate function is given directly in the continuous frequency domain, then a difficult question is whether and how it can be formulated as a finite-dimensional optimization problem, as $\norm{\m{Z}}_{\cA,0}$ and $\norm{\m{Z}}_{\cA}$. To circumvent this problem, RAM operates instead in the re-parameterized $\m{u}$ domain. In particular, we have shown that $\norm{\m{Z}}_{\cA,0}$ is equivalent to the following rank minimization problem:
\equ{\min_{\m{X},\m{u}} \rank\sbra{\m{T}\sbra{\m{u}}}, \st \begin{bmatrix} \m{X} & \m{Z}^H \\ \m{Z} & \m{T}\sbra{\m{u}} \end{bmatrix}\geq \m{0}. \label{formu:AL0_rankmin0}}
Inspired by the literature on low rank matrix recovery (see, e.g., \cite{david1994algorithms,fazel2003log,mohan2012iterative}), the log-det heuristic is adopted as a smooth surrogate for the matrix rank, resulting in the following sparse metric:
\equ{\cM^{\epsilon}\sbra{\m{Z}} = \min_{\m{X},\m{u}} \log\abs{\m{T}\sbra{\m{u}}+\epsilon\m{I}} + \tr\sbra{\m{X}}, \st \begin{bmatrix} \m{X} & \m{Z}^H \\ \m{Z} & \m{T}\sbra{\m{u}} \end{bmatrix}\geq \m{0}. \label{eq:MepsZ} }
In \eqref{eq:MepsZ}, $\epsilon$ is a tuning parameter that avoids the first term in the objective from being $-\infty$ when $\m{T}\sbra{\m{u}}$ is rank-deficient, and $\tr\sbra{\m{X}}$ is included in the objective to prevent the trivial solution $\m{u}=\m{0}$, as in the SDP in \eqref{formu:AN_SDP_MS} for the atomic norm. Intuitively, as $\epsilon\rightarrow0$, the log-det heuristic approaches the rank function and $\cM^{\epsilon}\sbra{\m{Z}}$ would approach $\norm{\m{Z}}_{\cA,0}$. This is indeed true and is formally stated in the following result.

\begin{thm}[\cite{yang2016enhancing}] Let $r=\norm{\m{Z}}_{\cA,0}$ and let $\epsilon\rightarrow0$. Then, we have the following results:
\begin{enumerate}
 \item If $r\leq N-1$, then
     \equ{\cM^{\epsilon}\sbra{\m{Z}}\sim \sbra{r-N}\ln\frac{1}{\epsilon}, \label{formu:equivinf}}
     i.e., the two quantities above are equivalent infinities. Otherwise, $\cM^{\epsilon}\sbra{\m{Z}}$ approaches a constant depending only on $\m{Z}$;
 \item Let $\widehat{\m{u}}_{\epsilon}$ be the (global) solution $\m{u}$ to the optimization problem in \eqref{eq:MepsZ}. Then, the smallest $N-r$ eigenvalues of $\m{T}\sbra{\widehat{\m{u}}_{\epsilon}}$ are either zero or approach zero as fast as $\epsilon$;
 \item For any cluster point of $\widehat{\m{u}}_{\epsilon}$ at $\epsilon=0$, denoted by $\widehat{\m{u}}_0$, there exists an atomic decomposition $\m{Z}=\sum_{k=1}^r\m{a}\sbra{f_k}\m{s}_k$ such that $\m{T}\sbra{\widehat{\m{u}}_0}=\sum_{k=1}^r\twon{\m{s}_k}^2\m{a}\sbra{f_k}\m{a}\sbra{f_k}^H$. \footnote{$\m{u}_0$ is called a cluster point of a vector-valued function $\m{u}(x)$ at $x=x_0$ if there exists a sequence $\lbra{x_n}_{n=1}^{+\infty}$, $\lim_{n\rightarrow+\infty}x_n=x_0$, satisfying $\lim_{n\rightarrow+\infty}\m{u}(x_n)= \m{u}_0$.}
\end{enumerate} \label{thm:epsilontozero}
\end{thm}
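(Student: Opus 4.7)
The plan is to first eliminate $\m{X}$ from the inner minimization. By the generalized Schur complement, the PSD constraint in \eqref{eq:MepsZ} is equivalent to $\m{T}\sbra{\m{u}}\succeq\m{0}$, $\m{Z}\in\mathrm{range}\sbra{\m{T}\sbra{\m{u}}}$, and $\m{X}\succeq\m{Z}^H\m{T}\sbra{\m{u}}^{\dagger}\m{Z}$, so minimizing over $\m{X}$ reduces the problem to
\[ \cM^{\epsilon}\sbra{\m{Z}} = \min_{\m{u}} \ln\abs{\m{T}\sbra{\m{u}}+\epsilon\m{I}} + \tr\sbra{\m{Z}^H\m{T}\sbra{\m{u}}^{\dagger}\m{Z}}, \]
with the minimum over Toeplitz PSD $\m{T}\sbra{\m{u}}$ whose range contains $\m{Z}$. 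The upper bound in part 1 then follows by plugging in a minimal atomic decomposition: take $\m{Z}=\sum_{k=1}^{r}\m{a}\sbra{f_k}\m{s}_k$ witnessing $\atomzn{\m{Z}}=r$ and set $\m{T}\sbra{\m{u}^{\star}}=\sum_{k=1}^{r}\twon{\m{s}_k}^{2}\m{a}\sbra{f_k}\m{a}^{H}\sbra{f_k}$, which has rank $r$. For $r\leq N-1$, one has $\ln\abs{\m{T}\sbra{\m{u}^{\star}}+\epsilon\m{I}}=\sum_{i=1}^{r}\ln\sbra{\lambda_i+\epsilon}+\sbra{N-r}\ln\epsilon$, while the trace term is $\epsilon$-independent, giving $\cM^{\epsilon}\sbra{\m{Z}}\leq\sbra{r-N}\ln\sbra{1/\epsilon}+O(1)$; for $r=N$ the log-det stays bounded and $\cM^{\epsilon}\sbra{\m{Z}}$ approaches a finite constant.

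For the matching lower bound and the eigenvalue statement of part 2, I would analyze a sequence of minimizers $\widehat{\m{u}}_{\epsilon_n}$ along $\epsilon_n\downarrow 0$. A coercivity argument combining $\tr\sbra{\m{Z}^H\m{T}\sbra{\m{u}}^{\dagger}\m{Z}}\geq\twon{\m{Z}}^{2}/\lambda_{\max}\sbra{\m{T}\sbra{\m{u}}}$ with the upper bound on the objective shows $\widehat{\m{u}}_{\epsilon_n}$ stays bounded. Extract a convergent subsequence with cluster point $\widehat{\m{u}}_0$ satisfying $\m{T}\sbra{\widehat{\m{u}}_0}\succeq\m{0}$ and $\m{Z}\in\mathrm{range}\sbra{\m{T}\sbra{\widehat{\m{u}}_0}}$, and let $s=\rank\sbra{\m{T}\sbra{\widehat{\m{u}}_0}}$. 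Applying Theorem \ref{thm:VD} to $\m{T}\sbra{\widehat{\m{u}}_0}$ gives an $s$-atomic Vandermonde decomposition whose atoms span the range and hence represent $\m{Z}$, so $\atomzn{\m{Z}}\leq s$ and therefore $s\geq r$. Conversely, if $s>r$, only $N-s<N-r$ eigenvalues of $\m{T}\sbra{\widehat{\m{u}}_{\epsilon_n}}$ collapse to zero, so $\ln\abs{\m{T}\sbra{\widehat{\m{u}}_{\epsilon_n}}+\epsilon_n\m{I}}$ scales like $\sbra{s-N}\ln\sbra{1/\epsilon_n}+O(1)$, which is asymptotically larger than the upper bound $\sbra{r-N}\ln\sbra{1/\epsilon_n}+O(1)$, contradicting minimality. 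Hence $s=r$, which establishes the lower bound of part 1; a first-order analysis of the reduced problem around $\widehat{\m{u}}_0$ then pins the $N-r$ vanishing eigenvalues to decay at rate exactly $\epsilon$, giving part 2.

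Part 3 is then immediate: since $\rank\sbra{\m{T}\sbra{\widehat{\m{u}}_0}}=r<N$, Theorem \ref{thm:VD} furnishes a unique decomposition $\m{T}\sbra{\widehat{\m{u}}_0}=\sum_{k=1}^{r}p_k\m{a}\sbra{f_k}\m{a}^{H}\sbra{f_k}$, and expanding $\m{Z}$ in the basis $\lbra{\m{a}\sbra{f_k}}$ produces $\m{Z}=\sum_{k=1}^{r}\m{a}\sbra{f_k}\m{s}_k$. The identification $p_k=\twon{\m{s}_k}^{2}$ follows from the first-order optimality of the reduced problem: after diagonalizing in the $\lbra{\m{a}\sbra{f_k}}$ basis, the $p_k$-dependent part of the limiting objective is, to leading order, $\sum_k\mbra{\ln p_k+p_k^{-1}\twon{\m{s}_k}^{2}}$, whose unique minimizer over $p_k>0$ is $p_k=\twon{\m{s}_k}^{2}$. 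The main obstacle I foresee is the precise rate statement in part 2: a too-fast decay would inflate the pseudo-inverse trace term, while a too-slow decay would leave further log-det reduction on the table, and turning this heuristic into a quantitative bound requires a careful perturbation expansion of the KKT system for the reduced problem around $\widehat{\m{u}}_0$ together with strict positive-definiteness of the Hessian block along the signal subspace.
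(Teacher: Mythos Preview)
The paper does not actually prove this theorem; it merely states it and cites \cite{yang2016enhancing} for the proof. So a direct comparison with ``the paper's own proof'' is not possible here.

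That said, your overall strategy is sound and matches the natural approach: eliminate $\m{X}$ via the Schur complement, obtain the upper bound by plugging in a rank-$r$ Toeplitz matrix built from a minimal atomic decomposition, and get the matching lower bound by a compactness argument on the minimizers combined with the Vandermonde decomposition (Theorem~\ref{thm:VD}) of any cluster point. Your contradiction argument for ruling out $s>r$ is correct once you note that the trace term is nonnegative and hence cannot offset the excess $\sbra{s-r}\ln\sbra{1/\epsilon}$ in the log-det.

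There are two places where the sketch is thin. First, for part~3, knowing $\rank\sbra{\m{T}\sbra{\widehat{\m{u}}_0}}=r$ and $\m{Z}\in\mathrm{range}\sbra{\m{T}\sbra{\widehat{\m{u}}_0}}$ gives you \emph{some} decomposition $\m{Z}=\sum_k\m{a}\sbra{f_k}\m{s}_k$ with the right frequencies, but the identification $p_k=\twon{\m{s}_k}^2$ requires that the cluster point $\widehat{\m{u}}_0$ is itself a minimizer of the limiting reduced problem, not merely a limit of minimizers. You need an epi-convergence or $\Gamma$-convergence step here (or a direct argument that the objective along the subsequence, after subtracting $\sbra{r-N}\ln\sbra{1/\epsilon}$, converges to the limiting objective evaluated at $\widehat{\m{u}}_0$, and that this limiting value must equal the minimum). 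Second, you correctly flag the rate statement in part~2 as the delicate point; your heuristic (too-fast decay blows up the pseudo-inverse, too-slow decay leaves log-det slack) is the right intuition, but making it quantitative does require either a KKT analysis or a direct two-sided bound on the optimal eigenvalues in terms of $\epsilon$, and this is genuinely the hardest step.
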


Theorem \ref{thm:epsilontozero} shows that the sparse metric $\cM^{\epsilon}\sbra{\m{Z}}$ approaches $\norm{\m{Z}}_{\cA,0}$ as $\epsilon\rightarrow0$. Moreover, it characterizes the properties of the optimizer $\widehat{\m{u}}_{\epsilon}$, as $\epsilon\rightarrow0$, including the convergence speed of the smallest $N-\norm{\m{Z}}_{\cA,0}$ eigenvalues and the limiting form of $\m{T}\sbra{\widehat{\m{u}}_0}$ via the Vandermonde decomposition. It is worth noting that the term $\ln\frac{1}{\epsilon}$ in \eqref{formu:equivinf}, which becomes unbounded as $\epsilon\rightarrow0$, is not problematic in the optimization problem, since the objective function $\cM^{\epsilon}\sbra{\m{Z}}$ can be re-scaled by $\sbra{\ln\frac{1}{\epsilon}}^{-1}$ for any $\epsilon>0$ without altering the optimizer.

In another interesting extreme case when $\epsilon\rightarrow+\infty$, the following result shows that $\cM^{\epsilon}\sbra{\m{Z}}$ in fact plays the same rule as $\norm{\m{Z}}_{\cA}$.

\begin{thm}[\cite{yang2016enhancing}] Let $\epsilon\rightarrow+\infty$. Then,
\equ{\cM^{\epsilon}\sbra{\m{Z}}-N\ln\epsilon \sim 2\sqrt{N}\atomn{\m{Z}}\epsilon^{-\frac{1}{2}}, \label{formu:epsilontoinf}}
i.e., the two quantities above are equivalent infinitesimals. \label{thm:epsilontoinf}
\end{thm}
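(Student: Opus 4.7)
The plan is to reduce $\cM^{\epsilon}(\m{Z})$ to a minimization in $\m{u}$ alone via Schur complement, rescale $\m{u}$ by $\epsilon^{1/2}$ so that the scaling matches the claimed $\epsilon^{-1/2}$ asymptotic, and then Taylor-expand $\log|\m{I}+\epsilon^{-1/2}\m{T}(\m{v})|$ to extract the linear term $\tr(\m{T}(\m{v}))$. After that, the rescaled problem collapses onto the SDP of Theorem \ref{thm:AN_SDP_MS}, producing the factor $2\sqrt{N}\atomn{\m{Z}}$.

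Concretely, first I would eliminate $\m{X}$ from the definition of $\cM^{\epsilon}(\m{Z})$ in \eqref{eq:MepsZ}. The PSD constraint is equivalent, via the Schur complement, to $\m{T}(\m{u})\succ\m{0}$ together with $\m{X}\geq \m{Z}^H\m{T}(\m{u})^{-1}\m{Z}$ (the rank-deficient case can be handled by a limiting argument or by noting that $\m{T}(\m{u})+\epsilon\m{I}$ is strictly positive definite, so we may perturb). Minimizing $\tr(\m{X})$ gives
\equ{\cM^{\epsilon}(\m{Z}) = \min_{\m{u}:\, \m{T}(\m{u})\succ\m{0}} \Bigl\{\log\bigl|\m{T}(\m{u})+\epsilon\m{I}\bigr| + \tr\bigl(\m{Z}^H\m{T}(\m{u})^{-1}\m{Z}\bigr)\Bigr\}.}
Next I would split $\log\bigl|\m{T}(\m{u})+\epsilon\m{I}\bigr| = N\log\epsilon + \log\bigl|\m{I} + \epsilon^{-1}\m{T}(\m{u})\bigr|$ and substitute $\m{u} = \epsilon^{1/2}\m{v}$, so that $\m{T}(\m{u}) = \epsilon^{1/2}\m{T}(\m{v})$ and the last two terms read
\equ{\log\bigl|\m{I}+\epsilon^{-1/2}\m{T}(\m{v})\bigr| + \epsilon^{-1/2}\tr\bigl(\m{Z}^H\m{T}(\m{v})^{-1}\m{Z}\bigr).}
Applying the expansion $\log|\m{I}+\tau\m{M}| = \tau\tr(\m{M}) - \tfrac{1}{2}\tau^2\tr(\m{M}^2) + O(\tau^3\|\m{M}\|^3)$ with $\tau=\epsilon^{-1/2}$ reveals
\equ{\cM^{\epsilon}(\m{Z}) - N\log\epsilon = \epsilon^{-1/2}\min_{\m{v}}\Bigl\{\tr\bigl(\m{T}(\m{v})\bigr) + \tr\bigl(\m{Z}^H\m{T}(\m{v})^{-1}\m{Z}\bigr) + R_\epsilon(\m{v})\Bigr\},}
where $R_\epsilon(\m{v}) = O\bigl(\epsilon^{-1/2}\|\m{T}(\m{v})\|^2\bigr)$.

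To conclude, I would recognize that, discarding $R_\epsilon$, the bracketed expression is exactly the SDP of Theorem \ref{thm:AN_SDP_MS} after eliminating $\m{X}$, hence its minimum equals $2\sqrt{N}\atomn{\m{Z}}$. It remains to show that the remainder $R_\epsilon$ does not affect the limit. For an upper bound, plug the $\atomn{\m{Z}}$-optimal $\m{v}^\star$ into the rescaled problem; $\m{v}^\star$ is independent of $\epsilon$, so $R_\epsilon(\m{v}^\star)\to 0$ and the limsup is at most $2\sqrt{N}\atomn{\m{Z}}$. For the matching lower bound, note that the rescaled objective (without $R_\epsilon$) is coercive in $\m{v}$ (the linear trace term grows while $\tr(\m{Z}^H\m{T}(\m{v})^{-1}\m{Z})\geq 0$), which confines the $\epsilon$-optimal $\m{v}_\epsilon$ to a bounded set, on which $R_\epsilon\to 0$ uniformly; thus the liminf is at least $2\sqrt{N}\atomn{\m{Z}}$.

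The main obstacle I anticipate is this last coercivity / uniform-remainder step: one must argue that the minimizers $\m{v}_\epsilon$ stay in a compact region as $\epsilon\to\infty$ so that Taylor's remainder is truly $o(1)$, rather than being inflated by a blow-up of $\|\m{T}(\m{v}_\epsilon)\|$. The cleanest way is probably a direct two-sided sandwich: the upper bound obtained from $\m{v}^\star$ gives $\limsup \epsilon^{1/2}(\cM^{\epsilon}(\m{Z})-N\log\epsilon)\leq 2\sqrt{N}\atomn{\m{Z}}$, and then, using the nonnegativity of $\log|\m{I}+\epsilon^{-1/2}\m{T}(\m{v})|-\epsilon^{-1/2}\tr(\m{T}(\m{v}))+\tfrac{1}{2}\epsilon^{-1}\tr(\m{T}(\m{v})^2)\geq 0$ type inequalities (or the log-concavity of the determinant), one lower-bounds the bracketed expression by the atomic-norm SDP objective minus a term that vanishes as $\epsilon\to\infty$, yielding the matching liminf and hence \eqref{formu:epsilontoinf}.
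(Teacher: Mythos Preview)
The paper does not actually prove this theorem; it merely states the result and attributes it to \cite{yang2016enhancing}, so there is no in-paper proof to compare against. Your proposed route---eliminate $\m{X}$ by Schur complement, rescale $\m{u}=\epsilon^{1/2}\m{v}$, Taylor-expand $\log\abs{\m{I}+\epsilon^{-1/2}\m{T}(\m{v})}$, and recognize the leading bracket as the SDP of Theorem~\ref{thm:AN_SDP_MS}---is the natural one and is essentially correct.

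One point to tighten: your sandwich argument for the lower bound is slightly muddled. The inequality $\log\abs{\m{I}+A}\geq\tr(A)-\tfrac{1}{2}\tr(A^2)$ (valid for $A\succeq\m{0}$) bounds the log-det \emph{from below}, so it yields a lower bound on the objective that still contains the negative term $-\tfrac{1}{2}\epsilon^{-1}\tr(\m{T}(\m{v})^2)$; this cannot be dropped without first controlling $\norm{\m{T}(\m{v}_\epsilon)}$. The clean way is: from the upper bound (plugging $\m{v}^\star$) you already know $\epsilon^{1/2}\bigl(\cM^{\epsilon}(\m{Z})-N\ln\epsilon\bigr)\leq 2\sqrt{N}\atomn{\m{Z}}+o(1)$. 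Now feed this back: at the minimizer $\m{v}_\epsilon$ one has, using $\log\abs{\m{I}+A}\geq 0$ and the trace term alone, that $\epsilon^{-1/2}\tr\bigl(\m{Z}^H\m{T}(\m{v}_\epsilon)^{-1}\m{Z}\bigr)$ is bounded, and using $\log(1+x)\geq \tfrac{x}{1+x}$ or a similar crude bound that $\tr(\m{T}(\m{v}_\epsilon))$ is bounded independently of $\epsilon$. Once $\m{v}_\epsilon$ is confined to a compact set, the remainder $R_\epsilon(\m{v}_\epsilon)=O(\epsilon^{-1/2})$ uniformly, and the liminf follows. So your plan works, but make the boundedness-of-minimizers step explicit rather than invoking coercivity loosely.
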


As a result, the new sparse metric $\cM^{\epsilon}\sbra{\m{Z}}$ bridges the atomic norm and the atomic $\ell_0$ norm. As $\epsilon$ approaches $+\infty$, it approaches the former which is convex and can be globally computed but suffers from a resolution limit. As $\epsilon$ approaches $0$, it approaches the latter that exploits sparsity to the greatest extent possible and has no resolution limit but cannot be directly computed.

\subsubsection{A Locally Convergent Iterative Algorithm}
Inserting \eqref{eq:MepsZ} into \eqref{eq:sparseoptinz_MS}, we obtain the following optimization problem:
\equ{\begin{split}
&\min_{\m{X},\m{u},\m{Z}} \log\abs{\m{T}\sbra{\m{u}}+\epsilon\m{I}} + \tr\sbra{\m{X}}, \\
&\st \begin{bmatrix} \m{X} & \m{Z}^H \\ \m{Z} & \m{T}\sbra{\m{u}} \end{bmatrix}\geq \m{0} \text{ and } \frobn{\m{Z}_{\Omega} - \m{Y}_{\Omega}}\leq \eta. \end{split} \label{eq:MepsMin}}
This problem is nonconvex since the log-det function is nonconvex. In fact, $\log\abs{\m{T}\sbra{\m{u}}+\epsilon\m{I}}$ is a concave function of $\m{u}$ since $\log\abs{\m{R}}$ is a concave function of $\m{R}$ on the positive semidefinite cone \cite{boyd2004convex}. A popular locally convergent approach to the minimization of such a concave $+$ convex function is the majorization-maximization (MM) algorithm (see, e.g., \cite{fazel2003log}). Let $\m{u}_j$ denote the $j$th iterate of the optimization variable $\m{u}$. Then, at the $\sbra{j+1}$th iteration we replace $\ln\abs{\m{T}\sbra{\m{u}}+\epsilon\m{I}}$ by its tangent plane at the current value $\m{u}=\m{u}_j$:
\equ{\begin{split}
&\ln\abs{\m{T}\sbra{\m{u}_j}+\epsilon\m{I}} + \tr\mbra{\sbra{\m{T}\sbra{\m{u}_j}+\epsilon\m{I}}^{-1}\m{T}\sbra{\m{u}-\m{u}_j}}\\
&= \tr\mbra{\sbra{\m{T}\sbra{\m{u}_j}+\epsilon\m{I}}^{-1}\m{T}\sbra{\m{u}}} + c_j, \end{split}}
where $c_j$ is a constant independent of $\m{u}$. As a result, the optimization problem at the $\sbra{j+1}$th iteration becomes
\equ{\begin{split}
&\min_{\m{X},\m{u},\m{Z}} \tr\mbra{\sbra{\m{T}\sbra{\m{u}_j}+\epsilon\m{I}}^{-1}\m{T}\sbra{\m{u}}} + \tr\sbra{\m{X}}, \\
&\st \begin{bmatrix} \m{X} & \m{Z}^H \\ \m{Z} & \m{T}\sbra{\m{u}} \end{bmatrix}\geq \m{0} \text{ and } \frobn{\m{Z}_{\Omega} - \m{Y}_{\Omega}}\leq \eta. \end{split} \label{formu:problem_j}}

Note that the problem in \eqref{formu:problem_j} is an SDP that can be globally solved. Since $\log\abs{\m{T}\sbra{\m{u}}+\epsilon\m{I}}$ is strictly concave in $\m{u}$, at each iteration its value decreases by an amount greater than the decrease of its tangent plane. It follows that by iteratively solving \eqref{formu:problem_j} the objective function in \eqref{eq:MepsMin} monotonically decreases and converges to a local minimum.

\subsubsection{Interpretation as RAM}
We show in this subsection that \eqref{formu:problem_j} is actually a weighted atomic norm minimization problem. To do so, let us define a weighted atomic set as (compare with the original atomic set $\cA$ defined in \eqref{eq:atomset_MS}):
\equ{\cA^w=\lbra{\m{a}^w\sbra{f,\m{\phi}}= w\sbra{f}\m{a}\sbra{f}\m{\phi}:\; f\in\bT, \m{\phi}\in\bC^{1\times L}, \twon{\m{\phi}}=1},}
where $w\sbra{f}\geq 0$ denotes a weighting function. For $\m{Z}\in\bC^{N\times L}$, define its weighted atomic norm as the atomic norm induced by $\cA^w$:
\equ{\begin{split}
\norm{\m{Z}}_{\cA^w}
&=\inf_{c_k,f_k,\m{\phi}_k}\lbra{\sum_k c_k:\; \m{Z} = \sum_k c_k \m{a}^w\sbra{f_k,\m{\phi}_k}, f_k\in\bT, \twon{\m{\phi}}=1, c_k>0}\\
&=\inf_{f_k,\m{s}_k}\lbra{\sum_k \frac{\twon{\m{s}_k}}{w\sbra{f_k}}:\; \m{Z} = \sum_k \m{a}\sbra{f_k}\m{s}_k }. \end{split}}
According to the definition above, $w\sbra{f}$ specifies the importance of the atom at $f$: the frequency $f\in\bT$ is more likely to be selected if $w\sbra{f}$ is larger. The atomic norm is a special case of the weighted atomic norm for a constant weighting function. Similar to the atomic norm, the proposed weighted atomic norm also admits a semidefinite formulation for an appropriate weighting function, which is stated in the following theorem.

\begin{thm}[\cite{yang2016enhancing}] Suppose that $w\sbra{f}= \frac{1}{\sqrt{\m{a}\sbra{f}^H\m{W}\m{a}\sbra{f}}}\geq 0$ with $\m{W}\in\bC^{N\times N}$. Then,
\equ{\begin{split}
\norm{\m{Z}}_{\cA^w}=&\min_{\m{X},\m{u}} \frac{\sqrt{N}}{2}\tr\sbra{\m{W}\m{T}\sbra{\m{u}}} + \frac{1}{2\sqrt{N}}\tr\sbra{\m{X}},\\
&\st \begin{bmatrix} \m{X} & \m{Z}^H \\ \m{Z} & \m{T}\sbra{\m{u}} \end{bmatrix}\geq \m{0}. \end{split} \label{eq:WAN_SDP}} \label{thm:weightAN}
\end{thm}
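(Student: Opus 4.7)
The plan is to mimic the proof of Theorem \ref{thm:AN_SDP_MS} (the SDP characterization of the unweighted atomic norm in the MMV case), inserting the weighting function $w(f)$ at the appropriate places. Let $F$ denote the optimal value of the SDP on the right-hand side of \eqref{eq:WAN_SDP}. I would prove $F=\norm{\m{Z}}_{\cA^w}$ by establishing the two inequalities separately, with the key observation being that the hypothesis $w(f)^{-2} = \m{a}(f)^H\m{W}\m{a}(f)$ makes $\tr\sbra{\m{W}\m{T}\sbra{\m{u}}}$ collapse nicely on any rank-one atom.

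For $F\leq \norm{\m{Z}}_{\cA^w}$: take an arbitrary weighted atomic decomposition $\m{Z}=\sum_k c_k\, w\sbra{f_k}\m{a}\sbra{f_k}\m{\phi}_k$ with $c_k>0$, $\twon{\m{\phi}_k}=1$, and construct a feasible SDP point by the rank-one sum
\equ{\begin{bmatrix} \m{X} & \m{Z}^H \\ \m{Z} & \m{T}\sbra{\m{u}} \end{bmatrix} = \sum_k \begin{bmatrix}\alpha_k \m{\phi}_k^H \\ \beta_k \m{a}\sbra{f_k}\end{bmatrix}\begin{bmatrix}\alpha_k \m{\phi}_k & \beta_k \m{a}^H\sbra{f_k}\end{bmatrix},\notag}
which is automatically PSD. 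The off-diagonal block gives the constraint $\alpha_k\beta_k = c_k w\sbra{f_k}$, while the diagonal blocks yield $\tr\sbra{\m{X}}=\sum_k \alpha_k^2$ and, using the key identity $\m{a}^H\sbra{f_k}\m{W}\m{a}\sbra{f_k}=w\sbra{f_k}^{-2}$, $\tr\sbra{\m{W}\m{T}\sbra{\m{u}}}=\sum_k \beta_k^2/w\sbra{f_k}^2$. Applying AM–GM termwise,
\equ{\frac{\sqrt{N}}{2}\cdot\frac{\beta_k^2}{w\sbra{f_k}^2} + \frac{1}{2\sqrt{N}}\alpha_k^2 \geq \frac{\alpha_k\beta_k}{w\sbra{f_k}} = c_k,\notag}
and summing over $k$ bounds the SDP objective by $\sum_k c_k$. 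Infimizing over all decompositions gives $F\leq \norm{\m{Z}}_{\cA^w}$.

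For the reverse direction, let $\sbra{\widehat{\m{X}},\widehat{\m{u}}}$ be an optimal SDP solution. Since $\m{T}\sbra{\widehat{\m{u}}}\geq\m{0}$, Theorem \ref{thm:VD} yields a Vandermonde decomposition $\m{T}\sbra{\widehat{\m{u}}}=\sum_k p_k \m{a}\sbra{f_k}\m{a}^H\sbra{f_k}$ with $p_k>0$ and distinct $f_k$. The PSDness of the block matrix forces $\m{Z}$ into the range of $\m{T}\sbra{\widehat{\m{u}}}$, so writing the square-root factorization one obtains an atomic decomposition $\m{Z}=\sum_k \m{a}\sbra{f_k}\m{s}_k$, and moreover $\widehat{\m{X}} \geq \m{Z}^H\mbra{\m{T}\sbra{\widehat{\m{u}}}}^\dagger \m{Z}$, which forces $\tr\sbra{\widehat{\m{X}}}\geq \sum_k \twon{\m{s}_k}^2/p_k$. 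Combined with $\tr\sbra{\m{W}\m{T}\sbra{\widehat{\m{u}}}}=\sum_k p_k/w\sbra{f_k}^2$ and applying AM–GM termwise again:
\equ{F \geq \sum_k\mbra{\frac{\sqrt{N}}{2}\frac{p_k}{w\sbra{f_k}^2}+\frac{1}{2\sqrt{N}}\frac{\twon{\m{s}_k}^2}{p_k}} \geq \sum_k \frac{\twon{\m{s}_k}}{w\sbra{f_k}} \geq \norm{\m{Z}}_{\cA^w},\notag}
where the last inequality uses the definition of the weighted atomic norm with $\m{\phi}_k=\m{s}_k/\twon{\m{s}_k}$ and $c_k=\twon{\m{s}_k}/w\sbra{f_k}$.

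The main conceptual step — and the only one that differs from the unweighted version of the argument in the proof of Theorem \ref{thm:AN_SDP_MS} — is the identity $\m{a}^H\sbra{f}\m{W}\m{a}\sbra{f}=w\sbra{f}^{-2}$, which is exactly the hypothesis on $\m{W}$ and is what enables both trace expressions to reduce to weighted sums over the atoms. The only mild technical care needed is to verify that whenever $\m{T}\sbra{\widehat{\m{u}}}$ is rank-deficient the pseudoinverse identity $\widehat{\m{X}} \geq \m{Z}^H\mbra{\m{T}\sbra{\widehat{\m{u}}}}^\dagger \m{Z}$ still applies via the generalized Schur complement and that $\m{Z}$ indeed lies in the range of $\m{T}\sbra{\widehat{\m{u}}}$; these are standard and do not cause difficulty. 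With these two inequalities matching, equality $F=\norm{\m{Z}}_{\cA^w}$ follows.
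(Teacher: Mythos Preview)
Your approach is correct and mirrors the two-inequality argument the paper gives for the unweighted case in Theorem~\ref{thm:AN_SDP} (the paper does not spell out a proof of Theorem~\ref{thm:weightAN} itself, deferring to \cite{yang2016enhancing}, but the argument there follows exactly this template with the weight inserted via the identity $\m{a}(f)^H\m{W}\m{a}(f)=w(f)^{-2}$).

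One small slip to fix: in your forward direction, the AM--GM inequality as written yields a \emph{lower} bound on the SDP objective at the constructed feasible point, which is the wrong direction for concluding $F\leq\sum_k c_k$. What you actually need is to \emph{choose} $\alpha_k,\beta_k$ so that equality holds in AM--GM, e.g.\ $\alpha_k=N^{1/4}\sqrt{c_k}$ and $\beta_k=N^{-1/4}w(f_k)\sqrt{c_k}$; then the objective at that feasible point equals $\sum_k c_k$ exactly, and infimizing over atomic decompositions gives $F\leq\norm{\m{Z}}_{\cA^w}$. (This is how the paper handles the analogous step in the proof of Theorem~\ref{thm:AN_SDP}: it directly specifies the feasible point rather than invoking AM--GM for the upper bound.) Your reverse direction is clean as stated.
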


Let $\m{W}_j=\frac{1}{N}\sbra{\m{T}\sbra{\m{u}_j}+\epsilon\m{I}}^{-1}$ and $w_j\sbra{f} = \frac{1}{\sqrt{\m{a}\sbra{f}^H \m{W}_j \m{a}\sbra{f}}}$. It follows from Theorem \ref{thm:weightAN} that the optimization problem in (\ref{formu:problem_j}) can be exactly written as the following weighted atomic norm minimization problem:
\equ{\begin{split}
\min_{\m{Z}} \norm{\m{Z}}_{\cA^{w_j}}, \st \frobn{\m{Z}_{\Omega} - \m{Y}_{\Omega}}\leq \eta. \end{split} \label{formu:WAN_j}}
As a result, the whole iterative algorithm is referred to as reweighted atomic-norm minimization (RAM). Note that the weighting function is updated based on the latest solution $\m{u}$. If we let $w_0(f)$ be constant or equivalently, $\m{u}_0=\m{0}$, such that no preference of the atoms is specified at the first iteration, then the first iteration coincides with ANM. From the second iteration on, the preference is defined by the weighting function $w_j\sbra{f}$ given above. Note that $w_j^2(f)$ is nothing but the power spectrum of Capon's beamformer (interpreting $\m{T}\sbra{\m{u}_j}$ as the noiseless data covariance and $\epsilon$ as the noise variance); also note that similar weighting functions have also appeared in sparse optimization methods in the discrete setting (see, e.g., \cite{stoica2012spice,stoica2014weighted,wipf2010iterative}). The reweighting strategy makes the frequencies around those produced by the current iteration more preferable at the next iteration and thus enhances sparsity. At the same time, the weighting results in resolving finer details of the frequency spectrum in those areas and therefore enhances resolution. In a practical implementation of RAM, we can start with the standard ANM, which corresponds to the case of $\epsilon\rightarrow+\infty$ (by Theorem \ref{thm:epsilontoinf}), and then gradually decrease $\epsilon$ during the iterations.

\subsection{Connections between ANM and GLS}
We have extended both the atomic norm and the GLS methods from the single to the multiple snapshot case. These two methods were shown in Section \ref{sec:ANM_GLS_SS} to be strongly connected to each other in the single snapshot case, so it is natural to ask whether they are also connected in the multiple snapshot case. We answer this question in this subsection following \cite{yang2016gridless1}. In particular, for a small number of snapshots the GLS optimization problem is shown to be equivalent to ANM as if there were no noise, whereas for a large number of snapshots it is equivalent to a weighted ANM. Similar results can also be proved for their discrete versions, viz.~$\ell_{2,1}$ optimization and SPICE.

\subsubsection{The Case of $L<M$}
We first consider the ULA case where the GLS optimization problem is given by \eqref{formu:SDP_Lsmall_identical2}:
\equ{\min_{\m{X},\m{u}} \tr\sbra{\m{X}}+ \tr\sbra{\m{T}\sbra{\m{u}}}, \st \begin{bmatrix}\m{X}& \widetilde{\m{Y}}^H \\ \widetilde{\m{Y}} & \m{T}\sbra{\m{u}} \end{bmatrix}\geq\m{0}, \label{formu:SDP_Lsmall_identical4}}
where $\widetilde{\m{Y}} = \frac{1}{L}\m{Y}\sbra{\m{Y}^H\m{Y}}^{\frac{1}{2}}$. By comparing \eqref{formu:SDP_Lsmall_identical4} and the SDP formulation of the atomic norm in \eqref{formu:AN_SDP_MS}, it can be seen that GLS actually computes $\atomn{\widetilde{\m{Y}}}$ (up to a scaling factor).

A similar argument also holds true in the SLA case where the GLS optimization problem is given by \eqref{formu:SDP_Lsmall_identical3}:
\equ{\min_{\m{X},\m{u}} \tr\sbra{\m{X}}+ \frac{M}{N}\tr\sbra{\m{T}\sbra{\m{u}}}, \st \begin{bmatrix}\m{X}& \widetilde{\m{Y}_{\Omega}}^H \\ \widetilde{\m{Y}_{\Omega}} & \m{\Gamma}_{\Omega}\m{T}\sbra{\m{u}}\m{\Gamma}_{\Omega}^T \end{bmatrix}\geq\m{0}, \label{formu:SDP_Lsmall_identical5}}
where $\widetilde{\m{R}}_{\Omega}$ in \eqref{formu:SDP_Lsmall_identical3} is replaced here by $\widetilde{\m{Y}_{\Omega}} = \frac{1}{L}\m{Y}_{\Omega}\sbra{\m{Y}_{\Omega}^H\m{Y}_{\Omega}}^{\frac{1}{2}}$ to reduce the dimensionality. Given $\m{T}\sbra{\m{u}}\geq \m{0}$ and applying the identity in \eqref{eq:augment}, we have that
\equ{\begin{split}
&\tr\sbra{\widetilde{\m{Y}_{\Omega}}^H \mbra{\m{\Gamma}_{\Omega}\m{T}\sbra{\m{u}}\m{\Gamma}_{\Omega}^T}^{-1} \widetilde{\m{Y}_{\Omega}}} \\
&= \sum_{t=1}^L \widetilde{\m{Y}_{\Omega}}^H(t) \mbra{\m{\Gamma}_{\Omega}\m{T}\sbra{\m{u}}\m{\Gamma}_{\Omega}^T}^{-1} \widetilde{\m{Y}_{\Omega}}(t) \\
&= \min_{\m{z}(t)} \sum_{t=1}^L \m{z}^H(t) \mbra{\m{T}\sbra{\m{u}}}^{-1} \m{z}(t), \st \m{z}_{\Omega}(t) =\widetilde{\m{Y}_{\Omega}}(t) \\
&= \min_{\m{Z}} \tr\sbra{\m{Z}^H \mbra{\m{T}\sbra{\m{u}}}^{-1} \m{Z}}, \st \m{Z}_{\Omega} =\widetilde{\m{Y}_{\Omega}}. \end{split}}
It follows that
\equ{\begin{split}\eqref{formu:SDP_Lsmall_identical5}
\Leftrightarrow& \min_{\m{u}} \tr\sbra{\widetilde{\m{Y}_{\Omega}}^H \mbra{\m{\Gamma}_{\Omega}\m{T}\sbra{\m{u}}\m{\Gamma}_{\Omega}^T}^{-1} \widetilde{\m{Y}_{\Omega}}} + \frac{M}{N}\tr\sbra{\m{T}\sbra{\m{u}}} \\
\Leftrightarrow& \min_{\m{u},\m{Z}} \tr\sbra{\m{Z}^H \mbra{\m{T}\sbra{\m{u}}}^{-1} \m{Z}} + \frac{M}{N}\tr\sbra{\m{T}\sbra{\m{u}}}, \st \m{Z}_{\Omega} =\widetilde{\m{Y}_{\Omega}} \\
\Leftrightarrow& \min_{\m{X},\m{u},\m{Z}} \frac{M}{N}\tr\sbra{\m{X}}+ \frac{M}{N}\tr\sbra{\m{T}\sbra{\m{u}}},\\
& \st \begin{bmatrix}\m{X}& \sqrt{\frac{N}{M}}\m{Z}^H \\ \sqrt{\frac{N}{M}}\m{Z} & \m{T}\sbra{\m{u}} \end{bmatrix}\geq\m{0} \text{ and } \m{Z}_{\Omega} =\widetilde{\m{Y}_{\Omega}}. \end{split}}
The above SDP is nothing but the following ANM problem (up to a scaling factor):
\equ{\min_{\m{Z}}\atomn{\m{Z}}, \st \m{Z}_{\Omega} =\widetilde{\m{Y}_{\Omega}}. }

We have therefore shown that when $L<M$ the GLS optimization problem is equivalent to certain atomic norm formulations obtained by transforming the observed snapshots. Note that the joint sparsity is preserved in the transformed snapshots in the limiting noiseless case. Therefore, in the absence of noise, by applying the results on the atomic norm, GLS is expected to exactly recover the frequencies under the frequency separation condition. This is true in the ULA case where Theorem \ref{thm:completedata_MS} can be directly applied. However, technically, a similar theoretical guarantee cannot be provided in the SLA case since the assumption on the phases in Theorem \ref{thm:incompletedata_MS} might not hold true for the transformed source signals.

\subsubsection{The Case of $L\geq M$}

In this case and for a ULA the GLS optimization problem is given by (see \eqref{formu:SDP_Llarge_identical}):
\equ{\min_{\m{X},\m{u}} \tr\sbra{\m{X}}+ \tr\sbra{\widetilde{\m{R}}^{-1}\m{T}\sbra{\m{u}}}, \st \begin{bmatrix}\m{X}& \widetilde{\m{R}}^{\frac{1}{2}} \\ \widetilde{\m{R}}^{\frac{1}{2}} & \m{T}\sbra{\m{u}} \end{bmatrix}\geq\m{0}. \label{formu:SDP_Llarge_identical4}}
According to \eqref{eq:WAN_SDP}, this is nothing but computing the weighted atomic norm $\norm{\widetilde{\m{R}}^{\frac{1}{2}}}_{\cA^w}$ (up to a scaling factor), where the weighting function is given by $w(f)=\frac{1}{\sqrt{\m{a}^H\sbra{f} \widetilde{\m{R}}^{-1} \m{a}\sbra{f}}}$. Note that
$w^2(f)=\frac{1}{\m{a}^H\sbra{f} \widetilde{\m{R}}^{-1} \m{a}\sbra{f}}$ is the power spectrum of the Capon's beamformer.

For an SLA, the GLS problem is given by (see \eqref{formu:SDP_Llarge_identical3}):
\equ{\begin{split}
\min_{\m{X},\m{u}} \tr\sbra{\m{X}}+ \tr\sbra{\m{\Gamma}_{\Omega}^T\widetilde{\m{R}}_{\Omega}^{-1}\m{\Gamma}_{\Omega}\m{T}\sbra{\m{u}}}, \st \begin{bmatrix}\m{X}& \widetilde{\m{R}}_{\Omega}^{\frac{1}{2}} \\ \widetilde{\m{R}}_{\Omega}^{\frac{1}{2}} & \m{\Gamma}_{\Omega}\m{T}\sbra{\m{u}}\m{\Gamma}_{\Omega}^T \end{bmatrix}\geq\m{0}.\end{split} \label{formu:SDP_Llarge_identical5}}
By arguments similar to those in the preceding subsection we have that
\equ{\begin{split}\eqref{formu:SDP_Llarge_identical5} \Leftrightarrow
&\min_{\m{X},\m{u},\m{Z}} \tr\sbra{\m{X}}+ \tr\sbra{\m{\Gamma}_{\Omega}^T\widetilde{\m{R}}_{\Omega}^{-1}\m{\Gamma}_{\Omega}\m{T}\sbra{\m{u}}},\\
&\st \begin{bmatrix}\m{X}& \m{Z}_{\Omega}^H \\ \m{Z}_{\Omega} & \m{T}\sbra{\m{u}} \end{bmatrix}\geq\m{0} \text{ and } \m{Z}_{\Omega} = \widetilde{\m{R}}_{\Omega}^{\frac{1}{2}}\\ \Leftrightarrow
&\min_{\m{Z}} \norm{\m{Z}}_{\cA^w}, \st \m{Z}_{\Omega} = \widetilde{\m{R}}_{\Omega}^{\frac{1}{2}}. \end{split} \label{eq:SDP_Llarge_identical6}}
In \eqref{eq:SDP_Llarge_identical6}, the weighting function of the weighted atomic norm is given by (up to a scaling factor)
\equ{w(f)=\frac{1}{\sqrt{\m{a}^H\sbra{f} \m{\Gamma}_{\Omega}^T\widetilde{\m{R}}_{\Omega}^{-1}\m{\Gamma}_{\Omega} \m{a}\sbra{f}}} = \frac{1}{\sqrt{\m{a}_{\Omega}^H\sbra{f} \widetilde{\m{R}}_{\Omega}^{-1} \m{a}_{\Omega}\sbra{f}}}.}
Therefore, here too $w^2(f)$ is the power spectrum of the Capon's beamformer.

We have shown above that for a sufficiently large number of snapshots ($L\geq M$) GLS corresponds to a weighted atomic norm in which the weighting function is given by the square root of the Capon's spectrum. While the standard atomic norm method suffers from a ``resolution limit'' of $\frac{2.52}{N}$, the above analysis shows that this limit can actually be overcome by using the weighted atomic norm method: indeed GLS is a consistent method and a large-snapshot realization of the MLE. However, it is worth noting that the standard atomic norm method can be applied to general source signals and its ``resolution limit'' is given by a worst case analysis, whereas the statistical properties of GLS are obtained under stronger statistical assumptions on the sources and its performance can degrade if these assumptions are not satisfied. The presented connections between GLS and ANM also imply that GLS is generally robust to source correlations, like ANM, though its power estimates can be biased \cite{yang2016gridless1}.

\subsection{Computational Issues and Solutions}
We have presented several gridless sparse methods for DOA/frequency estimation from multiple snapshots. Typically, these methods require computing the solution of an SDP. While several off-the-shelf solvers exist for solving SDP, they generally have high computational complexity. As an example, SDPT3 is an interior-point based method which has the computational complexity of $O\sbra{n_1^2n_2^{2.5}}$, where $n_1$ denotes the number of variables and $n_2\times n_2$ is the dimension of the PSD matrix of the SDP \cite{toh1999sdpt3,vandenberghe1996semidefinite}. To take a look at the computational complexity of the gridless sparse methods, we consider the atomic norm in \eqref{formu:AN_SDP_MS} as an example. Given $\m{Z}$ it can be seen that $n_1=N+L^2$ and $n_2=N+L$. So the computational complexity of computing the atomic norm can be rather large, viz.~$O\sbra{\sbra{N +L^2}^2\sbra{N+L}^{2.5}}$. In this subsection we present strategies for accelerating the computation.

\subsubsection{Dimensionality Reduction}
We show in this subsection that a similar dimensionality reduction technique as introduced in Section \ref{sec:dimred_D} can be applied to the atomic norm and the weighted atomic norm methods in the case when the number of snapshots $L$ is large. The technique was firstly proposed in \cite{yang2016enhancing} for the gridless setting studied here and it was extended to the discrete case in Section \ref{sec:dimred_D} (note that similar techniques were also reported in \cite{haghighatshoar2017massive,steffens2016compact} later on). It is also worth noting that a similar dimensionality reduction is not required by GLS since GLS is covariance-based and all the information in the data snapshots $\m{Y}_{\Omega}$ is encoded in the sample covariance matrix $\widetilde{\m{R}}_{\Omega}=\frac{1}{L}\m{Y}_{\Omega}^H\m{Y}_{\Omega}$ whose dimension does not increase with $L$. Since it has been shown that GLS and the (weighted) atomic norm are strongly connected, we may naturally wonder if the dimensionality of the atomic norm can be similarly reduced. An affirmative answer is provided in the following result.

\begin{thm}[\cite{yang2016enhancing}] Consider the three ANM problems resulting from \eqref{eq:sparseoptinz_MS}, \eqref{eq:sparseoptinz2_MS} and \eqref{eq:sparseoptinz3_MS} which, respectively, are given by:
{\lentwo\equa{
&&\min_{\m{X},\m{u},\m{Z}} \tr\sbra{\m{X}} + \tr\sbra{\m{T}\sbra{\m{u}}}, \notag \\
&&\st \begin{bmatrix} \m{X} & \m{Z}^H \\ \m{Z} & \m{T}\sbra{\m{u}} \end{bmatrix}\geq \m{0} \text{ and } \frobn{\m{Z}_{\Omega} - \m{Y}_{\Omega}}\leq \eta, \label{eq:ANM_MS1} \\
&&\min_{\m{X},\m{u},\m{Z}} \lambda'\tr\sbra{\m{X}} + \lambda'\tr\sbra{\m{T}\sbra{\m{u}}} + \frobn{\m{Z}_{\Omega} - \m{Y}_{\Omega}}^2, \notag \\
&&\st \begin{bmatrix} \m{X} & \m{Z}^H \\ \m{Z} & \m{T}\sbra{\m{u}} \end{bmatrix}\geq \m{0}, \label{eq:ANM_MS2} \\
&&\min_{\m{X},\m{u},\m{Z}} \tr\sbra{\m{X}} + \tr\sbra{\m{T}\sbra{\m{u}}}, \notag\\
&&\st \begin{bmatrix} \m{X} & \m{Z}^H \\ \m{Z} & \m{T}\sbra{\m{u}} \end{bmatrix}\geq \m{0} \text{ and } \m{Z}_{\Omega} = \m{Y}_{\Omega}. \label{eq:ANM_MS3}}
}Let $\widetilde{\m{Y}_{\Omega}}$ be any matrix satisfying $\widetilde{\m{Y}_{\Omega}}\widetilde{\m{Y}_{\Omega}}^H = \m{Y}_{\Omega}\m{Y}_{\Omega}^H$, such as the $M\times M$ matrix $\sbra{\m{Y}_{\Omega}\m{Y}_{\Omega}^H}^{\frac{1}{2}}$. If we replace $\m{Y}_{\Omega}$ by $\widetilde{\m{Y}_{\Omega}}$ in \eqref{eq:ANM_MS1}-\eqref{eq:ANM_MS3} and correspondingly change the dimensions of $\m{Z}$ and $\m{X}$, then the solution $\m{u}$ before and after the replacement is the same. Moreover, if we can find a matrix $\m{Q}$ satisfying $\m{Q}^H\m{Q}=\m{I}$ and $\widetilde{\m{Y}_{\Omega}} =\m{Y}_{\Omega} \m{Q}$ and if $\sbra{\widehat{\m{X}}, \widehat{\m{Z}},\widehat{\m{u}}}$ is the solution after the replacement, then the solution to the original problems is given by $\sbra{\m{Q}\widehat{\m{X}}\m{Q}^H, \widehat{\m{Z}}\m{Q}^H,\widehat{\m{u}}}$.
\label{thm:dimreduce}
\end{thm}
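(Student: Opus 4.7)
The plan is to adapt the dimensionality-reduction argument for the multiple-snapshot $\ell_{2,1}$ problem developed in Subsection \ref{sec:dimred_D} to the SDP formulations \eqref{eq:ANM_MS1}--\eqref{eq:ANM_MS3}. Writing a thin SVD $\m{Y}_{\Omega}=\m{U}\m{\Sigma}\m{V}_1^H$ with $r=\rank\sbra{\m{Y}_{\Omega}}$, I complete $\m{V}_1\in\bC^{L\times r}$ to a unitary matrix $\m{V}=\mbra{\m{V}_1,\m{V}_2}\in\bC^{L\times L}$, so that $\m{Y}_{\Omega}\m{V}=\mbra{\m{U}\m{\Sigma},\m{0}}$. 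The first block $\m{U}\m{\Sigma}$ is a canonical choice of $\widetilde{\m{Y}_{\Omega}}$ and corresponds to $\m{Q}=\m{V}_1$ with $\m{Q}^H\m{Q}=\m{I}_r$; any other admissible $\widetilde{\m{Y}_{\Omega}}$ differs from this one by right multiplication by a unitary, which is itself a symmetry of the SDPs below, so it suffices to treat this canonical choice.

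The central observation is that right multiplication of $\m{Z}$ by $\m{V}$ together with congruence of $\m{X}$ by $\m{V}$ leaves the SDPs invariant. For a feasible triple $\sbra{\m{X},\m{u},\m{Z}}$ of any of \eqref{eq:ANM_MS1}--\eqref{eq:ANM_MS3}, congruence of the LMI in \eqref{eq:psdcond_MS} by $\diag\sbra{\m{V}^H,\m{I}_N}$ yields the equivalent constraint
\equ{\begin{bmatrix}\m{V}^H\m{X}\m{V}& \m{V}^H\m{Z}^H \\ \m{Z}\m{V} & \m{T}\sbra{\m{u}}\end{bmatrix}\geq\m{0}.}
Partitioning $\m{Z}\m{V}=\mbra{\widetilde{\m{Z}},\m{Z}_2}$ conformably with $\m{V}=\mbra{\m{V}_1,\m{V}_2}$ and using unitary invariance of the Frobenius norm,
\equ{\frobn{\m{Z}_{\Omega}-\m{Y}_{\Omega}}^2=\frobn{\m{Z}_{\Omega}\m{V}-\m{Y}_{\Omega}\m{V}}^2=\frobn{\widetilde{\m{Z}}_{\Omega}-\widetilde{\m{Y}_{\Omega}}}^2+\frobn{\sbra{\m{Z}_2}_{\Omega}}^2,}
while $\tr\sbra{\m{V}^H\m{X}\m{V}}=\tr\sbra{\m{X}}$ and $\tr\sbra{\m{T}\sbra{\m{u}}}$ is unchanged.

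From here the two directions follow quickly. For the forward inequality, the leading $\sbra{r+N}\times\sbra{r+N}$ principal submatrix of the rotated LMI---obtained by retaining only the block associated with $\widetilde{\m{Z}}$---is PSD (as a principal submatrix of a PSD matrix) and hence gives a feasible triple $\sbra{\widetilde{\m{X}},\m{u},\widetilde{\m{Z}}}$ for the reduced SDP; its objective is no larger since $\tr\sbra{\widetilde{\m{X}}}\leq\tr\sbra{\m{V}^H\m{X}\m{V}}$ and the reduced residual is bounded by the original, with equality forcing $\m{Z}_2=\m{0}$. Conversely, given a feasible $\sbra{\widehat{\m{X}},\widehat{\m{u}},\widehat{\m{Z}}}$ for the reduced problem, setting $\sbra{\m{X},\m{u},\m{Z}}=\sbra{\m{Q}\widehat{\m{X}}\m{Q}^H,\widehat{\m{u}},\widehat{\m{Z}}\m{Q}^H}$ with $\m{Q}=\m{V}_1$ produces a feasible point for the original problem at which the objective matches the reduced one: the LMI is preserved because $\diag\sbra{\m{Q},\m{I}_N}$ has orthonormal columns, $\tr\sbra{\m{Q}\widehat{\m{X}}\m{Q}^H}=\tr\sbra{\widehat{\m{X}}}$ by the cyclicity of the trace and $\m{Q}^H\m{Q}=\m{I}_r$, and $\m{Z}_{\Omega}=\widehat{\m{Z}}_{\Omega}\m{Q}^H=\widetilde{\m{Y}_{\Omega}}\m{Q}^H=\m{Y}_{\Omega}\m{V}_1\m{V}_1^H=\m{Y}_{\Omega}$ for \eqref{eq:ANM_MS3}, with analogous Frobenius-norm identities for the residuals in \eqref{eq:ANM_MS1}--\eqref{eq:ANM_MS2}. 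Combining both directions yields equality of the optimal objectives and, since the $\m{u}$ variable is common to both SDPs, identical optimal $\m{u}$.

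The step I expect to be slightly delicate is the last identity $\m{Y}_{\Omega}\m{V}_1\m{V}_1^H=\m{Y}_{\Omega}$, which is what makes the reconstruction consistent with the exact-sampling constraint in \eqref{eq:ANM_MS3}. This hinges on the fact that the columns of $\m{V}_1$ form an orthonormal basis for the row space of $\m{Y}_{\Omega}$, so $\m{V}_1\m{V}_1^H$ is the orthogonal projector onto that row space and acts as the identity on the rows of $\m{Y}_{\Omega}$. Everything else reduces to routine bookkeeping with block matrices, unitary invariance of the trace and the Frobenius norm, and the standard fact that principal submatrices of PSD matrices are PSD.
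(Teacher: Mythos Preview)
Your proposal is correct and follows essentially the same unitary-rotation/splitting idea that the paper uses for the analogous discrete $\ell_{2,1}$ result in Subsection~\ref{sec:dimred_D}; note that the paper does not actually prove Theorem~\ref{thm:dimreduce} itself but merely states it with a citation to~\cite{yang2016enhancing}, so your adaptation of the Subsection~\ref{sec:dimred_D} argument to the SDP setting via the congruence $\diag\sbra{\m{V}^H,\m{I}_N}$ is exactly the intended route. The only point worth tightening in a final write-up is the claim that the optimal $\m{u}$ coincide: your two directions show that the maps between feasible sets preserve $\m{u}$ and never increase the objective, and since the optimal values agree this forces the optimal $\m{u}$-sets to be identical---you have all the pieces, just state this inference explicitly.
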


\begin{cor}[\cite{yang2016enhancing}] Theorem \eqref{thm:dimreduce} also holds true if the atomic norm is replaced by the weighted atomic norm.
\end{cor}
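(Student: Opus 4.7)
The plan is to observe that the proof of Theorem \ref{thm:dimreduce} never uses any special feature of the term $\tr\sbra{\m{T}\sbra{\m{u}}}$ in the atomic norm objective beyond the fact that it depends only on $\m{u}$, and that both the data-fidelity term and the trace of $\m{X}$ are invariant (or splittable) under unitary rotations acting on the right of $\m{Z}$ and $\m{X}$. The weighted atomic norm SDP from Theorem \ref{thm:weightAN} shares the identical PSD constraint $\begin{bmatrix} \m{X} & \m{Z}^H \\ \m{Z} & \m{T}\sbra{\m{u}} \end{bmatrix}\geq \m{0}$ and replaces $\tr\sbra{\m{T}\sbra{\m{u}}}$ by $N\tr\sbra{\m{W}\m{T}\sbra{\m{u}}}$, which still depends only on $\m{u}$. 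So the plan is simply to check that each step of the original reduction is compatible with this change.

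First, I would plug $\cM\sbra{\m{Z}}=\atomn{\m{Z}}^w$ (as characterized by Theorem \ref{thm:weightAN}) into the three frameworks \eqref{eq:sparseoptinz_MS}, \eqref{eq:sparseoptinz2_MS} and \eqref{eq:sparseoptinz3_MS}, obtaining three SDPs identical in form to \eqref{eq:ANM_MS1}--\eqref{eq:ANM_MS3} except that the coefficient $\tr\sbra{\m{T}\sbra{\m{u}}}$ in the objective becomes $N\tr\sbra{\m{W}\m{T}\sbra{\m{u}}}$, with $\tr\sbra{\m{X}}$ entering unchanged (up to an overall positive rescaling that is irrelevant to the argument). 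This preserves the property that $\m{Z}$ enters only through the data-fidelity term and the off-diagonal blocks of the PSD constraint.

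Next I would follow the original proof step by step. Given a feasible triple $\sbra{\m{X},\m{Z},\m{u}}$, choose a unitary matrix $\m{V}\in\bC^{L\times L}$ with $\m{Y}_{\Omega}\m{V}=\mbra{\widetilde{\m{Y}_{\Omega}},\m{0}}$ and partition $\m{Z}\m{V}=\mbra{\widetilde{\m{Z}},\m{Z}_2}$ and $\m{V}^H\m{X}\m{V}=\begin{bmatrix}\widetilde{\m{X}} & \m{X}_{12} \\ \m{X}_{12}^H & \m{X}_{22}\end{bmatrix}$. Three facts then yield the result: (i) congruence by $\diag\sbra{\m{V},\m{I}}$ shows the PSD constraint is equivalent to $\begin{bmatrix}\m{V}^H\m{X}\m{V} & (\m{Z}\m{V})^H \\ \m{Z}\m{V} & \m{T}\sbra{\m{u}}\end{bmatrix}\geq \m{0}$; (ii) the Frobenius term splits as $\frobn{\m{Z}_{\Omega}-\m{Y}_{\Omega}}^2=\frobn{\widetilde{\m{Z}}_{\Omega}-\widetilde{\m{Y}_{\Omega}}}^2+\frobn{\sbra{\m{Z}_2}_{\Omega}}^2$; (iii) both $\tr\sbra{\m{X}}=\tr\sbra{\m{V}^H\m{X}\m{V}}=\tr\sbra{\widetilde{\m{X}}}+\tr\sbra{\m{X}_{22}}$ and the weighted term $\tr\sbra{\m{W}\m{T}\sbra{\m{u}}}$ are untouched by the rotation. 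At any optimum one can therefore set $\m{X}_{12}=\m{0}$, $\m{X}_{22}=\m{0}$ and $\m{Z}_2=\m{0}$ without violating feasibility (by a Schur-complement argument) while weakly decreasing the objective. The reduced problem in $\sbra{\widetilde{\m{X}},\widetilde{\m{Z}},\m{u}}$ is precisely the weighted SDP with $\m{Y}_{\Omega}$ replaced by $\widetilde{\m{Y}_{\Omega}}$, and the optimal $\m{u}$ is unchanged. Finally, if $\widetilde{\m{Y}_{\Omega}}=\m{Y}_{\Omega}\m{Q}$ for some $\m{Q}$ with $\m{Q}^H\m{Q}=\m{I}$, the full-dimensional solution is recovered by undoing the rotation: $\widehat{\m{X}}'=\m{Q}\widehat{\m{X}}\m{Q}^H$ and $\widehat{\m{Z}}'=\widehat{\m{Z}}\m{Q}^H$, exactly as in Theorem \ref{thm:dimreduce}.

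There is essentially no hard part: the potential obstacle would be if the weight matrix $\m{W}$ somehow coupled with $\m{Z}$ or $\m{X}$ in the objective, but since $\m{W}$ is fixed and appears only in the term $\tr\sbra{\m{W}\m{T}\sbra{\m{u}}}$, the reduction argument goes through verbatim. The only bookkeeping worth noting is the overall multiplicative factor $\frac{\sqrt{N}}{2}$ versus $\frac{1}{2\sqrt{N}}$ in the weighted SDP, but this is an unconditional positive scaling of the two terms and does not interact with the splitting of the feasible set.
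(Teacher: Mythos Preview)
Your proposal is correct. The paper itself does not supply a proof of this corollary (nor of Theorem~\ref{thm:dimreduce}); both are stated with a citation to \cite{yang2016enhancing}. Your argument---that the weighted SDP of Theorem~\ref{thm:weightAN} differs from the unweighted one only in the $\m{u}$-dependent term $\tr\sbra{\m{W}\m{T}\sbra{\m{u}}}$, which is untouched by the right unitary rotation on $\sbra{\m{X},\m{Z}}$, so the splitting of $\tr\sbra{\m{X}}$, the Frobenius data-fidelity term, and the PSD constraint goes through verbatim---is exactly the natural route and is consistent with the paper's own treatment of the analogous discrete reduction in Subsection~\ref{sec:dimred_D}.
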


The dimensionality reduction technique provided by Theorem \ref{thm:dimreduce} enables us to reduce the number of snapshots from $L$ to $M$ and yet obtain the same solution $\m{u}$, from which the frequencies and the powers can be retrieved using the Vandermonde decomposition. Therefore, it follows from Theorem \ref{thm:dimreduce} that for ANM, like for GLS, the information in $\m{Y}_{\Omega}$ is preserved in the sample covariance matrix $\widetilde{\m{R}}_{\Omega}$. It is interesting to note that the above property even holds true in the presence of coherent sources, while we might expect that DOA estimation from $\widetilde{\m{R}}_{\Omega}$ could fail in such a case (consider MUSIC as an example).

\subsubsection{Alternating Direction Method of Multipliers (ADMM)}
A reasonably fast algorithm for SDPs is based on the ADMM \cite{boyd2011distributed,bhaskar2013atomic,yang2015gridless,yang2016enhancing}, which is a first-order algorithm that guarantees global optimality. To derive the ADMM algorithm, we consider \eqref{eq:ANM_MS1} as an example. Define
\equ{\cS= \lbra{\m{Z}:\; \frobn{\m{Z}_{\Omega} - \m{Y}_{\Omega}}\leq \eta}.}
Then, \eqref{eq:ANM_MS1} can be re-written as:
\equ{\begin{split}
&\min_{\m{X},\m{u},\m{Z}\in\cS, \m{\cQ}\geq\m{0}} \tr\sbra{\m{X}}+\tr\sbra{\m{T}\sbra{\m{u}}}, \\
&\st \m{\cQ} = \begin{bmatrix}\m{X} & \m{Z}^H \\ \m{Z} & \m{T}\sbra{\m{u}}\end{bmatrix}.
\end{split} \label{eq:ANM_MS_ADMM}}
We will derive the algorithm following the routine of ADMM by taking $\sbra{\m{X},\m{u},\m{Z}}$ and $\m{\cQ}$ as the two variables. We introduce $\m{\Lambda}$ as the Lagrangian multiplier and write the augmented Lagrange function for \eqref{eq:ANM_MS_ADMM} as
\equ{\begin{split}
\cL\sbra{\m{X},\m{u},\m{Z}, \m{\cQ}, \m{\Lambda}}=
& \tr\sbra{\m{X}}+\tr\sbra{\m{T}\sbra{\m{u}}} + \tr\mbra{\sbra{ \m{\cQ} - \begin{bmatrix}\m{X} & \m{Z}^H \\ \m{Z} & \m{T}\sbra{\m{u}}\end{bmatrix} }\m{\Lambda}}\\
&+ \frac{\beta}{2}\frobn{\m{\cQ} - \begin{bmatrix}\m{X} & \m{Z}^H \\ \m{Z} & \m{T}\sbra{\m{u}}\end{bmatrix}}^2,
\end{split}}
where $\beta>0$ is a penalty parameter set according to \cite{boyd2011distributed}. The algorithm is implemented by iteratively updating $\sbra{\m{X},\m{u},\m{Z}}$, $\m{\cQ}$ and $\m{\Lambda}$ as:
{\lentwo\equa{\sbra{\m{X},\m{u},\m{Z}}
&\leftarrow& \arg\min_{\m{X},\m{u},\m{Z}\in\cS} \cL\sbra{\m{X},\m{u},\m{Z}, \m{\cQ}, \m{\Lambda}}, \label{eq:ANM_ADMM_pv1}\\ \m{\cQ}
&\leftarrow& \arg\min_{\m{\cQ}\geq \m{0}} \cL\sbra{\m{X},\m{u},\m{Z}, \m{\cQ}, \m{\Lambda}}, \label{eq:ANM_ADMM_pv2}\\ \m{\Lambda}
&\leftarrow& \m{\Lambda} + \beta\sbra{\m{\cQ} - \begin{bmatrix}\m{X} & \m{Z}^H \\ \m{Z} & \m{T}\sbra{\m{u}}\end{bmatrix}}. \label{eq:ANM_ADMM_dv}
}}Note that $\cL$ in \eqref{eq:ANM_ADMM_pv1} is separable and quadratic in $\m{X}$, $\m{u}$ and $\m{Z}$. Consequently, these variables can be separately solved for in closed form (note that $\m{Z}$ can be obtained by firstly solving for $\m{Z}$ without the set constraint and then projecting the result onto $\cS$). In \eqref{eq:ANM_ADMM_pv2}, $\m{\cQ}$ can be similarly obtained by firstly solving for $\m{\cQ}$ without considering the constraint and then projecting the result onto the semidefinite cone, which can be accomplished by forming the eigen-decomposition and setting the negative eigenvalues to zero. The ADMM algorithm has a per-iteration computational complexity of $O(\sbra{N+L}^3)$ due to the eigen-decomposition. In the case of $L>M$, this complexity can be reduced to $O(\sbra{N+M}^3)$ by applying the dimensionality reduction technique presented in the preceding subsection. Although the ADMM may converge slowly to an extremely accurate solution, moderate accuracy is typically sufficient in practical applications \cite{boyd2011distributed}.


%
%
%
%

\section{Future Research Challenges} \label{sec:futurework}
In this section we highlight several research challenges that should be investigated in future studies.
\begin{itemize}
 \item \textbf{Improving speed and accuracy:} This is a permanent goal of the research on DOA estimation. As compared to most conventional approaches, in general, the sparse methods may have improved DOA estimation accuracy, especially in difficult scenarios, e.g., the cases with no prior knowledge on the source number, few snapshots and coherent sources. But they are more computationally expensive, which is especially true for the recent gridless sparse methods that typically need to solve an SDP. So efficient SDP solvers should be studied in future research, especially when the array size is large. Note that the computation of $\ell_1$ optimization has been greatly accelerated during the past decade with the development of compressed sensing.

     Furthermore, since the convex sparse methods may suffer from certain resolution limits, it will be of interest to study methods that can directly work with $\ell_0$ norms to enhance the resolution. In the case of ULA and SLA, the $\ell_0$ optimization corresponds to matrix rank minimization. Therefore, it is possible to apply the recent nonconvex optimization techniques for rank minimization to DOA estimation (see, e.g., \cite{zachariah2012alternating,bhojanapalli2015dropping, jain2013low, davenport2016overview, tu2015low, park2016finding}).  Recent progresses in this direction have been made in \cite{cho2016fast,cai2016fast}.

 \item \textbf{Automatic model order selection:} Different from the conventional subspace-based methods, the sparse optimization methods usually do not require {\em a prior} knowledge of the source number (a.k.a.~the model order). But this does not mean that the sparse optimization methods are able to accurately estimate the source number. Instead, small spurious sources are usually present in the obtained power spectrum. Therefore, it is of great interest to study how the model order can be automatically estimated within or after the use of the sparse methods. Some results in this direction can be found in \cite{yardibi2010source,tan2014direction,yang2015gridless}. In \cite{yang2015gridless} a parameter refining technique is also introduced once the model order is obtained.


 \item \textbf{Gridless sparse methods for arbitrary array geometry:} The grid-based sparse methods can be applied to any array geometry but they suffer from grid mismatch or other problems. In contrast to this, the gridless sparse methods completely bypass the grid selection problem by utilizing the special Hankel/Toeplitz structure of the sampled data or the data covariance matrix in the case of ULAs and SLAs. For a general array geometry, however, such structures do not exist any more and extension of the gridless sparse methods to general arrays should be studied in future. Recent results in this direction can be found in \cite{mahata2016frequency,mahata2017grid}.

 \item \textbf{Gridless sparse parameter estimation and continuous compressed sensing:} Following the line of the previous discussion, it would be of great interest to extend the existing gridless sparse methods for DOA estimation to general parameter estimation problems. Note that a similar data model, as used in DOA estimation, can be formulated for a rather general parameter estimation problem:
     \equ{\m{y} = \sum_{k=1}^K \m{a}\sbra{\theta_k} x_k + \m{e}, \label{eq:model_genparest}}
     where $\m{y}$ is the data vector, $\m{a}\sbra{\theta}$ is a given function of the continuous parameter $\theta$, $x_k$ are weight coefficients and $\m{e}$ denotes noise. Moreover, to guarantee that the parameters are identifiable as well as to simplify the model in \eqref{eq:model_genparest}, it is natural to assume that ``order'' $K$ is small and thus sparsity concept can be introduced as well. But due to the absence of special Hankel/Toeplitz structures, it would be challenging to develop gridless methods for \eqref{eq:model_genparest}. Note that the estimation of $\theta_k$ and $x_k$, $k=1,\dots,K$ from $\m{y}$ based on the data model in \eqref{eq:model_genparest} is also referred to as continuous or infinite-dimensional compressed sensing, which extends compressed sensing from the discrete to the continuous setting \cite{tang2012compressed,yang2014continuous,adcock2015generalized}.
\end{itemize}


\section{Conclusions} \label{sec:conclusion}
In this article, we provided an overview of the sparse DOA estimation techniques. Two key differences between sparse representation and DOA estimation were pointed out: 1) discrete system versus continuous parameters and 2) single versus multiple snapshots. Based on how the first difference is dealt with, the sparse methods were classified and discussed in three categories, namely on-grid, off-grid and gridless. The second difference can be tackled by exploiting the temporal redundancy of the snapshots. We explained that while the on-grid and off-grid sparse methods can be applied to arbitrary array geometries, they may suffer from grid mismatch, weak theoretical guarantees etc. These drawbacks can be eliminated by using the gridless sparse methods which, however, can only be applied to ULAs and SLAs. We also highlighted some challenging problems that should be studied in future research. Note that these sparse methods have diverse applications to many fields and the future work also includes performance comparisons of these methods for each specific application. Depending on data qualities and quantities, one or more of these methods may be favored in one application but not another.


\end{document}